\newtheorem{lemma*}{Lemma A.}
\newtheorem{definition}{Definition}
\newtheorem{proposition}{Proposition}
\theoremstyle{remark}
\newtheorem{remark}{Remark}
\newtheorem{example}{Example}
\newcommand{\argmax}{\mathop{\rm argmax}\limits}
\newcommand{\tra}[1]{#1^{\top}}
\newcommand{\bX}{\bm{X}}
\newcommand{\bY}{\bm{Y}}
\newcommand{\bI}{\bm{I}}
\newcommand{\bV}{\bm{V}}
\newcommand{\bx}{\bm{x}}
\newcommand{\by}{\bm{y}}
\newcommand{\bz}{\bm{z}}
\newcommand{\bw}{\bm{w}}
\newcommand{\bc}{\bm{c}}
\newcommand{\bmu}{\bm{\mu}}
\newcommand{\bone}{\bm{1}}
\newcommand{\bzero}{\bm{0}}
\newsavebox{\circlebox}
\savebox{\circlebox}{\fontencoding{OMS}\selectfont\Large\char13}
\newlength{\circleboxwdht}
  \newcommand{\Slash}[1]{{\ooalign{\hfil#1\hfil\crcr\raise.167ex\hbox{/}}}}
\newcommand*{\rev}[1]{\textcolor{black}{#1}}
\newcommand{\ou}[3]{%
  \mathrel{%
    \vcenter{\offinterlineskip
      \ialign{##\cr$#1$\cr\noalign{\kern-#3}$#2$\cr}%
    }%
  }%
}
\newcommand*{\omu}[3]{\underset{#3}{\overset{#1}{#2}}}
\newcommand*{\T}{^{\top}}
\newcommand*{\iidsim}{\omu{\text{\tiny{iid}}}{\sim}{}}
\newcommand*{\deq}{\omu{\text{\tiny{d}}}{=}{}}
\newcommand*{\IN}{\mathbb{N}}
\newcommand*{\IR}{\mathbb{R}}
\newcommand*{\Par}{\operatorname{Par}}
\newcommand*{\Dir}{\operatorname{Dir}}
\newcommand*{\U}{\operatorname{U}}
\newcommand*{\N}{\operatorname{N}}
\newcommand{\AC}[1]{\text{AC}_{#1}}
\newcommand*{\Leb}[2]{\operatorname{Leb}_{#1}{(#2)}}
\newcommand*{\rd}{\mathrm{d}}
\renewcommand*{\P}{\mathbb{P}}
\newcommand*{\bQ}{\mathbb{Q}}
\newcommand*{\E}{\mathbb{E}}
\newcommand*{\Var}{\operatorname{Var}}
\newcommand*{\Cov}{\operatorname{Cov}}
\newcommand*{\VaR}[2]{\operatorname{VaR}_{#1}(#2)}
\newcommand*{\ES}[2]{\operatorname{ES}_{#1}(#2)}
\newcommand*{\eps}{\varepsilon}
\newcommand{\bt}{\bm{t}}
\newcommand{\bZ}{\bm{Z}}
\newcommand{\bU}{\bm{U}}
\newcommand{\bbm}{\bm{m}}
\newcommand{\blambda}{\bm{\lambda}}
\newcommand{\bK}{\bm{K}}
\newcommand{\sqc}[2]{#1_1,\dots,#1_{#2}}
\newcommand{\supp}{\operatorname{supp}}
\title{Modality for Scenario Analysis and Maximum Likelihood Allocation}
\author{Takaaki Koike\footnote{Corresnponding author: Department of Statistics and Actuarial Science, University of Waterloo, Waterloo, ON, Canada, E-mail: tkoike@uwaterloo.ca} \ and\
Marius Hofert\footnote{Department of Statistics and Actuarial Science, University of Waterloo, Waterloo, ON, Canada, E-mail: marius.hofert@uwaterloo.ca}}
\begin{document}

\maketitle
\date{}

\begin{abstract}
  \rev{We study the variability of a risk from the statistical viewpoint of multimodality of the conditional loss
  distribution given that the aggregate loss equals an
  exogenously provided capital.}  This conditional distribution serves as a building
  block for calculating risk allocations such as the Euler capital allocation of
  Value-at-Risk. A \rev{superlevel set} of this conditional distribution can be interpreted
  as a set of severe and plausible stress scenarios the given capital is
  supposed to cover.  We show that various distributional properties of this
  conditional distribution\rev{, such as modality, dependence and tail behavior,} are inherited from those of the underlying joint loss
  distribution.  Among these properties, we find that modality of the
  conditional distribution is an important feature \rev{in risk assessment} related to the \rev{variety} of
  risky scenarios likely to occur in a stressed situation.  
  \rev{Under unimodality, we introduce} a novel risk allocation method called maximum
  likelihood allocation (MLA), defined as the mode of the conditional
  distribution given the total capital.  
  \rev{Under multimodality, a single vector of allocations can be less sound.
  To overcome this issue, we investigate the so-called multimodalty adjustment to increase the soundness of risk allocations.}
  Properties of the conditional distribution, MLA \rev{and multimodality adjustment} are
  demonstrated in numerical experiments. In particular, we observe that negative
  dependence among losses typically leads to multimodality, and thus \rev{a higher multimodality adjustment can be required}. 
\end{abstract}

\hspace{1mm}\\
\emph{JEL classification:} C02, G32\\

\noindent \emph{Keywords:} Risk allocation, Scenario analysis, \rev{Variability measure}, Conditional distribution, Unimodality, Mode

\section{Introduction}\label{sec:introduction}
\emph{Risk allocation} concerns the quantification of the risk of each unit of a
portfolio. For a $d$-dimensional portfolio of risks or losses represented by an $\IR^d$-valued random vector
$\bX=(\sqc{X}{d})$, $d \in \IN$, the overall loss $S=X_1+\cdots+X_d$ is
covered by a total capital $K \in \IR$, which is typically determined as
$K=\varrho(S)$ for a risk measure $\rho$.  The \emph{Euler principle}, proposed
in \citet{tasche1999risk}, is one of the most well-known rules of risk
allocation.  It is economically justified, for example, in
\citet{tasche1999risk} and \citet{kalkbrener2005axiomatic}, and the derived
allocated capital is also known as the \emph{Aumann-Shapley value}
\citep{aumann2015values} in cooperative game theory; see
\citet{denault2001coherent}.

The Euler principle is applicable when the total capital is determined by a risk measure via $K=\varrho(S)$.
However, as pointed out by \citet{asimit2019efficient}, the total capital in practice may not always coincide with the risk measure itself but includes various adjustments such as stress scenarios and liquidity adjustments.
In such cases, the capital does not possess the original meaning as a risk measure and the formula under the Euler principle is not straightforwardly applicable.
In addition, there are situations when the total capital is given exogenously as a constant; see \citet{laeven2004optimization}.
For the case when the total capital is regarded as a constant, various allocation methods have been proposed in the literature.
One of the main streams found, for example, in \citet{laeven2004optimization} and \citet{dhaene2012optimal}, is to derive an allocation as a minimizer of some loss function over a set of allocations $\mathcal K_d(K)=\{\bx \in \IR^d: x_1+\cdots+x_d=K\}$.
Another method is to find a confidence level for which the corresponding risk
measure coincides with $K$, and then allocate $K$ by regarding it as measured by
a risk measure. For example, if Value-at-Risk (VaR) or Expected Shortfall (ES)
are chosen as risk measures, confidence levels
$p_{\operatorname{VaR}},\ p_{\operatorname{ES}} \in (0,1)$ are first found such
that $K=\VaR{p_{\operatorname{VaR}}}{S}$ or, respectively,
$K=\ES{p_{\operatorname{ES}}}{S}$ hold for a given total capital $K$.  After
performing this procedure, the Euler principle becomes applicable and
the resulting \rev{allocated capital of $K$ to the $j$th risk $X_j$ is $\E[X_j\ | \ \{S=K\}]$ or, respectively, $\E[X_j\ | \ \{S\geq \VaR{p_{\operatorname{ES}}}{S}\}]$; see Section~\ref{subsec:capital:allocation} for details.}

Although these methods provide plausible risk allocations, they sometimes ignore important distributional properties of $\bX$ related to the soundness of risk allocations and to risky scenarios expected to be covered by the allocated capitals.
As we will see in Section~\ref{subsec:motivating:example}, most allocation methods provide the homogeneous allocation $(K/d,\dots,K/d)$ when $\bX$ is exchangeable in the sense that $\smash{\bX\deq (X_{\pi(1)},\dots,X_{\pi(d)})}$ for any permutation $(\pi(1),\dots,\pi(d))$ of $\{1,\dots,d\}$.
This homogeneous allocation can be sound when the conditional distribution of $\bX$ in a stressed situation is unimodal with the mode $(K/d,\dots,K/d)$ since this homogeneous allocation covers the risky scenario most likely to occur in a stressed situation.
On the other hand, the same allocation $(K/d,\dots,K/d)$ is derived when the conditional distribution in a stressed situation is multimodal and $(K/d,\dots,K/d)$ is supposed to cover multiple risky scenarios \emph{on average}.
\rev{In this multimodal case, the homogeneous allocation can be less sound than in the former unimodal case since multiple risky scenarios are hidden in a single vector of allocations.
Therefore, the soundness of risk allocations depends on the distributional properties of the conditional distribution of $\bX$ in a stressed situation. 
In the multimodal case, imposing a multimodality loading to the capital $K$ can be required to take the variability of scenarios into account.}

In this paper, \rev{we study the variability of a risk from the statistical viewpoint of multimodality of} the conditional distribution of $\bX$ given
$\{S=K\}$.  Since $\bX\ | \ \{S=K\}$ takes values in $\mathcal K_d(K)$, this
random vector serves as a building block for deriving risk allocations.  For
example, \rev{
the Euler allocation of $\VaR{p}{S}$, $p \in (0,1)$, arises as the expectation of $\bX\ | \ \{S=K\}$ with 
$K=\VaR{p}{S}$.
In addition,} we show in
Section~\ref{subsec:motivating:example}
that a \rev{superlevel set} of $\bX\ | \ \{S=K\}$ can be regarded as a set of
severe and plausible stress scenarios the given capital $K$ is supposed to
cover. Based on the motivation provided there, we investigate distributional
properties of $\bX\ | \ \{S=K\}$ in
Section~\ref{sec:properties:X:given:constant:sum}.  We show that unimodality, dependence and 
tail behavior of $\bX\ | \ \{S=K\}$ are typically inherited from
those of the underlying unconditional loss $\bX$, respectively.  Moreover, we
demonstrate in Section~\ref{subsec:simulation:study} that negative dependence among $\bX$ typically leads
to multimodality of $\bX\ | \ \{S=K\}$.  These observations can be useful to
detect the hidden risk of multimodality in risk allocation.
The properties of $\bX\ | \ \{S=K\}$ studied in this paper are of potential importance in simulation and statistical inference of $\bX\ | \ \{S=K\}$ using Markov chain Monte Carlo (MCMC) methods for efficiently simulating the distribution of interest; see Remark~\ref{remark:simulation:conditional:distribution:MCMC} and Appendix~\ref{app:simulation:with:MCMC}.

We also propose a novel risk allocation method termed \emph{maximum likelihood
  allocation (MLA)}, which is defined as the mode of
$\bX\ | \ \{S=K\}$ \rev{assuming that it is unimodal}.  Besides the mean (which leads to the Euler allocation of VaR), the mode is also an important summary statistics of
$\bX\ | \ \{S=K\}$. It can be interpreted as the risky scenario most likely to occur in the
stressed situation $\{S=K\}$.  By searching for the global mode of
$\bX\ | \ \{S=K\}$, possibly multiple local modes can be detected. As explained
in
Section~\ref{subsec:motivating:example},
this procedure of detecting multimodality is beneficial for evaluating the
soundness of risk allocations, for discovering hidden multiple scenarios likely to
occur in the stressed situation $\{S=K\}$ and for constructing more flexible
risk allocations by weighting important scenarios. 
In Section~\ref{subsec:maximum:likelihood:allocation} we present a definition and properties of MLA expected to hold for a risk allocation.
\rev{In Section~\ref{subsec:multimodality:adjustment} we introduce the so-called multimodality adjustment to increase the soundness of risk allocations when $\bX\ | \ \{S=K\}$ is multimodal.
The multimodality adjustment can be regarded as a measure of variability of the underlying risk; see \citet{furman2006tail} and \citet{furman2017gini} for studies of variability measures of tail risk.}
MLA \rev{and multimodality-adjusted allocated capitals are} estimated and compared with Euler allocations in numerical experiments in Section~\ref{sec:numerical:experiments}.
Concluding remarks are given in Section~\ref{sec:discussion:conclusion} and all proofs can be found in Appendix~\ref{app:proofs}.

\section{Preliminaries}\label{sec:preliminaries}

\subsection{A brief introduction to capital allocation}\label{subsec:capital:allocation}

On a standard atomless probability space $(\Omega,\mathcal A,\P)$, let $\bX=(X_1,\dots,X_d)$, $d\geq 2$ be a $d$-dimensional random vector with joint distribution function $F_{\bX}$ with margins $F_{X_1},\dots,F_{X_d}$ and a copula $C$.
Furthermore, let $S=X_1+\cdots+X_d$ and denote $F_S$ by its distribution function.
If $F_S$ and $F_{\bX}$ have densities, we denote them by $f_S$ and $f_{\bX}$, respectively, with marginal densities $f_{X_1},\dots,f_{X_d}$ of $f_{\bX}$ and copula density $c$.
The variable $X_j$ is interpreted as loss of the $j$th asset, business line, economic entity and so on, of the portfolio $\bX$ in a fixed period of time.
Similarly, $S$ is regarded as the aggregate risk of the portfolio $\bX$.
Positive values of $\sqc{X}{d}$ and $S$ are understood as losses and negative values are interpreted as profits.

The amount of total capital required to cover the risk of the portfolio $\bX$ is often determined as $\varrho(S)$ where $\varrho$ is a \emph{risk measure}, that is, a map from a set of random variables to a real number.
Examples of risk measures include \emph{Value-at-Risk (VaR)} at confidence level $p \in (0,1)$ defined by
\begin{align*}
  \VaR{p}{X}=\inf\{x \in \IR: F_X(x) \ge p\},
\end{align*}
for a random variable $X$ on $(\Omega,\mathcal A,\P)$ and its distribution function $F_X$, and \emph{Expected Shortfall (ES)} at confidence level $p \in (0,1)$, also known as \emph{Conditional VaR, Tail VaR and Average VaR}, defined by
\begin{align*}
  \ES{p}{X}=\frac{1}{1-p}\int_{p}^{1}\VaR{q}{X}\,\rd q,
\end{align*}
provided that $\E[|X|]<\infty$.

Once the total capital is determined as $K\in\IR$, it is decomposed into $d$ real numbers $\AC{1},\dots,\AC{d}$ such that the \emph{full allocation property}
\begin{align}\label{eq:full:allocation:property}
\AC{1}+\cdots+\AC{d}=K
\end{align}
holds.
The set of all possible allocations is denoted by
\begin{align*}
\mathcal K_d(K): = \{\bx \in \IR^d: x_1+\cdots+x_d = K\}.
\end{align*}
If $K=\varrho(S)$ for a \rev{positive homogeneous} risk measure $\varrho$, the so-called \emph{Euler principle} determines the $j$th allocated capital by
\begin{align*}
\AC{j}^{\operatorname{Euler}}=\left. \frac{
  \partial \varrho(\tra{\blambda} \bX)
}{
  \partial \lambda_j
}\right|_{\blambda=\bone_d},\quad\text{where}\quad
\bone_d=(1,\dots,1)\in \IR^d,
\end{align*}
\rev{provided that the partial derivative exists; see \citet{tasche2001conditional} for more details on the differentiability argument.
The Euler principle} leads to the \emph{VaR contributions} and \emph{ES contributions} given by
\begin{align}\label{eq:VaR:contributions}
\left. \frac{
  \partial \varrho(\tra{\blambda} \bX)
}{
  \partial \lambda_j
}\right|_{\blambda=\bone_d} = \E[X_{j}\ |\ \{ S=\VaR{p}{S}\}]\quad
\text{ when }\quad \varrho = \operatorname{VaR}_{p},
\end{align}
and
\begin{align}\label{eq:ES:contributions}
\left. \frac{
  \partial \varrho(\tra{\blambda} \bX)
}{
  \partial \lambda_j
}\right|_{\blambda=\bone_d}= \E[X_{j}\ |\ \{S\geq \VaR{p}{S}\}],\quad
\text{ when }\quad \varrho = \operatorname{ES}_{p},
\end{align}
respectively.

We consider the case when the capital is an exogenously given constant $K \in \IR$.
Our proposed risk allocation introduced in Section~\ref{sec:maximum:likelihood:allocation} is based on the conditional distribution
\begin{align}\label{eq:conditional:distribution:constant:sum}
F_{\bX|\{S=K\}}(\bx)=\P(\bX \leq \bx\ | \ \{S=K\}),\quad  \bx \in \IR^d.
\end{align}
The conditional distribution~\eqref{eq:conditional:distribution:constant:sum} is degenerate and its first $d'=d-1$ components $\bX'\ |\ \{S=K\}=(X_1,\dots,X_{d'})\ |\ \{S=K\}$ determine the last one via $X_d\ |\ \{S=K\} = K - (X_1+\cdots+ X_{d'})\ |\ \{S=K\}$.
Therefore, it suffices to consider the $d'$-dimensional marginal distribution $F_{\bX'|\{S=K\}}$.
Note that throughout this paper, the $'$-notation is used to denote quantities
  related to this non-degenerate distribution in $d-1$ dimensions and should not be confused with
  matrix transposition for which we will use the $\T$-symbol.
Assuming that $\bX$ and $(\bX',S)$ admit densities, $\bX'\ |\ \{S=K\}$ also has a density and is given by
\begin{equation}\label{eq:conditional:density:X:constant:sum}
f_{\bX'|\{S=K\}}(\bx')=\frac{f_{(\bX',S)}(\bx',K)}{f_{S}(K)}
=\frac{f_{\bX}(\bx',K-\tra{\bone_{d'}}\bx')}{f_{S}(K)},\quad  \bx' \in \IR^{d'},
\end{equation}
where the last equality follows from an affine transformation $(\bX',S)\mapsto \bX$ with unit Jacobian.

\subsection{A motivating example}\label{subsec:motivating:example}

The distribution of $\bX\ |\ \{S=K\}$ is a primary subject in this paper.
In this section, we provide a motivating example for investigating this distribution from the viewpoint of risk alloation.

\begin{figure}[t]
  \centering
  \vspace{-35mm}
  \includegraphics[width=15 cm]{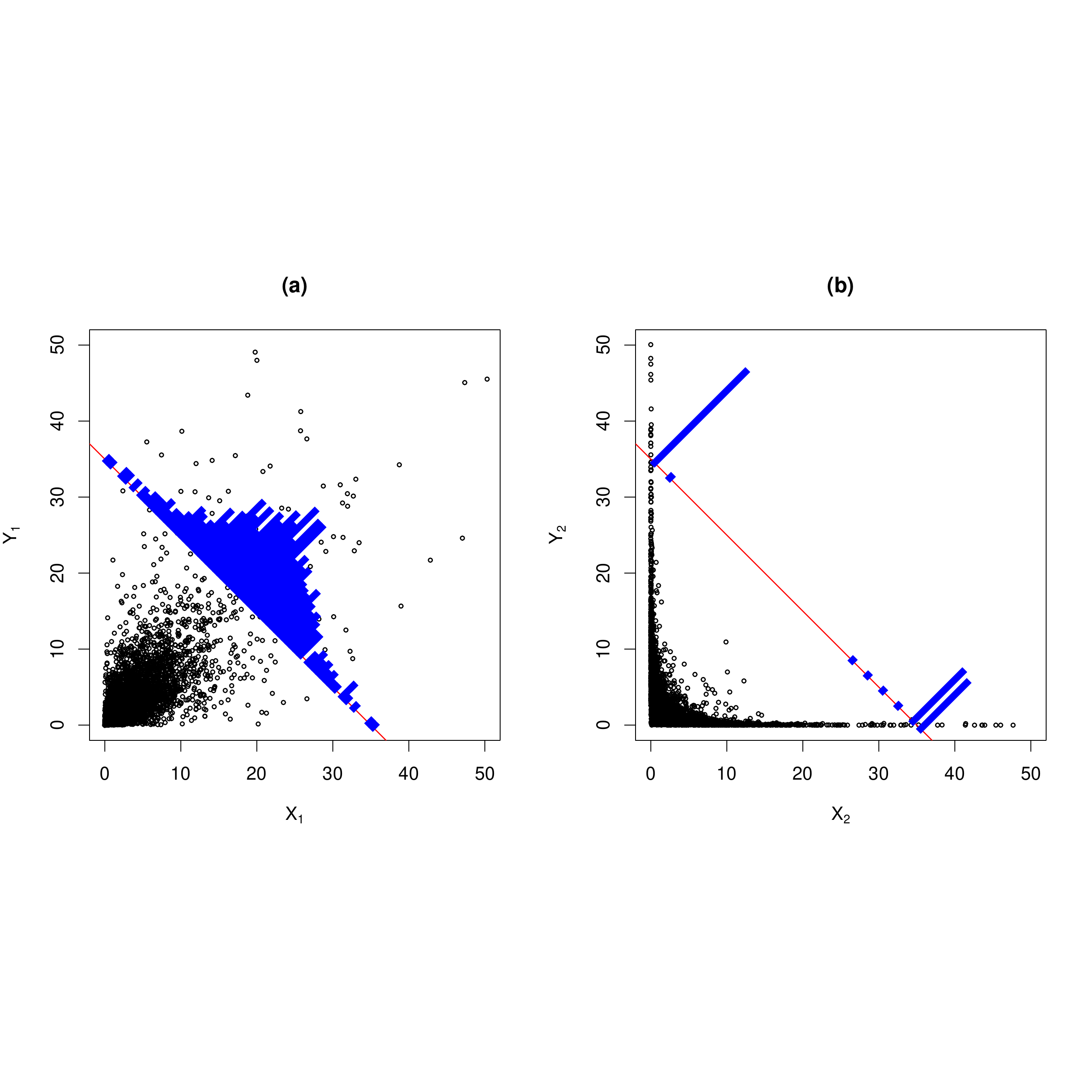}\vspace{-35mm}
  \caption{Scatter plots (black dots) of (a) $(X_1,Y_1)$ and (b) $(X_2,Y_2)$ such that all of $X_1,Y_1,X_2$ and $Y_2$ identically follow Pareto distributions with shape parameter 3 and scale parameter 5, and $(X_1,Y_1)$ and $(X_2,Y_2)$ have Student $t$ copulas $C_{\nu,\rho_1}^t$ and $C_{\nu,\rho_2}^t$, respectively, where $\nu=5$ is the degrees of freedom, and $\rho_1=0.8$ and $\rho_2=-0.8$ are the correlation parameters.
  The red line indicates $x + y =K$ for $K=35$.
  Histograms (blue) of the conditional distributions of (a) $(X_1,Y_1)$ and (b) $(X_2,Y_2)$ on the (approximate) set of allocations $\{(x,y) \in \IR^2:K-\delta < x+y < K + \delta\}$, $\delta = 0.5$, are drawn on $\mathcal K_d(K)=\{(x,y) \in \IR^2:x+y= K \}$.
  }
  \label{fig:plots:comparison:pos:neg}
\end{figure}

\subsubsection{\rev{Soundness of risk allocations}}\label{subsubsection:soundness:risk:allocations}

Consider two bivariate risks (a) $(X_1,Y_1)$ and (b) $(X_2,Y_2)$ such that all of $X_1,Y_1,X_2$ and $Y_2$ identically follow Pareto distributions with shape parameter 3 and scale parameter 5, and $(X_1,Y_1)$ and $(X_2,Y_2)$ have Student $t$ copulas $C_{\nu,\rho_1}^t$ and $C_{\nu,\rho_2}^t$, respectively, where $\nu=5$ is the degrees of freedom parameter and $\rho_1=0.8$ and $\rho_2=-0.8$ are the correlation parameters. Suppose that the exogenously given total capital equals $K=35$.
By exchangeability of the risk models (a) and (b), most allocation rules provide the homogeneous allocation $(K/2,K/2)=(17.5,17.5)$ in both cases (a) and (b).
For instance, if $K$ is regarded as VaR or ES at some confidence levels and is allocated according to the Euler principle, then both VaR and ES contributions lead to homogeneous allocations.
As we see in Figure~\ref{fig:plots:comparison:pos:neg}, however, the conditional distributions of $(X_1,Y_1)$ and of $(X_2,Y_2)$ on the set of allocations $\mathcal K_d(K)$ differ substantially.
Positive dependence among $X_1$ and $Y_1$ prevents the two random variables from moving in opposite directions under the constraint $X_1 + Y_1 = K$, which results in unimodality of the conditional distribution on $\mathcal K_d(K)$.
On the other hand, negative dependence among $X_2$ and $Y_2$ allows them to move in opposite directions, which leads to bimodality of the conditional distribution.
From the viewpoint of risk management, the homogeneous allocation $(K/2,K/2)$ seems to be a more sound capital allocation in Case (a) because it covers the most likely risky scenario.
In Case (b), the two risky scenarios around the corners $(K,0)$ and $(0,K)$ occur equally likely and the allocation $(K/2,K/2)$ can be understood as an average of these scenarios.
However, the likelihood around $(K/2,K/2)$ is quite small and a single vector of the equal allocation $(K/2,K/2)$ obscures the two distinct risky scenarios.
\rev{Moreover, either of $X_2$ or $Y_2$ is likely to suffer a large amount of loss if $(K/2,K/2)$ is reserved as capital.}
Consequently, the soundness of the allocated capital depends on the modality of the conditional loss distribution, and multiple risky scenarios can be hidden in a single vector of risk allocations

\rev{
\begin{remark}[Multimodality and variability]
Models (a) and (b) may not be directly comparable since the total capital for Model (b) is typically smaller than that for Model (a).
However, the variability of a risk under Model (b) is larger than under Model (a) in the sense that a wider variety of scenarios are likely to occur under Model (b) than under Model (a).
Therefore, if the total capital $K$ does not incorporate the variability of a risk, an adjustment of $K$ may be required to take the multimodality of scenarios into account.
\end{remark}
}

\subsubsection{\rev{Stress testing of risk allocations}}\label{subsubsec:stress:testing:risk:allocations}

Inspecting modes of $\bX\ |\ \{S=K\}$ can also be regarded as a stress test of
risk allocations. \citet{breuer2018find} requires stress scenarios to be severe
and plausible.
\rev{
We define the \emph{scenario set with a level of plausibility $t>0$} by $L_t(\bX)=\{\bx\in \IR^d: f_{\bX}(\bx)\geq t\}$ where $f_{\bX}$ is the density function of $\bX$ assuming that it exists. 
Among the scenario set $L_t(\bX)$, the set $L_t(\bX)\cap \mathcal K_d(K)$ can be regarded as the set of the most severe scenarios the given total capital $K$ can cover.
Using the convention $f_{\bX|\{S=K\}}(\bx)=f_{\bX}(\bx)\bone_{\{\bone_d\T\bx = K\}}/f_S(K)$, $\bx \in \IR^d$, the set $L_t(\bX)\cap \mathcal K_d(K)$ leads to the superlevel set of $\bX\ | \ \{S=K\}$ with level $t/f_S(K)$ since 
\begin{align*}
L_t(\bX)\cap \mathcal K_d(K)&=\{\bx \in \IR^d: f_{\bX}(\bx)\bone_{\{\bone_d\T\bx = K\}}\geq t\}\\
&= \{\bx \in \IR^d: f_{\bX|\{S=K\}}(\bx)\geq t/f_S(K)\}
=L_{t/f_S(K)}(\bX\ | \ \{S=K\}).
\end{align*}
Throughout the paper, the superlevel set of $\bX\ | \ \{S=K\}$ is treated as a set of \emph{stress scenarios}.
In particular, the modes of $\bX\ |\ \{S=K\}$ are the most severe and plausible scenarios that $K$ can cover since they attain the highest level of plausibility among the stress scenarios.}
Unimodality of $\bX\ |\ \{S=K\}$ \rev{(see Definition~\ref{def:concepts:unimodality} for its formal definition)} implies that
there exists one representative stress scenario the total capital $K$ can cover, and thus
the mode can be a sound risk allocation covering the risky scenario most likely to occur.
On the other hand, multimodality of $\bX\ |\ \{S=K\}$ means that there are
multiple distinct stress scenarios that are severe and plausible, and thus it may not be
sufficient to only focus on a single scenario without identifying the other ones.


\begin{remark}[Simulation of $\bX\ |\ \{S=K\}$ with MCMC methods]
\label{remark:simulation:conditional:distribution:MCMC}
Another motivation for investigating distributional properties of
$\bX\ |\ \{S=K\}$ is an efficient simulation of this conditional
distribution.  This is a challenging task in general since there is no general and
tractable sampling method known for $\bX\ |\ \{S=K\}$.  
Although samples from $\bX\ |\ \{S=K\}$ can be generated by first simulating $\bX$ and then extracting those satisfying the constraint $\{S=K\}$,
the probability $\P(S=K)$ is zero, and thus such samples
virtually never exist when $S$ admits a density.  A potential remedy for this
problem is to modify the conditioning set $\{S=K\}$ to
$\{K-\delta <S<K+\delta\}$ for a small $\delta>0$ so that
$P(K-\delta <S<K+\delta)>0$.  However, this modification distorts the
distribution of $\bX\ |\ \{S=K\}$ and the resulting estimates of risk
allocations suffer from inevitable biases.  To overcome this issue, \citet{koike2019estimation} and
\citet{koike2020markov} proposed MCMC methods for exact simulation from
$\bX\ |\ \{S=K\}$.  Although MCMC methods improve sample efficiency and the
resulting estimates are unbiased, their performance highly depends on
distributional properties of $\bX\ |\ \{S=K\}$, in particular on its modality
and tail behavior; see Appendix~\ref{app:simulation:with:MCMC} for more details.  From this viewpoint,
investigating properties of $\bX\ |\ \{S=K\}$ is of potential importance to construct
efficient MCMC methods for simulating $\bX\ |\ \{S=K\}$.
\end{remark}

\section{Properties of the conditional distribution given a constant sum}
\label{sec:properties:X:given:constant:sum}

\rev{In Section~\ref{subsec:motivating:example}, we showed that the conditional distribution of $\bX$ given a constant sum $\{S=K\}$ plays an important role in risk allocations.
With this motivation in mind,} we study the support, modality, dependence and tail behavior of $\bX\ | \ \{S=K\}$ in this section. 
As introduced in Section~\ref{subsec:capital:allocation}, we consider the $d'$-dimensional random vector $\bX'\ |\ \{S=K\}$ for $d'=d-1$ to avoid the degeneracy of $\bX\ | \ \{S=K\}$.

\subsection{Support of $\bX\ |\ \{S=K\}$}\label{subsec:support:conditional:distribution:given:sum}

We start with the support of $\bX'|\{S=K\}$.
Assuming that $\bX'|\{S=K\}$ admits a density $f_{\bX'|\{S=K\}}$, 
Equation~\eqref{eq:conditional:density:X:constant:sum} implies that
\begin{align*}
\supp(\bX'\ |\ \{S=K\})=\{\bx' \in \IR^{d'}: f_{\bX'|\{S=K\}}(\bx')>0\}=
\{\bx' \in \IR^{d'}: f_{\bX}(\bx',K-\bone_{d'}\T\bx')>0\}.
\end{align*}
Therefore, if $\sqc{X}{d}$ are supported on $\IR^d$, then $\supp(\bX'\ |\ \{S=K\})=\IR^{d'}$.
Another typical case is when $\supp(\bX)=(l_1,\infty)\times \cdots \times (l_d,\infty)$ for some $\sqc{l}{d}> - \infty$, which implies that $\sqc{X}{d}$ are bounded from below, that is, $X_j \geq l_j$ $\P$-almost surely (a.s.), for $j=1,\dots,d$.
In this case, the support of $\bX'|\{S=K\}$ is given by
\begin{align}\label{eq:support:conditional:distribution:constant:sum}
\supp(\bX'\ |\ \{S=K\})=\biggl\{\bx' \in \IR^{d'}: x_1 > l_1,\dots, x_{d'}> l_{d'},\ \sum_{j=1}^{d'}x'_j < K -l_d  \biggl\}.
\end{align}

If $l_1=\cdots=l_{d}=0$, that is, when $\bX$ models the nonnegative part of
losses, then the closure of~\eqref{eq:support:conditional:distribution:constant:sum}
is known as the \emph{$K$-simplex}. Since the
set in \eqref{eq:support:conditional:distribution:constant:sum} is bounded,
simulation of $\bX'\ |\ \{S=K\}$ can be more straightforward than in the former case when $\supp(\bX'\ |\ \{S=K\})=\IR^{d'}$.
For instance, an independent Metropolis-Hastings (MH) algorithm can be applied by first generating a sample $\by'$ uniformly on the set in \eqref{eq:support:conditional:distribution:constant:sum}
and then replacing the current state $\bx'$ with the new state $\by'$ with probability $\alpha(\bx',\by')=f_{\bX'|\{S=K\}}(\by')/f_{\bX'|\{S=K\}}(\bx')=f_{\bX}(\by',K-\bone_{d'}\T\by')/f_{\bX}(\bx',K-\bone_{d'}\T\bx')$.


\subsection{$\bX\ |\ \{S=K\}$ in the elliptical case}\label{subsec:conditional:distribution:constant:sum:elliptical:case}

\emph{Elliptical distributions} are important exceptions for which the distribution of $\bX'\ |\ \{S=K\}$ can be derived explicitly.
See Chapter 6 of \citet{mcneil2015quantitative} for applications of elliptical distributions to risk management.
Throughout this work, the set of all $d\times d$ positive definite matrices is denoted as $\mathcal M_{+}^{d\times d}$.
The characteristic function of a random vector $\bX$ is given by $\phi_{\bX}(\bt)=\E[\exp(i \bt\T\bX)]$, $\bt \in \IR^d$.
If a function $\psi(t): [0,\infty) \rightarrow \IR$ is such that $\psi(\bt\T\bt)$ is a $d$-dimensional characteristic function, then $\psi$ is called a \emph{characteristic generator}; see \citet{fang2018symmetric} for details.
Let $\Psi_d$ denote the class of all characteristic generators.
A $d$-dimensional random vector $\bX$ is said to have an \emph{elliptical distribution}, denoted by $\bX \sim \mathcal E_d(\bmu,\Sigma,\psi)$, if its characteristic function can be expressed as
\begin{align*}
\phi_{\bX}(\bt)=\exp(i\bt \T\bmu)\ \psi\left(\frac{1}{2}\bt\T\Sigma\bt\right)
\end{align*}
for a \emph{location vector} $\bmu \in \IR^d$, \emph{dispersion matrix} $\Sigma \in \mathcal M_{+}^{d\times d}$ and a \emph{characteristic generator} $\psi \in \Psi_d$.
When an elliptical distribution $\bX \sim \mathcal E_d(\bmu,\Sigma,\psi)$ admits a density function, it is of the form
\begin{align*}
f_{\bX}(\bx)=\frac{c_d}{\sqrt{|\Sigma|}}g\left(\frac{1}{2}(\bx-\bmu)\T\Sigma^{-1}(\bx-\bmu); d\right),\quad \bx \in \IR^d,
\end{align*}
for some normalizing constant $c_d>0$ and a \emph{density generator} $g(\cdot; d)$ satisfying
\begin{align*}
\int_0^{\infty}t^{d/2 -1}g(t;d)\,\rd t< \infty;
\end{align*}
see \citet{fang2018symmetric}.
We omit the second argument and write $g(\cdot)=g(\cdot\ ; d)$ when it can be ignored.

In the following proposition we derive the distribution of $\bX'\ |\ \{S=K\}$ provided that $\bX \sim \mathcal E_d(\bmu,\Sigma,\psi)$.

\begin{proposition}[Ellipticality of $\bX'\ |\ \{S=K\}$]\label{prop:ellipticality:conditional:distribution:constant:sum}
Suppose $\bX \sim \mathcal E_d(\bmu,\Sigma,\psi)$.
Then $\bX'\ |\ \{S=K\}$ follows an elliptical distribution $\mathcal E_{d'}(\bmu_K,\Sigma_K,\psi_K)$ for some characteristic generator $\psi_K \in \Psi_{d'}$,
\begin{align}\label{eq:location:dispersion:parameter:conditional:dist:constant:sum}
\bmu_K= \bmu' + \frac{K - \mu_S}{\sigma_S^2}(\Sigma\bone_{d})'\quad \text{and}\quad
\Sigma_K = {\Sigma'}- \frac{1}{\sigma_S^2}(\Sigma\bone_{d})'\tra{(\Sigma\bone_{d})'},
\end{align}
where $\bmu'$ and $(\Sigma\bone_d)'$ are the first $d'$-components of $\bmu$ and $(\Sigma\bone_d)$, respectively, $\Sigma'$ is the principal submatrix of $\Sigma$ deleting the $d$th row and column,
 $\mu_S=\bone_d \T \bmu$ and $\sigma_S^2 = \bone_d\T \Sigma \bone_d$.
Furthermore, if $\bX$ admits a density with density generator $g$, then $\bX'\ |\ \{S=K\}$ admits a density with density generator
\begin{align}\label{eq:density:generator:conditional:distribution:constant:sum}
g_K(t)=g(t  + \Delta_K)\quad \text{where} \quad \Delta_K =  \frac{1}{2}\left(\frac{K-\mu_S}{\sigma_S}\right)^2.
\end{align}
\end{proposition}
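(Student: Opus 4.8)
The plan is to pass from $\bX$ to the joint law of $(\bX',S)$ by an affine change of variables — for which the elliptical family is stable with the same characteristic generator — and then to condition on $\{S=K\}$ using the standard fact that conditioning an elliptical vector on a subvector again yields an elliptical vector.

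\textbf{Step 1: the joint law of $(\bX',S)$.} First I would write $(\bX',S)\T = A\bX$ with
\[
A=\begin{pmatrix}\bI_{d'} & \bzero\\ \bone_{d'}\T & 1\end{pmatrix}\in\IR^{d\times d},\qquad \det A=1 .
\]
Since $\phi_{\bX}(\bt)=\exp(i\bt\T\bmu)\,\psi(\tfrac{1}{2}\bt\T\Sigma\bt)$, we get $\phi_{(\bX',S)}(\bt)=\phi_{\bX}(A\T\bt)=\exp(i\bt\T A\bmu)\,\psi(\tfrac{1}{2}\bt\T A\Sigma A\T\bt)$, hence $(\bX',S)\sim\mathcal E_d(A\bmu,A\Sigma A\T,\psi)$. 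A block computation then gives $A\bmu=(\bmu',\mu_S)\T$ with $\mu_S=\bone_d\T\bmu$, and
\[
A\Sigma A\T=\begin{pmatrix}\Sigma' & (\Sigma\bone_d)'\\ \tra{(\Sigma\bone_d)'} & \sigma_S^2\end{pmatrix},\qquad \sigma_S^2=\bone_d\T\Sigma\bone_d ,
\]
using that the first $d'$ entries of $\Sigma\bone_d$ are exactly $(\Sigma\bone_d)'$. Invertibility of $A$ together with $\Sigma\in\mathcal M_{+}^{d\times d}$ forces $A\Sigma A\T\in\mathcal M_{+}^{d\times d}$, so $\sigma_S^2>0$ and, by the Schur-complement criterion, $\Sigma_K=\Sigma'-\sigma_S^{-2}(\Sigma\bone_d)'\tra{(\Sigma\bone_d)'}$ is positive definite.

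\textbf{Step 2: conditioning (general case).} For the first assertion I would invoke the known result that the conditional law of an elliptical vector given the values of a subvector is elliptical (see \citet{fang2018symmetric}; see also Chapter~6 of \citet{mcneil2015quantitative}): with $(\bY_1,\bY_2)$ elliptical, partitioned into dispersion blocks $\Sigma_{ij}$ and locations $\bmu_i$, the vector $\bY_1\mid\{\bY_2=\by_2\}$ is elliptical with location $\bmu_1+\Sigma_{12}\Sigma_{22}^{-1}(\by_2-\bmu_2)$, dispersion $\Sigma_{11}-\Sigma_{12}\Sigma_{22}^{-1}\Sigma_{21}$, and a characteristic generator lying in $\Psi_{\dim\bY_1}$ that depends on $\by_2$ only through the Mahalanobis term $(\by_2-\bmu_2)\T\Sigma_{22}^{-1}(\by_2-\bmu_2)$. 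Specializing $\bY_1=\bX'$, $\bY_2=S$, $\by_2=K$ and the blocks from Step~1 reproduces exactly $\bmu_K$ and $\Sigma_K$ of \eqref{eq:location:dispersion:parameter:conditional:dist:constant:sum} and yields some $\psi_K\in\Psi_{d'}$.

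\textbf{Step 3: the density generator.} Under the density assumption I would argue directly from \eqref{eq:conditional:density:X:constant:sum}. Writing $\tilde\Sigma=A\Sigma A\T$ and applying the block-inversion identity for quadratic forms together with $|\tilde\Sigma|=\sigma_S^2|\Sigma_K|$, the exponent of the joint elliptical density evaluated at $s=K$ splits as $2\Delta_K+(\bx'-\bmu_K)\T\Sigma_K^{-1}(\bx'-\bmu_K)$, so that
\[
f_{(\bX',S)}(\bx',K)=\frac{c_d}{\sigma_S\sqrt{|\Sigma_K|}}\,g\!\left(\Delta_K+\tfrac{1}{2}(\bx'-\bmu_K)\T\Sigma_K^{-1}(\bx'-\bmu_K)\right).
\]
Dividing by the $\bx'$-free constant $f_S(K)$ shows $f_{\bX'|\{S=K\}}$ has the elliptical density form with dispersion $\Sigma_K$, location $\bmu_K$ and density generator $g_K(t)=g(t+\Delta_K)$; the normalizing constant and the integrability requirement $\int_0^\infty t^{d'/2-1}g_K(t)\,\rd t<\infty$ come for free, since $f_{\bX'|\{S=K\}}$ is already a genuine probability density.

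\textbf{Main obstacle.} The only genuinely nontrivial point is Step~2 in the density-free generality — namely that the induced $\psi_K$ really is an admissible characteristic generator in dimension $d'$ — and for this I would lean on the cited elliptical-conditioning results rather than reprove them; the remainder is bookkeeping with the affine map $A$ and standard Schur-complement and block-inversion identities, and Step~3 is elementary.
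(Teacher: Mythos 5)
Your proposal is correct and follows essentially the same route as the paper: pass to $(\bX',S)=A\bX$ via the affine map with unit determinant, invoke the standard elliptical conditioning result (the paper cites Theorem~2.18 of \citet{fang2018symmetric}) for the characteristic-generator statement, and obtain $g_K$ by splitting the quadratic form of the joint density at $s=K$ into $(\bx'-\bmu_K)\T\Sigma_K^{-1}(\bx'-\bmu_K)$ plus the Mahalanobis remainder $2\Delta_K$. Your additional checks (positive definiteness of $\Sigma_K$ via the Schur complement and the determinant identity $|A\Sigma A\T|=\sigma_S^2|\Sigma_K|$) are correct refinements of details the paper leaves implicit.
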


Note that the characteristic generator $\psi_K$ of $\bX'\ |\ \{S=K\}$ is in
general different from that of $\bX$; see the proof in
Appendix~\ref{app:proofs}.  By
Proposition~\ref{prop:ellipticality:conditional:distribution:constant:sum},
ellipticality is preserved under conditioning $\{S=K\}$ and thus a change of the
shape of the distribution as observed in
Figure~\ref{fig:plots:comparison:pos:neg} (b) does not occur when $\bX$ is
elliptical.  The capital $K$ is typically much larger than the mean of the total
loss $\mu_S$ in
practice. 
Therefore, by~\eqref{eq:density:generator:conditional:distribution:constant:sum},
the density generator $g_K$ is typically the tail part of the generator
$g$.  Moreover, the location vector $\bmu_K$ typically increases in proportion to the sum
of covariances $(\Sigma \bone_d)'$.
As a consequence, more (less) capital is assigned to losses which are positively (negatively) correlated with the other losses.
On the other hand, the dispersion matrix $\Sigma_K$ decreases in proportion to the term $(\Sigma \bone_d)'\tra{(\Sigma \bone_d)'}$ and the reduction depends on the variance of the sum.

\begin{example}[Student $t$ distribution]\label{example:student:t:conditional:distribution:sum:constraint}
A $d$-dimensional \emph{Student $t$ distribution} $t_{\nu}(\bmu,\Sigma)$ is an elliptical distribution $\mathcal E_d(\bmu,\Sigma,\psi)$ with density generator
\begin{align}\label{eq:density:generator:student:t}
g(t; d)=\left(1+\frac{t}{\nu}\right)^{-\frac{d+\nu}{2}},\quad t \geq 0,
\end{align}
where $\nu \geq 1$ is the \emph{degrees of freedom} parameter.
It is known, for example, from \citet{roth2012multivariate} and \citet{ding2016conditional} that the conditional distribution of the Student $t$ distribution is again Student $t$.
We can check this closedness property with Proposition~\ref{prop:ellipticality:conditional:distribution:constant:sum}.
By~\eqref{eq:density:generator:conditional:distribution:constant:sum}, the random variable $\bX'\ |\ \{S=K\}$ follows an elliptical distribution $\mathcal E_{d'}(\bmu_K,\Sigma_K,g_K)$ with density generator (up to a constant) given by
\begin{align*}
g_K(t)=\left(1+\frac{t}{\nu + \Delta_K}\right)^{-\frac{d+\nu}{2}},
\end{align*}
for which the corresponding distribution is known as the \emph{Pearson type $V\hspace{-1mm}I\hspace{-1mm}I$ distribution}; see \citet{schmidt2002tail}.
In fact, this distribution reduces to a $d'$-dimensional Student $t$ distribution since
\begin{align*}
g_K(t)=\left(1+\frac{t}{\nu + \Delta_K}\right)^{-\frac{d+\nu}{2}}\propto \left(1+\frac{\nu+1}{\nu + \Delta_K}\frac{t}{\nu+1}\right)^{-\frac{d'+\nu+1}{2}},
\end{align*}
and the multiplier $(\nu+1)/(\nu + \Delta_K)$ can be absorbed by redefining the dispersion matrix as $\tilde \Sigma_K = (\nu + \Delta_K) \Sigma_K / (\nu+1)$ for $(\nu + \Delta_K) / (\nu+1) >0$.
Consequently, $\bX'\ |\ \{S=K\}$ follows $t_{\nu+1}(\bmu_K,\tilde \Sigma_K)$.
Since the degrees of freedom of $\bX'\ |\ \{S=K\}$ increases by $1$, $\bX'\ |\ \{S=K\}$ has slightly lighter tails than $\bX$.
\end{example}

\subsection{Unimodality of $\bX\ |\ \{S=K\}$}\label{subsec:unimodality:conditional:distribution:constant:sum}

Next we study the modality of $\bX'\ |\ \{S=K\}$.
Among various definitions of unimodality considered in the literature, we adopt those defined based on the \rev{superlevel set}
\begin{align*}
L_t(f)=\{\bx\in\IR^d: f(\bx)\geq t\},\quad t \in (0,\  \max\{f(\bx):\bx \in\IR^d\}],
\end{align*}
where $f$ is a density on $\IR^d$ which is assumed to be bounded for simplicity so that $\max\{f(\bx):\bx \in\IR\}$ exists.
By definition, $L_t(f)$ is a decreasing set, that is, $L_{t'}(f)\subseteq L_{t}(f)$ for $0<t \leq t'$.
We also write $L_t(\bX)$ for $L_t(f)$ if $\bX$ has density $f$.
A set $A\subseteq \IR^d$ is called \emph{star-shaped} about $\bx_0 \in A$ if, for any $\by \in A$, the line segment from $\bx_0$ to $\by$ is in $A$.

\begin{definition}[Concepts of unimodality]\label{def:concepts:unimodality}
For a bounded density function $f$ on $\IR^d$, we call $M(f) = L_{t^\ast}(f)$ the \emph{mode set} of $f$ for $t^\ast=\max\{f(\bx):\bx \in\IR^d\}$.
If $L_{t^{\ast}}(f)=\{\bbm\}$ then we call $\bbm \in \IR^d$ the \emph{mode} of $f$.
Furthermore, $f$ is said to be \emph{weakly unimodal} if $L_t(f)$ is connected, \emph{star unimodal} about the center $\bx_0 \in \IR^d$ if $L_t(f)$ is star-shaped about $\bx_0$ and \emph{convex unimodal} if $L_t(f)$ is convex, for all $0<t\leq t^\ast$.
\rev{Finally, $f$ is said to be \emph{multimodal} if $L_t(f)$ is not connected for some $0<t \leq t^{\ast}$.}
\end{definition}

From Definition~\ref{def:concepts:unimodality}, convex unimodality implies star unimodality and star unimodality implies weak unimodality.
Other notions of unimodality, such as block unimodality, linear unimodality, monotone unimodality, $\alpha$-unimodality, orthounimodality and Khinchin's unimodality are not introduced in this paper due to their intractability for our purpose; see \citet{dharmadhikari1988unimodality} for a comprehensive discussion on unimodality.
As mentioned in
Section~\ref{subsec:motivating:example},
$L_t(\bX)$ can be understood as a plausible scenario set with $t>0$ being the
level of plausibility.  In addition, $L_t(\bX\ | \ \{S=K\})$ can be regarded as
a set of severe and plausible stress scenarios the total capital $K$ is supposed
to cover.  From these interpretations, we believe that the notion of unimodality should
describe tractability of these scenario sets, such as connectivity and convexity.
The \rev{superlevel set} $L_t(f)$ can also be important when $f$ is simulated with MCMC methods
since the ratio of levels of $f$ is a primary quantity of interest for such
methods.  MCMC methods are required to be specifically designed when $L_t(f)$ is
not connected since in this case a Markov chain needs to traverse distinct
regions to simulate samples from the entire space.

Note that uniqueness of the maximum of a density $f$, that is, the mode set of $f$ being a singleton $L_{t^{\ast}}(f)=\{\bbm\}$ for $\bbm \in \IR^d$, is an important but different concept of unimodality from those in Definition~\ref{def:concepts:unimodality}.
In fact, uniqueness of the maximum may not be an appropriate concept of unimodality when the relationship between $\bX$ and $\bX'\ | \ \{S=K\}$ is of interest. 
This is because the uniqueness of the maximum of $f_{\bX'|\{S=K\}}$ is equivalent to that of $f_{\bX}$ on the restricted domain $\mathcal K_d(K)$ via~\eqref{eq:conditional:density:X:constant:sum}, and thus the uniqueness of the maximum of $f_{\bX}$ on the entire support $\IR^d$ does not provide any information on the shape of $f_{\bX}$ on $\mathcal K_d(K)$ unless the mode of $f_{\bX}$ on $\IR^d$ is in $\mathcal K_d(K)$.

The following proposition reveals some relationships between unimodality of $\bX$ and that of $\bX'\ |\ \{S=K\}$.

\begin{proposition}[Unimodality of $\bX'\ |\ \{S=K\}$]\label{prop:unimodality:conditional:distribution:constant:sum}
\hspace{3mm}
\begin{enumerate}
\item Suppose $\bX\sim \mathcal E_d(\bmu,\Sigma,\psi)$ admits a density with density generator $g$.
If $g$ is decreasing on $\IR_{+}$, then $f_{\bX'|\{S=K\}}$ is convex unimodal.
Furthermore, if the equation $g(t)=\Delta_K$ of $t \in \IR_{+}$ has a unique solution $t_K^\ast$, then $f_{\bX'|\{S=K\}}$ has the mode $\bbm=\bmu_K$.
\item If $\bX$ is convex unimodal, then $\bX'\ |\ \{S=K\}$ is convex unimodal.
\end{enumerate}
\end{proposition}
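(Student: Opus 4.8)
For the first assertion (the elliptical case), the plan is to route everything through Proposition~\ref{prop:ellipticality:conditional:distribution:constant:sum}. That result gives $\bX'\ |\ \{S=K\}\sim\mathcal E_{d'}(\bmu_K,\Sigma_K,\psi_K)$ with $\Sigma_K\in\mathcal M_{+}^{d'\times d'}$ and, since $\bX$ has a density with generator $g$, a density for $\bX'\ |\ \{S=K\}$ with generator $g_K(t)=g(t+\Delta_K)$, so that
\[
f_{\bX'|\{S=K\}}(\bx')=\frac{c_{d'}}{\sqrt{|\Sigma_K|}}\,g_K\bigl(Q(\bx')\bigr),\qquad Q(\bx')=\tfrac{1}{2}(\bx'-\bmu_K)\T\Sigma_K^{-1}(\bx'-\bmu_K),
\]
where $Q$ is a nondegenerate positive-definite quadratic form and $c_{d'}>0$ is the normalizing constant. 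Since $g$ is decreasing on $\IR_{+}$ and $\Delta_K\ge 0$, the shifted generator $g_K$ is decreasing on $\IR_{+}$ too, so for every level $t\in(0,t^{\ast}]$ (with $t^{\ast}$ the maximum of $f_{\bX'|\{S=K\}}$) the set $\{s\ge 0:g_K(s)\ge t\sqrt{|\Sigma_K|}/c_{d'}\}$ is an interval $[0,r_t]$ or $[0,r_t)$, with $r_t<\infty$ because superlevel sets of a density have finite Lebesgue measure. Hence $L_t(f_{\bX'|\{S=K\}})=\{\bx':Q(\bx')\le r_t\}$ is an ellipsoid centred at $\bmu_K$, in particular convex, which gives convex unimodality (and therefore, as noted after Definition~\ref{def:concepts:unimodality}, star and weak unimodality as well). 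For the statement about the mode I would observe that monotonicity of $g_K$ puts the maximal density level at $g_K(0)=g(\Delta_K)$, attained precisely where $Q(\bx')=0$, so that $M(f_{\bX'|\{S=K\}})$ consists of the $\bx'$ with $g(Q(\bx')+\Delta_K)=g(\Delta_K)$; the stated uniqueness hypothesis on the generator is exactly what prevents $g_K$ from being flat on an interval $[0,\varepsilon]$, and therefore forces $M(f_{\bX'|\{S=K\}})=\{\bmu_K\}$, i.e.\ $\bbm=\bmu_K$.

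For the second assertion the plan is a single affine change of variables. Let $T\colon\IR^{d'}\to\IR^{d}$, $T(\bx')=(\bx',K-\bone_{d'}\T\bx')$, be the affine injection with image the hyperplane $\mathcal K_d(K)$. By the density formula~\eqref{eq:conditional:density:X:constant:sum}, for every $t\in(0,t^{\ast}]$,
\[
L_t\bigl(f_{\bX'|\{S=K\}}\bigr)=\bigl\{\bx'\in\IR^{d'}:f_{\bX}(T(\bx'))\ge t\,f_S(K)\bigr\}=T^{-1}\bigl(L_{t f_S(K)}(f_{\bX})\cap\mathcal K_d(K)\bigr).
\]
If $\bX$ is convex unimodal, then $L_{t f_S(K)}(f_{\bX})$ is convex; intersecting it with the affine subspace $\mathcal K_d(K)$ keeps it convex; and the preimage of a convex set under the affine map $T$ is again convex, because $T(\lambda\bx'_1+(1-\lambda)\bx'_2)=\lambda T(\bx'_1)+(1-\lambda)T(\bx'_2)$. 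Hence $L_t(f_{\bX'|\{S=K\}})$ is convex for all admissible $t$, which is precisely convex unimodality of $\bX'\ |\ \{S=K\}$.

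The second assertion is essentially free. In the first, the only points needing care are (i) recording that $\Sigma_K$ is positive definite, so the superlevel sets are honest bounded ellipsoids rather than degenerate or unbounded regions — this is already part of Proposition~\ref{prop:ellipticality:conditional:distribution:constant:sum} — and (ii) lining up the maximal density level $g_K(0)=g(\Delta_K)$ with the uniqueness hypothesis on $g$, so that the mode set collapses to the single point $\bmu_K$ rather than to a small ellipsoid (which is what would happen if $g$ were constant on an interval $[\Delta_K,\Delta_K+\varepsilon]$). Everything else is bookkeeping through Proposition~\ref{prop:ellipticality:conditional:distribution:constant:sum} and the shift $g_K=g(\cdot+\Delta_K)$.
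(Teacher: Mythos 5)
Your proof is correct and follows essentially the same route as the paper's: Part~1 passes through Proposition~\ref{prop:ellipticality:conditional:distribution:constant:sum} and the monotonicity of the shifted generator $g_K=g(\cdot+\Delta_K)$ to identify the superlevel sets as ellipsoids centred at $\bmu_K$, and Part~2 transports convexity of $L_{t f_S(K)}(f_{\bX})$ through the affine embedding of $\mathcal K_d(K)$ exactly as the paper does with its elementwise convex-combination computation. Your reading of the mode hypothesis --- as ruling out flatness of $g$ immediately to the right of $\Delta_K$, so that the top level set collapses to $\{\bmu_K\}$ --- is the charitable interpretation and is, if anything, handled more carefully than in the paper's own proof of that step.
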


Unlike convex unimodality, neither weak unimodality nor star unimodality of $\bX$ implies any of the unimodality concepts of $\bX'\ |\ \{S=K\}$ introduced in Definition~\ref{def:concepts:unimodality}.
To provide a counterexample, we introduce the following class of distributions.

\begin{definition}[Homothetic density]\label{def:homothetic:distribution}
A $d$-dimensional random vector $\bX$ is said to have a \emph{homothetic density}, denoted by $\bX \sim \mathcal H(\bmu,D,r)$, with a \emph{location parameter} $\bmu\in\IR^d$, \emph{shape set} $D\subseteq \IR^d$ and a \emph{scaling function} $r: \IR_{+} \rightarrow \IR_{+}$ if $\bX-\bmu$ admits a density $f_D$ satisfying
\begin{align*}
L_t(f_D)=r(t)D=\{s\bx: 0\leq s \leq r(t),\ \bx \in D\}
\end{align*}
for some continuous and decreasing function $r$ and a bounded and star-shaped set $D \in \IR^d$ around $\bzero$ such that
\begin{align}\label{eq:condition:r:D:homothetic:distribution}
 \int_0^\infty\Leb{d}{r(t)D}\,\rd t=1,
 \end{align}
 where $\operatorname{Leb}_d$ denotes the Lebesgue measure on $\IR^d$.
\end{definition}

Note that Condition~\eqref{eq:condition:r:D:homothetic:distribution} is required to ensure that $\int_{\IR^d} f_D(\bx)\,\rd \bx =1$.
To see this, we have
\begin{align*}
\int_{\IR^d}f_D(\bx)\,\rd \bx &=
\int_{\IR^d} \int_0^{f_D(\bx)}\ \rd t \,\rd \bx
=
\int_{\IR^d} \int_0^\infty \bone_{\{\bx \in L_t(f_D)\} }\,\rd t\,\rd \bx\\
& = \int_{0}^\infty\Leb{d}{L_t(f_D)}\,\rd t
=  \int_0^\infty\Leb{d}{r(t)D}\,\rd t=1.
\end{align*}

Homothetic distributions arise from $l_p$-spherical distributions \citep{osiewalski1993robust} where the \rev{superlevel set}s are determined as balls in the $l_p$-norm, and also arise from a further generalized class of distributions called the $v$-spherical distributions \citep{fernandez1995modeling}.
Examples of homothetic distributions include skew-normal distributions and rotund-exponential distributions; see \citet{balkema2010asymptotic}.
It is straightforward to check that $\bX\sim \mathcal H(\bzero_d,D,r)$ is star unimodal about $\bx_0\in\IR^d$ if $D$ is star-shaped about $\bx_0$, and convex unimodal if $D$ is convex.

Suppose $\bX \sim \mathcal H(\bzero_d,D,r)$ for a convex set $D$.
Then $\bX$ is convex unimodal and so is $\bX'\ |\ \{S=K\}$ by Proposition~\ref{prop:unimodality:conditional:distribution:constant:sum} Part 2.
For this homothetic distribution, the \rev{superlevel set} of $\bX'\ |\ \{S=K\}$ embedded in $\IR^d$ has the following representation
\begin{align*}
\{\bx \in \IR^d& : \bx' \in L_t(\bX'\ |\ S=K),  \ x_d = K - \bone_{d'}\T\bx'\}\\
& =
\{\bx \in \IR^{d}: f_{\bX'|\{S=K\}}(\bx')\geq t, \ x_d = K - \bone_{d'}\T\bx'\}\\
&= \{\bx \in \IR^{d}: f(\bx)\geq t  f_S(K) \} \cap \mathcal K_d(K)
= L_{t  f_S(K)}(f) \cap \mathcal K_d(K)\\
&= r(t  f_S(K) )D \cap \mathcal K_d(K)
=\{s \bx : \bx \in D,\ 0 \leq s \leq r(t  f_S(K) )\}\cap \mathcal K_d(K)\\
&= \left\{\frac{K}{\bone_d\T\bx}\bx: \bx \in D,\ 0\leq \frac{K}{\bone_d\T\bx}\leq r(t  f_S(K) )\right\}\\
&= \left\{\frac{K}{\bone_d\T\bx}\bx: \bx \in \bigcup_{k \geq K/r(t  f_S(K) )}D \cap \mathcal K_d(k)\right\},
\end{align*}
that is, the \rev{superlevel set} $L_t(\bX'\ |\ \{S=K\})$ embedded in $\IR^d$ is a collection of the projected points of $\bx \in D$ intersected with the upper half space $\{\bx \in \IR^d: \bone_d\T\bx \geq K/r(t  f_S(K) )\}$ onto $\mathcal K_d(K)$.

The following example shows that neither weak unimodality nor star unimodality of $\bX$ implies any of the unimodality concepts introduced in Definition~\ref{def:concepts:unimodality} for $\bX'\ |\ \{S=K\}$.

\begin{example}
Consider $\bX \in \mathcal H(\bzero_2,D,r)$ where $D=([-2,2]\times [-1,1])\cup([-1,1]\times [-2,2])$ and $r(t)=\frac{1}{2\sqrt{3}}\exp(-t/2)$.
$D$ is star-shaped (and thus connected) around $(0,0)$ and $r$ is a decreasing function.
Furthermore, the pair of $(D,r)$ satisfies Condition~\eqref{eq:condition:r:D:homothetic:distribution} since
\begin{align*}
\int_0^\infty\Leb{2}{r(t)D}\,\rd t = \Leb{2}{D}\int_0^\infty r^2(t)\,\rd t
= 12  \int_0^\infty \frac{1}{12}\exp(-t)\,\rd t =1.
\end{align*}
Suppose that the total capital is given by $K=1/3$.
For $t=-2\log(\sqrt{3}/3)\approx1.098$, we have $r(t)=1/6$ and thus $L_t(f_D)=D/6 = ([-1/3,1/3]\times [-1/6,1/6])\cup([-1/6,1/6]\times [-1/3,1/3])$.
Therefore, $L_t(\bX'\ |\ \{S=K\})=[0,1/6]\cup[1/3,1/2]$, which is neither star-shaped nor even connected.
\end{example}

Next we study marginal properties of unimodality.
In general, even if $\bX$ is convex unimodal, it does not imply any unimodality for its marginal distributions; see \citet[Example~A.3.]{balkema2010asymptotic} for a counterexample.
The following example shows that marginal unimodality also does not imply joint unimodality.

\begin{example}[Marginal unimodality does not imply joint unimodality]
Consider the following bivariate density
\begin{align*}
f(u,v)=\frac{9}{4} \bone_{\left\{(u,v)\in \bigcup_{i=1}^3[(i-1)/3,i/3]^2  \right\}} + \frac{9}{4}  \bone_{\left\{(u,v)\in [1/3,2/3]^2\right\}}, \quad (u,v)\in[0,1],
\end{align*}
which has the convex unimodal marginal densities
\begin{align*}
f_1(u)=f_2(u)=\frac{3}{4} \bone_{\{u \in [0,1]\}} + \frac{3}{4} \bone_{\{u \in [1/3,2/3]\}},\quad u \in [0,1].
\end{align*}
However, $L_{9/4}(f)=[0,1/3]^2\cup[1/3,2/3]^2\cup[2/3,1]^2$ is neither convex nor star-shaped.
\end{example}

Joint unimodality implies marginal unimodality for certain classes of distributions.
As is shown in \citet{balkema2010asymptotic}, $l_p$-spherical distributions form a subclass of homothetic densities for which unimodality is preserved under marginalization.
This property also holds for the class of $s$-concave densities, which is also closed under the operation $\bX \mapsto \bX'\ |\ \{S=K\}$; see Appendix~\ref{app:s:concave:densities} for details.


\subsection{Dependence of $\bX\ |\ \{S=K\}$ and its modality}\label{subsec:dependence:conditional:distribution:constant:sum}

The dependence structure of $\bX'\ |\ \{S=K\}$ is typically described in terms of the dependence among $X_j$ and $S$ for $j=1,\dots,d'$.
For instance, when $\bX \sim \mathcal E_d(\bmu,\Sigma,\psi)$,
Proposition~\ref{prop:ellipticality:conditional:distribution:constant:sum} yields
\begin{align*}
\Cov[X_i,X_j\ |\ \{S=K\}] & =(\Sigma_K)_{i,j} = \Cov[X_i,X_j]-\frac{1}{\sigma_S^2}(\Sigma\bone_d)_i(\Sigma\bone_d)_j\\
&= \Cov[X_i,X_j]-\frac{1}{\sigma_S^2}\Cov[X_i,S]\Cov[X_j,S]
=\sigma_i \sigma_j (\rho_{X_i,X_j}-\rho_{X_i,S}\ \rho_{X_j,S}),
\end{align*}
where $\sigma_j^2 = \Var(X_j)$ and $\rho_{X_i,X_j}$ is the correlation coefficient of $(X_i,X_j)$.
\rev{Beyond the elliptical case, various dependence concepts, in particular the total positivity and its related order of $\bX'\ |\ \{S=K\}$, are investigated in Appendix~\ref{app:dependence:stochastic:order}.}

\rev{In this section, we investigate the modality of $\bX\ |\ \{S=K\}$ under extremal dependence of $\bX$.
The following example shows that $\bX\ |\ \{S=K\}$ is degenerate and thus unimodal when $\bX$ is comonotone.
}

\rev{
\begin{example}[$\bX'\ |\ \{S=K\}$ under comonotonicity]\label{ex:conditional:distribution:constant:sum:comonotonicity}
Suppose $\bX$ is a comonotone random vector with continuous and strictly increasing margins $\sqc{F}{d}$, that is, $\smash{\bX\deq (F_1^{-1}(U),\dots,F_d^{-1}(U))}$ for some $U \sim \U(0,1)$.
Under continuity and strict increasingness of $\sqc{F}{d}$, their quantile functions $F_j^{-1}$, $j=1,\dots,d$, are continuous and strictly increasing.
Therefore, the equation $\sum_{j=1}^d F_j^{-1}(u)=K$ of $u \in [0,1]$ has a unique solution $u^\ast$.
Moreover, it holds that
\begin{align*}
\bX\ |\ \{S=K\}=(F_1^{-1}(u^{\ast}),\dots,F_d^{-1}(u^{\ast}))\quad \P\text{-a.s.},
\end{align*}
since
\begin{align*}
\P\biggl( \ \bigcup_{j=1}^d\{X_j\neq F_j^{-1}(u^{\ast})\}  \biggl| \biggl \{S=K\biggr\}\biggr)\biggr.
=\P\biggl(\ \bigcup_{j=1}^d\left\{F_{j}^{-1}(U)\neq F_j^{-1}(u^{\ast})\right\} \biggl| \biggl\{\sum_{j=1}^dF_{j}^{-1}(U)=K\biggr\}\biggr)\biggr.=0.
\end{align*}
This example can be understood as an extreme case where positive dependence
(comonotonicity) implies unimodality of $\bX\ |\ \{S=K\}$, which takes on one point $(F_1^{-1}(u^{\ast}),\dots,F_d^{-1}(u^{\ast}))$ with
probability $1$.
\end{example}
}

When $\bX$ has negative dependence, a wider variety of
distributions, possibly multimodal ones, arise as $\bX\ |\ \{S=K\}$ compared with the positive dependent case.
\rev{We demonstrate this phenomenon in the following example.}


\begin{example}[$\bX\ |\ \{S=K\}$ under extreme negative dependence]\label{ex:multimod}
Let $K>0$ and $X\sim F$ for a continuous distribution function $F$ supported on $[0,\infty)$ such that $X\ | \ \{X\leq K\}$ is radially symmetric about $K/2$ in the sense that $\smash{(X-K/2)\ | \ \{X\leq K\}\deq (K/2 - X)\ | \ \{X\leq K\}}$.
For $U\sim \U(0,1)$ define $(X_1,X_2)$ by
\begin{align*}
X_1 &=F^{-1}(U) \bone_{\{U\leq F(K)\}} + F^{-1}(U)\bone_{\{U>F(K)\}}= F^{-1}(U),\\
X_2 &=(K-F^{-1}(U)) \bone_{\{U\leq F(K)\}} + F^{-1}(U)\bone_{\{U>F(K)\}}.
\end{align*}
Then $\P(X_1\leq x)=\P(F^{-1}(U)\leq x)=\P(U\leq F(x))=F(x)$ for all $x \geq 0$.
Since the conditional radial symmetry of $F$ implies that $\P(K-F^{-1}(U)\leq x,\ U\leq F(K))=\P(F^{-1}(U)\leq x,\  U\leq F(K))$, it holds that
\begin{align*}
\P(X_2\leq x)&=\P(X_2\leq x,\ U\leq F(K))+\P(X_2\leq x,\ U> F(K))\\
&=\P(K-F^{-1}(U)\leq x,\ U\leq F(K))+\P(F^{-1}(U)\leq x,\ U> F(K))\\
&=\P(F^{-1}(U)\leq x,\ U\leq F(K))+\P(F^{-1}(U)\leq x,\ U> F(K))\\
&=\P(F^{-1}(U)\leq x)=F(x),\quad x \geq 0.
\end{align*}
Therefore, $X_1\sim F$ and $X_2\sim F$.
The body $\{X_1\leq K\}$ of $X_1$ and the tail $\{X_2>K\}$ of $X_2$ are mutually exclusive in the sense that $\P(X_1\leq K,\ X_2>K)=0$.
Similarly $\P(X_1> K,\ X_2\leq K)=0$.
In the tail, $X_1$ and $X_2$ are comonotone in the sense that $(X_1,X_2)=(F^{-1}(U),F^{-1}(U))$ on $\{U>F(K)\}$.
In the body, $X_1$ and $X_2$ are countermonotone in the sense that $(X_1,X_2)=(F^{-1}(U),K-F^{-1}(U))$ on $\{U\leq F(K)\}$.
Since $X_1+X_2 = F^{-1}(U) + K-F^{-1}(U)=K$ on $\{U\leq F(K)\}$ and $X_1+X_2 =2 F^{-1}(U)>2K > K$ on $\{U>F(K)\}$, we have that
\begin{align*}
  \{X_1+X_2=K\}=\{X_1+X_2=K,\ U\leq F(K)\} \cup \{X_1+X_2=K,\ U >  F(K)\} = \{U\leq F(K)\},
\end{align*}
and thus that
\begin{align*}
(X_1,X_2) \ | \ \{X_1+X_2=K\} = (X_1,X_2) \ | \  \{U\leq F(K)\} = (F^{-1}(U),K-F^{-1}(U)) \ | \  \{U\leq F(K)\}.
\end{align*}
Consequently, $(X_1,X_2)\ | \ \{S=K\}$ has the homogeneous marginal distribution $F_{X|\{X\leq K\}}$ and a countermonotone copula $W$.
Therefore, multimodality of $\bX\ | \ \{S=K\}$ appears when, for example, $X\sim F$ has a bimodal distribution on the body $\{X\leq K\}$; see Remark~\ref{rem:extension:complete:mixability} for a more concrete example.
\end{example}

\begin{remark}[Extension to $d\geq 3$ with complete mixability]\label{rem:extension:complete:mixability}
  Example~\ref{ex:multimod} for constructing $(X_1,X_2)$ based on
  countermonotonicity can be extended to the multivariate case $d\geq 3$.  
  Let $K>0$ and
  $X\sim F$ for a continuous distribution function $F$ supported on $[0,\infty)$
  such that the conditional distribution $F_{X|\{X \leq K\}}$ is
  \emph{$d$-completely mixable} with center $K$ for $d\geq 3$, that is, there
  exists a $d$-dimensional random vector $\bY=(\sqc{Y}{d})$ called the
  \emph{$d$-complete mix} such that $Y_j \sim F_{X|\{X \leq K\}}$, $j=1,\dots,d$, and
  $Y_1+\cdots + Y_d=K$ a.s.
  Such a random vector exists, for example,
  when $F_{X|\{X \leq K\}}$ admits a decreasing density with $\E[Y_1]=K/d$; see
  \citet[Corollary~2.9.]{wang2011complete}.  Define $\bX=(\sqc{X}{d})$ by
  $X_j=Y_j \bone_{\{U\leq F(K)\}} + Z_j \bone_{\{U>F(K)\}}$ for
  $\bY=(\sqc{Y}{d})$ being the $d$-complete mix of $F_{X|\{X \leq K\}}$, $U \sim \U(0,1)$, $Z_j \sim F_{X|\{X > K\}}$, $j=1,\dots,d$ and $\bY$, $U$ and $\sqc{Z}{d}$ are independent of each
  other.  Then one can check that $X_j \sim F$.  Moreover,
  $\{X_1+\cdots + X_d = K \}=\{U\leq F(K)\}$ since
\begin{align*}
S=X_1+\cdots + X_d = K  \bone_{\{U\leq F(K)\}} + (Z_1+\cdots+Z_d) \bone_{\{U>F(K)\}},
\end{align*}
and $Z_1+\cdots+Z_d > dK>K$ a.s.
Consequently, $\bX\ | \ \{S=K\}=\bX\ | \ \{U\leq F(K)\}=\bY$ a.s.\ and thus $\bX \ | \ \{S=K\}$ is the $d$-complete mix of $X\ | \ \{X\leq K\}$.
To construct a multimodal $\bX\ | \ \{S=K\}$ one can choose $\bY$ as an equally weighted mixture of three Dirichlet distributions $\Dir(\alpha,\alpha,\beta)$, $\Dir(\alpha,\beta,\alpha)$ and $\Dir(\beta,\alpha,\alpha)$ for $0<\alpha < \beta$.
This mixture is a $3$-complete mix since it has homogeneous marginal distributions and a constant sum.
Moreover, $\bY$ has non-connected superlevel sets when, for example, $\alpha=2$ and $\beta=10$, and thus $\bX'\ | \ \{S=K\}$ is multimodal.
\end{remark}


\subsection{Tail behavior of $\bX\ |\ \{S=K\}$}\label{subsec:tail:dependence:conditional:distribution:constant:sum}

In this section we study the tail behavior of $\bX'\ |\ \{S=K\}$ through its density.
Since boundedness of $\bX$ from below leads to a bounded support of $\bX'\ |\ \{S=K\}$ as shown in Section~\ref{subsec:support:conditional:distribution:given:sum}, we focus on the case when $\bX$ is supported on $\IR^d$.
In this case, the support of $\bX'\ |\ \{S=K\}$ is $\IR^{d'}$ and thus there are $2^{d'}$ orthants to be considered.
Hereafter we consider tail behavior only in the first orthant $\{\bx' \in \IR^{d'}:x_1,\dots,x_{d'}> 0\}$ since tails on the other orthants can be discussed similarly.
We study the following limiting behaviors of the ratio of densities.

\begin{definition}[Multivariate regular and rapid variations of a density]\label{def:multivariate:regular:variation:density}
Let $\bX$ be a $d$-dimensional random vector $\bX$ with a density $f_{\bX}$.
\begin{enumerate}
\item  $\bX$ is called \emph{multivariate regularly varying} with limit function $\lambda: \IR^{2d} \rightarrow \IR_{+}$ (at $\infty$ and on the first orthant), denoted by MRV($\lambda$), if
\begin{align}\label{eq:multivariate:regularly:varying:density}
\lim_{t\rightarrow \infty} \frac{f_{\bX}(t\by)}{f_{\bX}(t\bx)}=\lambda(\bx,\by)> 0\quad  \text{for any}\quad \bx,\by \in \IR_{+}^{d},
\end{align}
provided the limit function $\lambda$ exists.
\item $\bX$ is called \emph{multivariate rapidly varying} (at $\infty$ and on the first orthant), denoted by MRV$(\infty)$ if,
\begin{align*}
\lim_{t\rightarrow \infty} \frac{f_{\bX}(st\bx)}{f_{\bX}(t\bx)}=
\begin{cases}
0, & s>1,\\
\infty, & 0< s < 1,\\
\end{cases}\quad  \text{for any}\quad s >0\quad\text{and}\quad  \bx \in \IR_{+}^{d}.
\end{align*}
\end{enumerate}
\end{definition}

Note that we adopt the definitions of regular and rapid variations of densities for their potential application to MCMC methods where the ratio of target densities $f_{\bX'|\{S=K\}}(\by')/f_{\bX'|\{S=K\}}(\bx')$ at any two points $\bx',\ \by' \in \IR^{d'}$ is of interest; see Appendix~\ref{app:simulation:with:MCMC}.
Taking $\bx=\bone_d$ in~\eqref{eq:multivariate:regularly:varying:density} leads to the standard definition of regular variation introduced, for example, in \citet{resnick2007heavy}.
Regular variation is typically described in terms of probability measures or their survival functions, and these concepts of variations are connected to regular variation of densities through \citet[Theorem~6.4.]{resnick2007heavy}.

The following proposition states that one can find a limit function for $\bX'\ |\ \{S=K\}$ based on that of $\bX$ through the auxiliary random vector $\tilde \bX = (\bX',K-X_d)$.

\begin{proposition}[Multivariate regular and rapid variations of $\bX'\ |\ \{S=K\}$]\label{prop:tail:density:conditional:distribution:constant:sum}
\hspace{2mm}
\begin{enumerate}
  \item Assume that $\tilde \bX= (\bX',K-X_d)$ is MRV($\tilde \lambda$).
Then $\bX'\ |\ \{S=K\}$ is MRV($\lambda'$) with limit function
\begin{align*}
\lambda'(\bx',\by')= \tilde \lambda((\bx',\bone_{d'}\T\bx'),(\by',\bone_{d'}\T\by')),\quad \bx',\by'\in \IR^{d'}_{+}.
\end{align*}
\item If $\tilde \bX$ is MRV($\infty$), then $\bX'\ |\ \{S=K\}$ is MRV($\infty$).
\end{enumerate}
\end{proposition}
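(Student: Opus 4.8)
The plan is to reduce the regular/rapid variation of $\bX'\ |\ \{S=K\}$ to that of the auxiliary vector $\tilde\bX=(\bX',K-X_d)$ by exploiting the density formula~\eqref{eq:conditional:density:X:constant:sum}. First I would observe that, up to the positive constant $1/f_S(K)$ (which does not depend on $\bx'$), the density $f_{\bX'|\{S=K\}}(\bx')$ equals $f_{\bX}(\bx',K-\bone_{d'}\T\bx')$. Hence for any $\bx',\by'\in\IR^{d'}_{+}$ and $t>0$,
\begin{align*}
\frac{f_{\bX'|\{S=K\}}(t\by')}{f_{\bX'|\{S=K\}}(t\bx')}
=\frac{f_{\bX}(t\by',K-\bone_{d'}\T(t\by'))}{f_{\bX}(t\bx',K-\bone_{d'}\T(t\bx'))}.
\end{align*}
The obstacle here is that the last coordinate is $K-t\,\bone_{d'}\T\by'$, which is $-t\,\bone_{d'}\T\by'+o(t)$ rather than exactly $t$ times a fixed vector, so this ratio is not literally of the form $f_{\bX}(t\bv)/f_{\bX}(t\bu)$; the fixed constant $K$ perturbs the scaling. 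The key step is therefore to rewrite it in terms of $\tilde\bX$: since $f_{\tilde\bX}(\bz)=f_{\bX}(z_1,\dots,z_{d'},K-z_d)$ by the unit-Jacobian affine change of variables $\tilde\bX=(\bX',K-X_d)$ from $\bX$, the ratio above equals $f_{\tilde\bX}(t\by',t\,\bone_{d'}\T\by')/f_{\tilde\bX}(t\bx',t\,\bone_{d'}\T\bx')$, i.e.\ $f_{\tilde\bX}(t\tilde\by)/f_{\tilde\bX}(t\tilde\bx)$ with $\tilde\bx=(\bx',\bone_{d'}\T\bx')$ and $\tilde\by=(\by',\bone_{d'}\T\by')$. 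I would need to check that $\tilde\bx,\tilde\by$ lie in the first orthant $\IR^d_{+}$ so that the MRV hypothesis applies — this holds because $x'_1,\dots,x'_{d'}>0$ forces $\bone_{d'}\T\bx'>0$.

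For Part 1, once the ratio is in the form $f_{\tilde\bX}(t\tilde\by)/f_{\tilde\bX}(t\tilde\bx)$, the assumption $\tilde\bX\in\text{MRV}(\tilde\lambda)$ gives directly
\begin{align*}
\lim_{t\to\infty}\frac{f_{\bX'|\{S=K\}}(t\by')}{f_{\bX'|\{S=K\}}(t\bx')}
=\tilde\lambda\bigl((\bx',\bone_{d'}\T\bx'),(\by',\bone_{d'}\T\by')\bigr)=:\lambda'(\bx',\by')>0,
\end{align*}
which is exactly the claimed limit function; positivity of $\lambda'$ is inherited from positivity of $\tilde\lambda$. For Part 2, the same substitution turns $f_{\bX'|\{S=K\}}(st\bx')/f_{\bX'|\{S=K\}}(t\bx')$ into $f_{\tilde\bX}(st\tilde\bx)/f_{\tilde\bX}(t\tilde\bx)$ with $\tilde\bx=(\bx',\bone_{d'}\T\bx')\in\IR^d_{+}$, so the MRV$(\infty)$ property of $\tilde\bX$ (the limit is $0$ for $s>1$ and $\infty$ for $0<s<1$) transfers verbatim to $\bX'\ |\ \{S=K\}$.

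The main subtlety I would want to be careful about is the first one flagged above: the presence of the additive constant $K$ in the last coordinate means one genuinely cannot apply the MRV definition to $\bX$ directly along the ray $t\mapsto(t\bx',K-t\,\bone_{d'}\T\bx')$, and the whole point of introducing $\tilde\bX$ is to absorb that shift into an exact scaling for a different (but explicitly related) random vector. Everything else — the unit Jacobian of $(\bX',S)\mapsto\bX$ and of $\bX\mapsto\tilde\bX$, cancellation of $f_S(K)$, and membership of the evaluation points in the first orthant — is routine bookkeeping. I would also remark that the statement is about tails in the first orthant only, consistent with the standing convention in this subsection, and that the analogous results on the remaining $2^{d'}-1$ orthants follow by the same argument with the obvious sign changes.
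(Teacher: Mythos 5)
Your proof is correct and follows essentially the same route as the paper: both reduce the ratio $f_{\bX'|\{S=K\}}(t\by')/f_{\bX'|\{S=K\}}(t\bx')$ via~\eqref{eq:conditional:density:X:constant:sum} to $f_{\tilde\bX}(t\by',t\bone_{d'}\T\by')/f_{\tilde\bX}(t\bx',t\bone_{d'}\T\bx')$ and then invoke the MRV hypothesis on $\tilde\bX$ at the first-orthant points $(\bx',\bone_{d'}\T\bx')$ and $(\by',\bone_{d'}\T\by')$. Your explicit remark on why the additive constant $K$ forces the detour through $\tilde\bX$ is a helpful clarification but does not change the argument.
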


The sufficient conditions in
Proposition~\ref{prop:tail:density:conditional:distribution:constant:sum} are
straightforward to check 
since
$\tilde \bX$ does not depend on the sum $S$, and the joint distribution of
$\tilde \bX$ can be specified through its marginal distributions and copula.
The margins of $\tilde \bX$ are $\tilde F_j=F_j$, $j=1,\dots,d'$, and
$\tilde F_d(x_d)=\bar F_d(K-x_d)$, and the copula $\tilde C$ of $\tilde \bX$ is
the distribution function of $(U_1,\dots,U_{d'},1-U_d)$ where $\bU\sim C$ is the
copula of $\bX$.  
This enables one to find a limit function for $\tilde \bX$; see, for example, \citet{li2013toward}
and \citet{joe2019tail}.

The following proposition shows that the limit function is determined by the density generator $g$ in the elliptical case.

\begin{proposition}[Multivariate regular and rapid variations for elliptical distribution]\label{prop:mrv:conditional:distribution:elliptical}
Assume $\bX \sim {\mathcal E}_d(\bmu,\Sigma,\psi)$ admits a density with density generator $g$ continuous on $\IR_{+}$.
\begin{enumerate}
\item If $g$ is regularly varying in the sense that
\begin{align*}
\lim_{t\rightarrow \infty}\frac{g(tu)}{g(ts)}=\lambda_g(s,u),\quad s, u >0,
\end{align*}
then $\bX'\ |\ \{S=K\}$ is MRV($\lambda_K$) with
\begin{align*}
\lambda_K(\bx',\by')=\lambda_g({\bx'}\T\Sigma_K^{-1}\bx',\ {\by'}\T\Sigma_K^{-1}\by'),\quad \bx',\ \by' \in \IR^{d'}.
\end{align*}
\item If $g$ is rapidly varying in the sense that
\begin{align*}
\lim_{t\rightarrow \infty}\frac{g(st)}{g(t)}=\begin{cases}
0, & s>1,\\
\infty, & 0<s < 1,\\
\end{cases}
\end{align*}
then $\bX'\ |\ \{S=K\}$ is MRV($\infty$).
\end{enumerate}
\end{proposition}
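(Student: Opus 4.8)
The plan is to reduce the multivariate statements to the one-dimensional (ratio) regular/rapid variation of the density generator $g$, using the explicit description of $\bX'\ |\ \{S=K\}$ from Proposition~\ref{prop:ellipticality:conditional:distribution:constant:sum}. That proposition tells us $\bX'\ |\ \{S=K\}$ is elliptical with positive definite dispersion $\Sigma_K$ and density generator $g_K(t)=g(t+\Delta_K)$, so that, up to a positive multiplicative constant,
\begin{align*}
\frac{f_{\bX'|\{S=K\}}(t\by')}{f_{\bX'|\{S=K\}}(t\bx')}=\frac{g\bigl(q_{\by'}(t)+\Delta_K\bigr)}{g\bigl(q_{\bx'}(t)+\Delta_K\bigr)},\qquad q_{\bv}(t):=\tfrac12(t\bv-\bmu_K)\T\Sigma_K^{-1}(t\bv-\bmu_K).
\end{align*}
Expanding the quadratic form yields $q_{\bv}(t)=\tfrac{t^2}{2}Q_{\bv}+O(t)$ with $Q_{\bv}:=\bv\T\Sigma_K^{-1}\bv$, and $Q_{\bv}>0$ whenever $\bv$ lies in the first orthant because $\Sigma_K^{-1}$ is positive definite; in particular $\bigl(q_{\bv}(t)+\Delta_K\bigr)\big/\bigl(\tfrac{t^2}{2}Q_{\bv}\bigr)\to 1$ as $t\to\infty$.

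For Part 1, I would first record that the assumed existence of $\lambda_g(s,u)=\lim_{t\to\infty}g(tu)/g(ts)$ for all $s,u>0$ forces, by the standard multiplicativity argument, $g$ to be regularly varying with some index $\rho\in\IR$ and $\lambda_g(s,u)=(u/s)^\rho$; this scale invariance of $\lambda_g$ in its pair of arguments is exactly what makes the location shift by $\bmu_K$ and the shift by $\Delta_K$ asymptotically negligible. Concretely I would split
\begin{align*}
\frac{g(q_{\by'}(t)+\Delta_K)}{g(q_{\bx'}(t)+\Delta_K)}=\frac{g(q_{\by'}(t)+\Delta_K)}{g\bigl(\tfrac{t^2}{2}Q_{\by'}\bigr)}\cdot\frac{g\bigl(\tfrac{t^2}{2}Q_{\by'}\bigr)}{g\bigl(\tfrac{t^2}{2}Q_{\bx'}\bigr)}\cdot\frac{g\bigl(\tfrac{t^2}{2}Q_{\bx'}\bigr)}{g(q_{\bx'}(t)+\Delta_K)},
\end{align*}
and argue that the middle factor tends to $\lambda_g(Q_{\bx'},Q_{\by'})$ by applying the hypothesis along $t\mapsto t^2/2\to\infty$, while the two outer factors tend to $1$ by the uniform convergence theorem for regularly varying functions applied to the ratios $(q_{\bv}(t)+\Delta_K)/(\tfrac{t^2}{2}Q_{\bv})\to 1$. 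Since $\lambda_g(Q_{\bx'},Q_{\by'})=(Q_{\by'}/Q_{\bx'})^\rho$ is scale invariant it equals $\lambda_g({\bx'}\T\Sigma_K^{-1}\bx',{\by'}\T\Sigma_K^{-1}\by')$, which is the claimed $\lambda_K$, and it is strictly positive.

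For Part 2 the computation is the same with $\by'=s\bx'$: then $q_{\bx'}(st)+\Delta_K\sim\tfrac{(st)^2}{2}Q_{\bx'}$ and $q_{\bx'}(t)+\Delta_K\sim\tfrac{t^2}{2}Q_{\bx'}$, so the ratio of the two arguments tends to $s^2$. It therefore suffices to know that rapid variation of $g$ implies $g(a(t))/g(b(t))\to 0$ whenever $a(t)/b(t)\to c>1$ (and $\to\infty$ when $0<c<1$); this is the uniform convergence theorem for rapidly varying functions, or alternatively, if the density generator is eventually non-increasing, one may dominate $g(a(t))\le g(\sigma b(t))$ for a fixed $\sigma\in(1,c)$ and conclude directly. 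Taking $c=s^2$ gives that $\bX'\ |\ \{S=K\}$ is MRV$(\infty)$.

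I expect the main obstacle to be precisely the step of replacing $g$ evaluated at $q_{\bv}(t)+\Delta_K$ by $g$ evaluated at the leading term $\tfrac{t^2}{2}Q_{\bv}$: the pointwise ratio limits defining regular and rapid variation do not by themselves absorb the lower-order contributions coming from $\bmu_K$ and $\Delta_K$, so one must invoke the (locally) uniform convergence theorems from the theory of regular variation — which is where continuity, hence measurability, of $g$ enters — and, in the rapidly varying case, either the uniform version on $[\lambda_0,\infty)$ for $\lambda_0>1$ or a mild eventual-monotonicity assumption on $g$.
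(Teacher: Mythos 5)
Your proof follows essentially the same route as the paper's: both reduce to the density generator $g_K(t)=g(t+\Delta_K)$ via Proposition~\ref{prop:ellipticality:conditional:distribution:constant:sum}, expand the quadratic form $(t\bv-\bmu_K)\T\Sigma_K^{-1}(t\bv-\bmu_K)$, and replace its argument by the leading term $\tfrac{t^2}{2}\bv\T\Sigma_K^{-1}\bv$ before invoking the regular or rapid variation of $g$. The only difference is that the paper justifies that replacement by appealing to continuity of $g$ alone, whereas you correctly point out that absorbing the lower-order contributions from $\bmu_K$ and $\Delta_K$ requires the uniform convergence theorem for regularly (resp.\ rapidly) varying functions; your version of this step is the more rigorous one.
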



\begin{example}[Normal and Student $t$ distributions]
\label{example:mrv:densities}
The multivariate Normal distribution has a rapidly varying density generator $g(t)=\exp(-t)$, and thus its corresponding conditional distribution $\bX'\ |\ \{S=K\}$ is also rapidly varying by Proposition~\ref{prop:mrv:conditional:distribution:elliptical} Part 2.
Next, suppose $\bX$ follows a $d$-dimensional Student $t$ distribution with degrees of freedom $\nu \geq 1$.
Its density generator~\eqref{eq:density:generator:student:t} is regularly varying with limit function
\begin{align*}
\lim_{t \rightarrow \infty}\frac{g(tu)}{g(ts)}=\left(\frac{u}{s}\right)^{-\frac{\nu+d}{2}}, \quad u,s>0.
\end{align*}
Consequently, by Proposition~\ref{prop:mrv:conditional:distribution:elliptical} Part 1, $\bX'\ |\ \{S=K\}$ is regularly varying with limit function
\begin{equation*}
\lim_{t \rightarrow \infty} \frac{f_{\bX'|\{S=K\}}(t\by')}{f_{\bX'|\{S=K\}}(t\bx')}=\left(
\frac{
||\Sigma_K^{-\frac{1}{2}}\by'||
}{
||\Sigma_K^{-\frac{1}{2}}\bx'||
}
\right)^{-(\nu+d)},\quad \bx',\by' \in \IR^{d'}_{+},
\end{equation*}
where $||\cdot||$ is an Euclidean norm on $\IR^{d'}$.
\end{example}

\section{Maximum likelihood allocation \rev{and multimodality adjustment}}\label{sec:maximum:likelihood:allocation}

\rev{In this section we investigate how the modality of $\bX\ | \ \{S=K\}$ can be incorporated in risk management.
Under unimodality, the mode of $\bX\ | \ \{S=K\}$ is regarded as the most likely stress scenario covered by the given total capital $K$.
This mode is defined to be a maximum likelihood allocation (MLA) in Section~\ref{subsubsec:definition:assumptions}, and its properties are studied in Section~\ref{subsubsec:properties:mla}.
Under multimodality of $\bX\ | \ \{S=K\}$, a single vector of allocations may obscure multiple risky scenarios as seen in Section~\ref{subsec:motivating:example}.
To overcome this issue, 
we introduce the so-called multimodality adjustment in Section~\ref{subsec:multimodality:adjustment} 
to utilize the knowledge of multimodality of $\bX\ | \ \{S=K\}$ and to increase the soundness of risk allocations.
}


\subsection{Maximum likelihood allocation}\label{subsec:maximum:likelihood:allocation}

\subsubsection{\rev{Definition and assumptions on MLA}}\label{subsubsec:definition:assumptions}

We denote by $\mathcal U_d(K)$ the set of all
$d$-dimensional random vectors $\bX$ such that $\bX$ and $(\bX',S)$ admit
density functions, and that the function $\bx \mapsto f_{\bX}(\bx)\bone_{\{\bx \in \mathcal K_d(K)\}}$ has a unique maximum.  For $\bX\in \mathcal U_d(K)$, $\bX'\ |\ \{S=K\}$ admits a
density through~\eqref{eq:conditional:density:X:constant:sum}, and
$\bx' \mapsto f_{\bX'|\{S=K\}}(\bx')$ has a unique maximum attained by the mode of $\bX'\ |\ \{S=K\}$.
By
Proposition~\ref{prop:unimodality:conditional:distribution:constant:sum},
elliptical random vectors with continuous and decreasing density generators form
a subclass of $\mathcal U_d(K)$.  Although some exchangeable random vectors
possessing negative dependence, such as Model (b) in
Section~\ref{subsec:motivating:example},
may not be included in $\mathcal U_d(K)$, we believe that most loss models used
in risk management practice are contained in $\mathcal U_d(K)$.  As explained
in Section~\ref{subsec:unimodality:conditional:distribution:constant:sum},
uniqueness of the mode of $\bX'\ |\ \{S=K\}$ and its unimodality are different
concepts, and thus the class $\mathcal U_d(K)$ contains multimodal random
vectors in the sense that the density $f_{\bX'|\{S=K\}}$ has multiple local
maximizers (we call them the \emph{local modes} of $\bX'\ | \ \{S=K\}$\rev{; see Definition~\ref{def:local:modes} for their formal definitions}).
\rev{Nevertheless, in this section 
we focus only on the unique global maximizer of
$f_{\bX'|\{S=K\}}$ (not on local ones) since unimodal distributions are the primary object to apply the MLA to.
The multimodal case will then be revisited in Section~\ref{subsec:multimodality:adjustment}.
As we will demonstrate in
Section~\ref{sec:numerical:experiments}, multimodality can be
detected by searching for the modes of $f_{\bX'|\{S=K\}}$.
}


In the following we define the unique mode of $\bX'\ | \ \{S=K\}$ as a risk allocation of $K$.

\begin{definition}[Maximum likelihood allocation]\label{def:maximum:likelihood:allocation}
For $K>0$ and $\bX \in \mathcal U_d(K)$, the \emph{maximum likelihood allocation (MLA)} on a set $\mathcal K \subseteq \mathcal  K_d(K)$ is defined by
\begin{align*}
\bK_{\operatorname{M}}[\bX;\mathcal K]=\argmax\{f_{\bX}(\bx): \bx \in \mathcal K\},
\end{align*}
provided the function $\bx \mapsto f_{\bX}(\bx)\bone_{\{\bx \in \mathcal K\}}$ has a unique maximum.
When $\mathcal K = \mathcal K_d(K)$,
we call it the maximum likelihood allocation.
\end{definition}

By~\eqref{eq:conditional:density:X:constant:sum}, MLA of $K$ on $\mathcal K$ can be equivalently formulated as
\begin{align*}
\bK_{\operatorname{M}}[\bX;\mathcal K]=\argmax\{f_{\bX'|\{S=K\}}(\bx'): (\bx',K-\bone_{d'}\T\bx') \in \mathcal K\}.
\end{align*}

By definition, MLA on $\mathcal K \subseteq \mathcal  K_d(K)$ is an allocation of $K$ in the sense that it satisfies the full allocation property $\bone_d\T\bK_{\operatorname{M}}[\bX;\mathcal K]=K$.
We mainly study the case when $\mathcal K = \mathcal K_d(K)$.
However, as we will see in Section~\ref{subsubsec:properties:mla} and Appendix~\ref{app:subsec:application:core:allocation}, the set $\mathcal K$ can be taken so that $\bK_{\operatorname{M}}[\bX;\mathcal K]$ satisfies some desirable properties for a risk allocation principle.

\subsubsection{Properties of MLA}\label{subsubsec:properties:mla}

We now investigate properties of MLA as a risk allocation principle.
For desirable properties of risk allocation in the case when the capital $K$ is exogenously given as a constant, see \citet{maume2016capital}.
By construction, $\bK_{\operatorname{M}}[\bX;\mathcal K]$ always satisfies the full allocation property \eqref{eq:full:allocation:property}.
The following proposition summarizes other desirable properties of MLA.

\begin{proposition}[Properties of MLA]\label{prop:properties:mla}
Suppose $K>0$ and $\bX \in \mathcal U_d(K)$.
\begin{enumerate}
\item \emph{Translation invariance}: $\bK_{\operatorname{M}}[\bX+\bc;\ \mathcal K_d(K+\bone_{d}\T\bc)]=\bK_{\operatorname{M}}[\bX;\mathcal K_d(K)]+\bc$ for $\bc \in \IR^d$.
\item \emph{Positive homogeneity}: $\bK_{\operatorname{M}}[c\bX;\mathcal K_d(cK)]=c\bK_{\operatorname{M}}[\bX;\mathcal K_d(K)]$ for $c > 0$.
  \item \emph{Symmetry}: For $(i,j)\in \{1,\dots,d\}$, $i\neq j$, let $\tilde \bX$ be a $d$-dimensional random vector such that $\tilde X_j=X_i$, $\tilde X_i=X_j$ and $\tilde X_k=X_k$, $k \in \{1,\dots,d\}\backslash\{i,j\}$.
  If $\smash{\bX\deq \tilde \bX}$, then $\bK_{\operatorname{M}}[\bX;\mathcal K_d(K)]_i=\bK_{\operatorname{M}}[\bX;\mathcal K_d(K)]_j$, where $\bK_{\operatorname{M}}[\bX;\mathcal K_d(K)]_l$ is the $l$th component of $\bK_{\operatorname{M}}[\bX;\mathcal K_d(K)]$ for $l=1,\dots,d$.
  \item  \emph{Continuity}:
  Suppose $\bX_n, \ \bX \in \mathcal U_d(K)$ have densities $f_n$ and $f$ for $n=1,2,\dots,$ respectively.
  If $f_n$ is uniformly continuous and bounded for $n=1,2,\dots,$ and $\bX_n \rightarrow \bX$ weakly, then $\lim_{n\rightarrow \infty}\bK_{\operatorname{M}}[\bX_n;\mathcal K_d(K)]$ $=\bK_{\operatorname{M}}[\bX;\mathcal K_d(K)]$.
\end{enumerate}
\end{proposition}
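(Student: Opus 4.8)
The unifying idea for Parts~1--3 is to read $\bK_{\operatorname{M}}[\,\cdot\,;\mathcal K]$ as the solution of a constrained maximisation $\argmax\{f(\bx):\bx\in\mathcal K\}$ and to track how \emph{both} the objective density and the constraint set $\mathcal K$ transform; since the transformations involved are affine bijections of $\IR^d$, uniqueness of the maximiser is automatically preserved, so each transformed problem again lies in the appropriate class $\mathcal U_d(\cdot)$ and the left-hand sides are well defined. For Part~1 I would combine $f_{\bX+\bc}(\bx)=f_{\bX}(\bx-\bc)$ with $\mathcal K_d(K+\bone_d\T\bc)=\mathcal K_d(K)+\bc$, so that the substitution $\by=\bx-\bc$ turns $\argmax\{f_{\bX+\bc}(\bx):\bx\in\mathcal K_d(K+\bone_d\T\bc)\}$ into $\bc+\argmax\{f_{\bX}(\by):\by\in\mathcal K_d(K)\}$. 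For Part~2 I would use $f_{c\bX}(\bx)=c^{-d}f_{\bX}(\bx/c)$ and $\mathcal K_d(cK)=c\,\mathcal K_d(K)$ for $c>0$; the substitution $\by=\bx/c$ yields the claim once one notes that the positive constant $c^{-d}$ is irrelevant to the $\argmax$. For Part~3, let $P$ be the transposition matrix swapping coordinates $i$ and $j$, so $P=P^{-1}$, $|\det P|=1$ and $\bone_d\T P=\bone_d\T$; in particular $\mathcal K_d(K)$ is $P$-invariant. Since $f_{P\bX}=f_{\bX}\circ P$ and $P$ maps $\mathcal K_d(K)$ onto itself, the substitution $\bx\mapsto P\bx$ gives $\bK_{\operatorname{M}}[P\bX;\mathcal K_d(K)]=P\,\bK_{\operatorname{M}}[\bX;\mathcal K_d(K)]$, while $\tilde\bX=P\bX\deq\bX$ gives $\bK_{\operatorname{M}}[P\bX;\mathcal K_d(K)]=\bK_{\operatorname{M}}[\bX;\mathcal K_d(K)]=:\bbm$; hence $\bbm=P\bbm$, i.e.\ $\bbm_i=\bbm_j$.

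For Part~4 I would first reduce to $\IR^{d'}$. Writing $g_n(\bx')=f_n(\bx',K-\bone_{d'}\T\bx')$ and $g(\bx')=f(\bx',K-\bone_{d'}\T\bx')$ (the unit-Jacobian densities of $(\bX_n',S_n)$ and $(\bX',S)$ restricted to the slice $\{S=K\}$, up to the irrelevant normalising constants $f_{S_n}(K)$ and $f_S(K)$), the MLA is $\bK_{\operatorname{M}}[\bX_n;\mathcal K_d(K)]=(\bbm_n',K-\bone_{d'}\T\bbm_n')$ with $\bbm_n'=\argmax g_n$, and similarly for $\bX$, so it suffices to show $\bbm_n'\to\bbm'$. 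The plan is: (i) transfer the weak convergence $\bX_n\to\bX$ through the affine map $\bX\mapsto(\bX',S)$ and, using uniform (in $n$) equicontinuity and boundedness of the $f_n$ together with tightness of $\{\bX_n\}$, conclude via an Arzel\`a--Ascoli argument that $g_n\to g$ locally uniformly on $\IR^{d'}$; (ii) argue that all the maximisers $\bbm_n'$ lie in one fixed compact set; (iii) apply the standard $\argmax$-stability lemma: any subsequential limit $\bbm_\ast$ of $(\bbm_n')$ satisfies $g(\bbm_\ast)=\sup g$, so uniqueness of the maximiser of $g$ (guaranteed by $\bX\in\mathcal U_d(K)$) forces $\bbm_\ast=\bbm'$, and since the whole sequence is confined to a compact, $\bbm_n'\to\bbm'$.

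The main obstacle is Part~4, specifically steps~(i) and (ii), since weak convergence by itself says nothing about pointwise values of densities or about modes. For (ii) I would use that equicontinuity plus a uniform upper bound on $f_n$ converts a pointwise bound $g_n(\bx')\geq\delta$ into $f_{(\bX_n',S_n)}\geq\delta/2$ on a ball of $n$-independent radius around $(\bx',K)$, so a positive, $n$-independent amount of mass of $(\bX_n',S_n)$ sits near $(\bx',K)$; tightness then confines all $\bbm_n'$ to a single compact, which also localises the convergence needed in (i). If the hypothesis is meant only per $n$, a uniform modulus of continuity (which is what makes the statement correct) has to be added, and I would flag this explicitly. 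The remaining verifications --- the density push-forward formulas, the identities $\mathcal K_d(K+\bone_d\T\bc)=\mathcal K_d(K)+\bc$ and $\mathcal K_d(cK)=c\,\mathcal K_d(K)$, and the elementary $\argmax$-stability lemma --- are routine.
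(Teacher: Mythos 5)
Your Parts 1--3 are essentially the paper's proof: the same substitution/invariance arguments (translation, scaling, coordinate transposition) applied to the constrained maximisation, with uniqueness of the maximiser doing the rest; the paper phrases them through the conditional density $f_{\bX'|\{S=K\}}$ rather than through the transformation of $\mathcal K_d(K)$, but the content is identical (and your inclusion of the Jacobian factor $c^{-d}$ in Part~2, harmless for the $\argmax$, is actually more precise than the paper's $f_{c\bX}(\bx)=f_{\bX}(\bx/c)$).

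For Part~4 the routes diverge in the tool used to obtain convergence of the densities. The paper does not build the local uniform convergence by hand: it asserts that uniform continuity and boundedness of each $f_n$ make the sequence ``asymptotically uniformly equicontinuous and bounded'' and then invokes Theorem~2 of \citet{sweeting1986converse} (a converse to Scheff\'e's theorem) to get $f_n\to f$ uniformly on all of $\IR^d$; after that it runs the same $\argmax$-stability argument as you, concluding via uniqueness that $\limsup_n\bx_n^\ast=\liminf_n\bx_n^\ast=\bx^\ast$. Your plan replaces the citation by an Arzel\`a--Ascoli argument from tightness plus equicontinuity, which is more self-contained but needs the same uniform-in-$n$ modulus of continuity that Sweeting's hypothesis requires. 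Two of your flags are well taken and in fact point at steps the paper leaves implicit: (a) uniform continuity of each $f_n$ \emph{individually} does not by itself give the uniform-in-$n$ equicontinuity that Sweeting's theorem (or your Arzel\`a--Ascoli step) needs, so the hypothesis really should be read as a uniform modulus; and (b) the paper never argues that the maximisers $\bx_n^\ast$ stay in a compact set before taking $\limsup/\liminf$ and evaluating $g$ there --- your tightness-plus-mass argument in step~(ii) supplies exactly the missing localisation. So your proposal is not only a valid alternative but patches two gaps in the published argument; the paper's version buys brevity by outsourcing the analytic work to Sweeting's theorem.
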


Translation invariance states that a sure loss $\bc \in \IR^d$ requires the same amount of risk allocation and the rest of the total capital is allocated to the random loss $\bX$.
Positive homogeneity means that, for a proportion $c>0$, 100$c$\% of the loss $\bX$ requires 100$c$\% of the total capital $K$ and the resulting MLA of $c\bX$ is 100$c$\% of the allocation of $K$ to $\bX$.
Symmetry implies that, if exchanging two units does not change the distribution of the joint loss, then equal amounts of capitals are allocated to them.
Finally, continuity ensures that if MLA is calculated based on an estimated model $f_n$ of $f$, then this estimate of MLA is close to the true MLA.
\rev{Note that the assumption that $\bX_n$, $n=1,2,\dots$, and $\bX$ belong to $\mathcal U_d(K)$ is esssential so that the MLAs of $\bX_n$, $n=1,2,\dots$, and $\bX$ are well-defined.}

Next we discuss properties that need to be considered separately.

\begin{enumerate}
\item \emph{RORAC compatibility and core compatibility}:\vspace{2mm}\\
\emph{RORAC compatibility} and \emph{core compatibility} are important properties of risk allocations since either of them characterizes Euler allocation; see \citet{tasche1999risk} and \citet{denault2001coherent}.
However, the definitions of these properties are not meaningful when $K$ is exogenously given as a constant.
Moreover, similar constraints as in core compatibility can be additionally imposed on $\mathcal K_d(K)$ so that the resulting MLA is core-compatible; see Appendix~\ref{app:subsec:application:core:allocation} for details.

\item \emph{Riskless asset}:\vspace{2mm}\\
The \emph{riskless asset} condition requires the sure loss $X_j = c_j$ a.s.\ for $c_j \in \IR$ to be covered by the amount of allocated capital $c_j$.
This property needs to be considered separately since in this case $\bX$ does not admit a density.
Suppose that $X_j=c_j \in \IR$ a.s.\ for $j \in I \subseteq \{1,\dots,d\}$ and that $\bX_{-I}=(X_j,j\in \{1,\dots,d\}\backslash I)$ admits a density $f_{\bX_{-I}}$.
Since
\begin{align}\label{eq:riskless:asset:equality}
(\bX_{I},\bX_{-I})\ |\ \{S=K\}&\deq(\bc,\bX_{-I})\ |\ \{\bone_{|-I|}\T\bX_{-I}=K-\bone_{|I|}\T\bc\}
\deq (\bc,\bX_{-I}\ |\ \{\bone_{|-I|}\T\bX_{-I}=K-\bone_{|I|}\T\bc\}),
\end{align}
any realization $\bx$ of $\bX\ |\ \{S=K\}$ satisfies $\bx_{I}=\bc$ and the likelihood of $\bx$ is quantified through the value of the density $f_{\bX_{-I}|\{\bone_{|-I|}\T\bX_{-I}=K-\bone_{|I|}\T\bc\}}(\bx_{-I})$.
According to this discussion, a natural extension of the definition of MLA to such a random vector $\bX$ is
\begin{align}\label{eq:definition:mla:extension:constant}
\bK_{\operatorname{M}}[\bX;\mathcal K_d(K)]_{I}=\bc\quad\text{and}\quad
\bK_{\operatorname{M}}[\bX;\mathcal K_d(K)]_{-I}= \bK_{\operatorname{M}}[\bX_{-I};\mathcal K_{|-I|}(K-\bone_{|I|}\T\bc)],
\end{align}
which is compatible with the riskless asset property.

\item \emph{Allocation under comonotonicity}:\vspace{2mm}\\
Suppose $\bX$ is a comonotone random vector with continuous and strictly increasing margins $\sqc{F}{d}$.
As seen in Example~\ref{ex:conditional:distribution:constant:sum:comonotonicity}, it holds that $\bX\ | \ \{S=K\}=(F_1^{-1}(u^{\ast}),\dots,F_d^{-1}(u^{\ast}))$ a.s., where $u^\ast \in [0,1]$ is the unique solution of $\sum_{j=1}^d F_j^{-1}(u)=K$.
According to the extended definition of MLA~\eqref{eq:definition:mla:extension:constant}, we have that
\begin{align*}
\bK_{\operatorname{M}}(\bX;\mathcal K_d(K))=(F_1^{-1}(u^{\ast}),\dots,F_d^{-1}(u^{\ast})).
\end{align*}

\end{enumerate}

\subsubsection{\rev{Discussion on MLA}}\label{subsubsec:discussion:mla}
We now discuss whether MLA is an appropriate risk allocation principle, and also compare MLA with Euler allocation.
Here we define Euler allocation by $\E[\bX\ | \ \{S=K\}]$, which are VaR contributions~\eqref{eq:VaR:contributions} with $K=\VaR{p}{S}$ for some confidence level $p\in(0,1)$.
As shown in Proposition~\ref{prop:properties:mla}, MLA possesses properties naturally required as an allocation such as translation invariance, positive homogeneity and riskless asset.
Euler allocation also satisfies these properties since $\E[\bX+\bc\ | \ \{\bone_d\T(\bX+\bc)=K+\bone_d\T\bc\}]=\E[\bX\ | \ \{\bone_d\T\bX=K\}]+\bc$ for $\bc \in \IR^d$ (translation invariance), $\E[c\bX\ | \ \{\bone_d\T(c\bX)=cK\}]=c\E[\bX\ | \ \{\bone_d\T\bX=K\}]$ for $c>0$ (positive homogeneity) and the riskless asset property holds by taking expectation on the both sides of the first equality in~\eqref{eq:riskless:asset:equality}.
\rev{See Appendix~\ref{app:properties:not:hold:MLA} for properties that neither MLA nor Euler allocation satisfy.}
Note that by Proposition~\ref{prop:ellipticality:conditional:distribution:constant:sum} and Proposition~\ref{prop:unimodality:conditional:distribution:constant:sum} Part 1, Euler and maximum likelihood allocations coincide when $\bX$ is elliptically distributed.
\rev{Therefore, the economic justifications of Euler allocation, such as RORAC compatibility and core-compatibility, also hold for MLA when $\bX$ is elliptical.}
\rev{Moreover, through the process of estimating a MLA, one can detect multimodality of $\bX'\ | \ \{S=K\}$ and discover hidden risky scenarios based on which one can evaluate the soundness of risk allocations.}
On the other hand, the main disadvantage of MLA compared with Euler allocation is that estimating modes becomes more difficult than estimating a mean as the dimension of the portfolio becomes larger.
\rev{Furthermore, MLA is not well-defined for distributions whose $\argmax\{f_{\bX}(\bx): \bx \in \mathcal K_d(K)\}$ is not a single point.
Finally, MLA may ignore the behavior of $\bX\ | \ \{S=K\}$ other than its mode.}
Considering these aspects, we believe that MLA \rev{itself may not be an appropriate risk allocation principle, but its estimation procedure is beneficial
for discovering hidden multiple scenarios likely to occur in the stressed situation $\{S=K\}$, for assessing the soundness of risk allocations in stress testing applications, and eventually for constructing more sound risk allocations based on multiple  scenarios as we will consider in Section~\ref{subsec:multimodality:adjustment}.}


\subsection{Multimodality adjustment of risk allocations}\label{subsec:multimodality:adjustment}

Multiple local modes of $\bX\ | \ \{S=K\}$ can be discovered in the process of estimating MLAs.
In this section, we discuss how to utilize the local modes, and introduce the so-called multimodality adjustment to increase the soundness of risk allocations under multimodality of $\bX\ | \ \{S=K\}$.

\subsubsection{Definition of multimodality adjustment}\label{subsubsec:definition:multimodality:adjustment}

Suppose that $M\in \IN$ scenarios $\sqc{\bK}{M} \in \mathcal K_d(K)$ are found with corresponding probability weights $\sqc{w}{M} \in [0,1]$ such that $\sum_{m=1}^M w_m=1$.
\rev{A typical choice of the scenario set $\{\sqc{\bK}{M}\}$ is the set of local modes of $\bx \mapsto f_{\bX}(\bx)\bone_{\{\bx \in \mathcal K_d(K)\}}$ (assumed to be a finite set), or possibly those belonging to a superlevel set at a certain level of plausibility.
The probability weight $w_m$ typically represents the likelihood of the scenario $\bK_m$ to occur, for instance, $w_m \propto f_{\bX}(\bK_m)/f_S(K)$.
Moreover, experts' assessments of the impact of the loss $\bK_m$ on the portfolio $\bX$ can also be incorporated.}
Multimodality adjustment is then defined as follows.

\begin{definition}[Multimodality adjustment of risk allocations]\label{def:multimodality:adjustment:risk:allocation}
Let $M\in \IN$ be the number of scenarios, $\mathcal X =\{\bK_1,\dots,\bK_M\}$ be the set of scenarios where $\bK_m \neq \bK_{m'}$ for any $m,\ m' \in \{1,\dots,M\}$ such that $m\neq m'$, and $\bw=(\sqc{w}{M})$ be the associated probability weights such that $\sum_{m=1}^M w_m=1$.
Then the \emph{multimodality-adjusted allocated capital} is defined by
\begin{align}\label{eq:multimodality:adjusted:risk:allocation}
\bK_{\bw,\mathcal X, \Lambda} =\bar \bK_{\bw, \mathcal X} + \sum_{m=1}^M w_m \blambda_m \circ (\bK_m-\bar \bK_{\bw, \mathcal X} )^{+},
\end{align}
where $\bar \bK_{\bw, \mathcal X}=\sum_{m=1}^M w_m \bK_m$ is the \emph{baseline allocation}, $\Lambda=(\blambda_1,\dots,\blambda_M) \in \IR_{+}^{d \times M}$ is the matrix of \emph{multimodality loading parameters}, $\bx\circ \by = (x_1y_1,\dots, x_d y_d)$ for $\bx,\ \by \in \IR^d$ and $\bx^{+}=(\max(x_1,0),\dots,\max(x_d,0))$ for $\bx \in \IR^d$.
We call the second term $\sum_{m=1}^M w_m \blambda_m \circ (\bK_m-\bar \bK_{\bw, \mathcal X} )^{+}$ of~\eqref{eq:multimodality:adjusted:risk:allocation} the \emph{multimodality adjustment}.
\end{definition}

\rev{Unlike MLA, the multimodality-adjusted allocated capital can be well-defined even if the global mode of $\bX\ | \ \{S=K\}$ is not unique.}
The capital~\eqref{eq:multimodality:adjusted:risk:allocation} consists of the baseline allocation and the additional loading to cover the variability of scenarios.
The baseline allocation is understood as an allocated capital before adjustment of multimodality.
Therefore, $\bar \bK_{\bw, \mathcal X}$ in~\eqref{eq:multimodality:adjusted:risk:allocation} can be replaced by Euler allocation if one requires its economic justification such as RORAC compatibility and core-compatibility.
To explain the multimodality adjustment, suppose that the scenario $\{\bX=\bK_m\}$ occurs with probability $w_m$.
Under this scenario, the portfolio incurs the loss (or profit) $\bK_m-\bar \bK_{\bw, \mathcal X}$.
When $\blambda_m=\bone_d$, the actual amount of loss $(\bK_m-\bar \bK_{\bw, \mathcal X})^{+}$ contributes to the multimodality adjustment $\sum_{m=1}^M w_m \blambda_m \circ (\bK_m-\bar \bK_{\bw, \mathcal X} )^{+}$.
However, this choice of $\blambda_m$ is too conservative, and smaller values of $\blambda_m$ are typically more reasonable since both $\bK_m$ and $\bar \bK_{\bw, \mathcal X}$ sum up to $K$ and thus losses of some units imply profits of others.
Therefore, losses of some units can be compensated by the profits of other units, and the multimodality loading parameter $\blambda_m$ can be determined by such risk mitigation or a corresponding insurance contract.

\subsubsection{Properties of the multimodality adjustment}\label{subsubsec:properties:multimodality:adjustment}

Next we study properties of $\bK_{\bw,\mathcal X, \Lambda}$. 
For certain choices of $\bw$ and $\mathcal X$, the capital $\bK_{\bw,\mathcal X, \Lambda}$ can be shown to satisfy translation invariance, positive homogeneity, riskless asset and symmetry; see Appendix~\ref{app:sec:further:properties:multimodality:adjustment} for details.
In the remainder of this section, we will verify that $\bK_{\bw,\mathcal X, \Lambda}$ measures the risk of multimodality from various viewpoints.
First, if $M=1$, then $\bK_{\bw,\mathcal X, \Lambda} =\bar \bK_{\bw, \mathcal X}$ and thus the multimodality adjustment is zero.
Second, suppose that $M\geq 2$ and $w_m>0$ for $m=1,\dots,M$. Then $\bK_{\bw,\mathcal X, \Lambda} =\bar \bK_{\bw, \mathcal X}$ if and only if
$\lambda_{j,m}=0$ for all $j=1,\dots,d$ and $m=1,\dots,M$ such that $K_{m,j}>\bar K_{\bw,\mathcal X,j}$.
Therefore, under multimodality, the multimodality adjustment is zero if and only if losses of some units of the portfolio are completely compensated by profits of others.
Finally, $\bK_{\bw,\mathcal X, \Lambda}$ is increasing with respect to the variability of the set of scenarios, which can be understood as a degree of multimodality.
To see this, suppose that $\blambda_1=\cdots=\blambda_M=\blambda$ for some $\blambda \in \IR_{+}^d$, and denote by $\bY$ the discrete random vector taking points $\bK_1,\dots,\bK_M$ with probabilities $\sqc{w}{M}$.
Then the multimodality-adjusted allocated capital~\eqref{eq:multimodality:adjusted:risk:allocation} can be written as
\begin{align}\label{eq:multimodality:adjusted:risk:allocation:stop:loss:representation}
\bK_{\bw,\mathcal X, \Lambda}= \E[\bY] + \blambda\circ \E[(\bY-\E[\bY])^{+}].
\end{align}
Variability of the set of scenarios can then be compared by the so-called convex order of $\sqc{Y}{d}$.
For two $\IR$-valued random variables $Y$ and $Y'$, $Y'$ is said to be larger than $Y$ in the \emph{convex order}, denoted as $Y \leq_{\text{cx}}Y'$, if $\E[\phi(Y)]\leq \E[\phi(Y')]$ for all convex functions $\phi:\IR\rightarrow \IR$ provided the expectations exist; see \citet{shaked2007stochastic} for a comprehensive reference.
Roughly speaking, convex order compares the variability of random variables and $Y'$ shows more variability than $Y$ if $Y \leq_{\text{cx}}Y'$;
for instance, $Y \leq_{\text{cx}}Y'$ implies $\E[Y]=\E[Y']$, $\Var(Y)\leq \Var(Y')$, $\text{ess.inf}(Y')\leq \text{ess.inf}(Y)$, $\text{ess.sup}(Y)\leq \text{ess.sup}(Y')$ and $\E[(Y-a)^{+}]\leq \E[(Y'-a)^{+}]$ for all $a \in \IR$.
Therefore, for two sets of scenarios $\mathcal X$ and $\mathcal X'$ with associated probabilities $\bw$ and $\bw'$, if one shows more variability than the other in the sense that the corresponding discrete random variables satisfy $Y_j\leq_{\text{cx}}Y_j'$ for some $j \in \{1,\dots,d\}$, then it holds that $(\bK_{\bw,\mathcal X, \Lambda})_j\leq (\bK_{\bw',\mathcal X', \Lambda})_j$.

\rev{
\begin{remark}[Multimodality adjustment for general sets of scenarios]
Representation~\eqref{eq:multimodality:adjusted:risk:allocation:stop:loss:representation} bears structural resemblance to Gini shortfall allocations introduced in \citet{furman2017gini}, and~\eqref{eq:multimodality:adjusted:risk:allocation:stop:loss:representation} indicates a possible extension of the multimodality adjustment to the case when the set of scenarios is not finite.
For instance, by taking $\mathcal X = \{\bx \in \IR^d: \bone_d\T\bx \geq \VaR{p}{S}\}$ and $w(\bx)=f_{\bX|\{S\geq \VaR{p}{S}\}}(\bx)$,~\eqref{eq:multimodality:adjusted:risk:allocation:stop:loss:representation} can be interpreted as multimodality-adjusted Euler allocations of Expected Shortfall since~\eqref{eq:multimodality:adjusted:risk:allocation:stop:loss:representation} yields
\begin{align*}
\bK_{\bw,\mathcal X, \Lambda}=\ES{p}{X_j;S}+ \blambda \circ \E[(\bX-\ES{p}{X_j;S})^{+}\ | \ \{S\geq \VaR{p}{S}\}],
\end{align*}
where $\ES{p}{X_j;S}=\E[\bX\ | \ \{S\geq \VaR{p}{S}\}]$ is the Euler allocation of $K=\ES{p}{S}$ as derived in~\eqref{eq:ES:contributions}.
\end{remark}
}


We end this section with a remark on the case when multiple measures or models are considered as different scenarios and how to incorporate these scenarios into multimodality-adjusted allocated capitals.

\begin{remark}[Multimodality adjustment for different measures]
A single model of a risk may not be sufficient to manage the risk due to changes of an economic situation or due to model uncertainty.
For a further risk assessment, it may be useful to consider multiple measures
$\bQ_1,\dots,\bQ_S$ where $\bQ_s$ is a probability measure on $(\Omega,\mathcal A)$ and $F_{\bX}^{\bQ_s}$ is the distribution function of $\bX$ under $\bQ_s$ for $s=1,\dots,S$.
These multiple measures can be incorporated into the scenario analysis by, for example, considering the (componentwise) maximum of the multimodality-adjusted allocated capitals $\mathcal K_{\bw,\mathcal X,\Lambda}$  calculated based on $F_{\bX}^{\bQ_s}$ for $s=1,\dots,S$, or considering their mixture with respect to probabilities $q_1,\dots,q_S$ where $q_s$ is associated to the scenario $\bQ_s$ determined, for example, proportionally to the sample size available for the distribution $F_{\bX}^{\bQ_s}$.
\end{remark}


\section{Numerical experiments}\label{sec:numerical:experiments}
In this section we conduct an empirical and simulation study to compute Euler
and maximum likelihood allocations, and compare them for various models.
Simulation of the conditional distribution given a constant sum is in general
challenging.  Throughout this section, we adopt \emph{(crude)
  Monte Carlo (MC)} method to simulate $\bX'\ | \ \{S=K\}$ according to which
unconditional samples from $\bX$ are first generated and those falling in the
region
$\mathcal K_d(K,\delta)=\{\bx \in \IR^d: K-\delta<\bone_d\T\bx<K+\delta\}$ for a
sufficiently small $\delta>0$ are then extracted. The extracted samples are
standardized via $K X_{j}/\sum_{j=1}^d X_{j}$ so that their componentwise sum
equals $K$.  Finally the standardized samples are used as pseudo-samples from
$\bX'\ | \ \{S=K\}$.  See Appendix~\ref{app:subsec:simulation:HMC} for the
potential bias produced by this method, and more sophisticated simulation
approaches of $\bX'\ | \ \{S=K\}$ based on MCMC methods.
\rev{All experiments are run on a MacBook Air with 1.4 GHz Intel Core i5
processor and 4 GB 1600 MHz of DDR3 RAM.}

\subsection{Empirical study}\label{subsec:empirical:study}
In this section we estimate MLA nonparametrically for real financial data.
We consider daily log-returns of the stock indices FTSE $X_{t,1}$, S\&P 500 $X_{t,2}$ and DJI $X_{t,3}$ from January 2, 1990 to March 25, 2004, which contains $3713$ days and thus $T=3712$ log-returns.
We consider two portfolios (a) $\bX^{\text{pos}}_t=(X_{t,1},X_{t,2},X_{t,3})$ and (b) $\bX_t^{\text{neg}}=(X_{t,1},-X_{t,2},X_{t,3})$.
For each portfolio, we aim at allocating the capital $K=1$ based on the conditional loss distribution at time $T+1$ given the history up to and including time $T$.
Taking into account the stylized facts of stock returns listed in Chapter 3 of \citet{mcneil2015quantitative} (such as unimodality, heavy-tailedness and volatility clusters), we adopt a copula-GARCH model with marginal skew-$t$ innovations (ST-GARCH; see, for example, \citet{jondeau2006copula} and \citet{huang2009estimating}).
We utilize a GARCH$(1,1)$ model with skew-$t$ innovations with degrees of freedom $\nu_j>0$ and skewness parameter $\gamma_j>0$ for the $j$th marginal time series.
That is, within a fixed time period $\{1,\dots,T+1\}$, the $j$th return series $(X_{1,j},\dots,X_{T+1,j})$ follows
\begin{align*}
X_{t,j}=\mu_j + \sigma_{t,j}Z_{t,j},\quad \sigma_{t,j}^2 = \omega_{j}+\alpha_{j}X_{t-1,j}^2+\beta_{j}\sigma_{t-1,j}^2,\quad Z_{t,j}\iidsim \text{ST}(\nu_j,\gamma_j),\quad j=1,\dots,d,
\end{align*}
where $\omega_{j} >0,\alpha_{j},\beta_{j}\geq 0$, $\alpha_{j}+\beta_{j}<1$, and $Z_{t,j}$ follows a skew-$t$ distribution $\text{ST}(\nu_j,\gamma_j)$ with density given by
\begin{align}\label{eq:density:skew:t:distribution}
f_{j}(x_j; \nu_j,\gamma_j)=\frac{2}{\gamma_j+\frac{1}{\gamma_j}}\left\{t(x_j,\nu_j)1_{[x_j\geq 0]} + t(\gamma_j x_j,\nu_j)1_{[x_j< 0]}\right\},
\end{align}
where $t(x,\nu)$ is the density function of a Student $t$ distribution with degrees of freedom $\nu>0$ and a skewness parameter $\gamma >0$ with $\gamma=1$ corresponding to the standard symmetric case; see \citet{fernandez1998bayesian} for more details.
The copula among the stationary process $\bZ_t = (Z_{t,1},\dots,Z_{t,d})$, denoted as $C$, is estimated nonparametrically.
Under this model, the joint distribution of the returns $\bX_{T+1|\mathcal F_T}=(X_{T+1,1|\mathcal F_{T}},\dots,X_{T+1,d|\mathcal F_{T}})$ has marginal distributions ST$(\mu_j,\sigma_{t+1,j}^2,\nu_j,\gamma_j)$, $j=1,\dots,d$, and a copula $C$, where ST$(\mu_j,\sigma_{t+1,j}^2,\nu_j,\gamma_j)$ is a skew-$t$ distribution with density $f_{j}(\frac{x_j-\mu_j}{\sigma_{t+1,j}}; \nu_j,\gamma_j)$ with $f_{j}(\cdot;\nu_j,\gamma_j)$ defined in \eqref{eq:density:skew:t:distribution}.
Parameters of the ST-GARCH(1,1) models are estimated with the maximum likelihood method; the results are summarized in Table~\ref{table:parameters:estimators:ST:GARCH}.

\begin{table}[t]
  \caption{Maximum likelihood estimates and estimated standard errors of the ST-GARCH(1,1) model $X_{t,j}=\mu_j + \sigma_{t,j}Z_{t,j}$ with $\sigma_{t,j}^2 = \omega_{j}+\alpha_{j}X_{t-1,j}^2+\beta_{j}\sigma_{t-1,j}^2$ and $Z_{t,j}\iidsim \text{ST}(\nu_j,\gamma_j)$ for $j=1,2,3$.
  }\label{table:parameters:estimators:ST:GARCH}
  \centering
  \begin{tabular}{l
    S[table-format=1.3, table-sign-mantissa]
    S[table-format=1.3]
    S[table-format=1.3]
    S[table-format=1.3]
    S[table-format=1.3]
    S[table-format=1.3]
    }
    \toprule
    & \multicolumn{1}{c}{$\mu_j$} & \multicolumn{1}{c}{$\omega_j$} & \multicolumn{1}{c}{$\alpha_j$} & \multicolumn{1}{c}{$\beta_j$} & \multicolumn{1}{c}{$\gamma_j$} & \multicolumn{1}{c}{$\nu_j$}\\
    \midrule
    $X_{t,1}^\text{pos}=X_{t,1}^\text{neg}$ & 0.053 & 0.006 & 0.052 & 0.943 & 0.969 & 6.414 \\
    SE & 0.013 & 0.002 & 0.008 & 0.008 & 0.021 & 0.663 \\[3mm]
    $X_{t,2}^\text{pos}$ & 0.050 & 0.003 & 0.049 & 0.950 & 0.983 & 6.265 \\
    SE & 0.013 & 0.001 & 0.007 & 0.007 & 0.021 & 0.659 \\[3mm]
    $X_{t,2}^\text{neg}$ & -0.050 & 0.003 & 0.049 & 0.950 & 1.018 & 6.265 \\
    SE & 0.013 & 0.001 & 0.007 & 0.007 & 0.022 & 0.659 \\[3mm]
    $X_{t,3}^\text{pos}=X_{t,3}^\text{neg}$ & 0.031 & 0.011 & 0.071 & 0.920 & 0.966 & 10.000 \\
    SE & 0.014 & 0.003 & 0.009 & 0.010 & 0.023 & 1.309 \\
    \bottomrule
  \end{tabular}
\end{table}

For each case of (a) $\bX^\text{pos}$ and (b) $\bX^\text{neg}$, we estimate the Euler allocation and MLA by a resampling method.
After extracting the marginal standardized residuals, we build their pseudo-observations as a pseudo-sample from $C$.
We then generate samples of size $N=3712$ by resampling with replacement.
The samples from $C$ are then marginally transformed by skew-$t$ distributions with parameters specified as in Table~\ref{table:parameters:estimators:ST:GARCH}.
From these samples of $\bX_{T+1|\mathcal F_T}$, we extract the subsamples falling in the region $\mathcal K_d(K,\delta)= \left\{\bx \in \IR^3: K-\delta < \sum_{j=1}^3 x_j <  K + \delta \right\} $ where $\delta = 0.3$.
These samples are then standardized via $K X_{t,j}/\sum_{j=1}^d X_{t,j}$.
Scatter plots of the first two components of these samples are shown in Figure~\ref{fig:plots:empirical:plot}.

The 3712 data points lead to 354 and 558 samples from $\bX_{T+1|\mathcal F_T}^{\text{pos}}$ and $\bX_{T+1|\mathcal F_T}^{\text{neg}}$ on $\mathcal K_d(K,\delta)$, respectively.
Based on these conditional samples, we estimate the Euler allocation $\E[\bX\ |\ \{S=K\}]$ and the MLA, that is, the mode of $f_{\bX|\{S=K\}}$ provided it is unique.
\rev{
The (possibly multiple) modes were estimated by the function \textsf{kms} \citep[\emph{kernel mean shift clustering}, proposed by][]{fukunaga1975estimation} of the \textsf{R} package \textsf{ks}; see \citet{carreira2015review} and \citet{chen2016comprehensive} for details and for other methods of estimating modes.}
\rev{For the computational times, computing MLAs took 0.35 
seconds in Case (a) and 0.43 
seconds in Case (b) whereas, in both cases, the Euler allocations were computed almost instantly.}
As was expected from the ellipticality of the scatter plots in Figure~\ref{fig:plots:empirical:plot}, the unique mode was discovered in each case.
The first two components of the two allocations are pointed out in Figure~\ref{fig:plots:empirical:plot}.

\begin{figure}[t]
  \centering
  \vspace{-35mm}
  \includegraphics[width=15 cm]{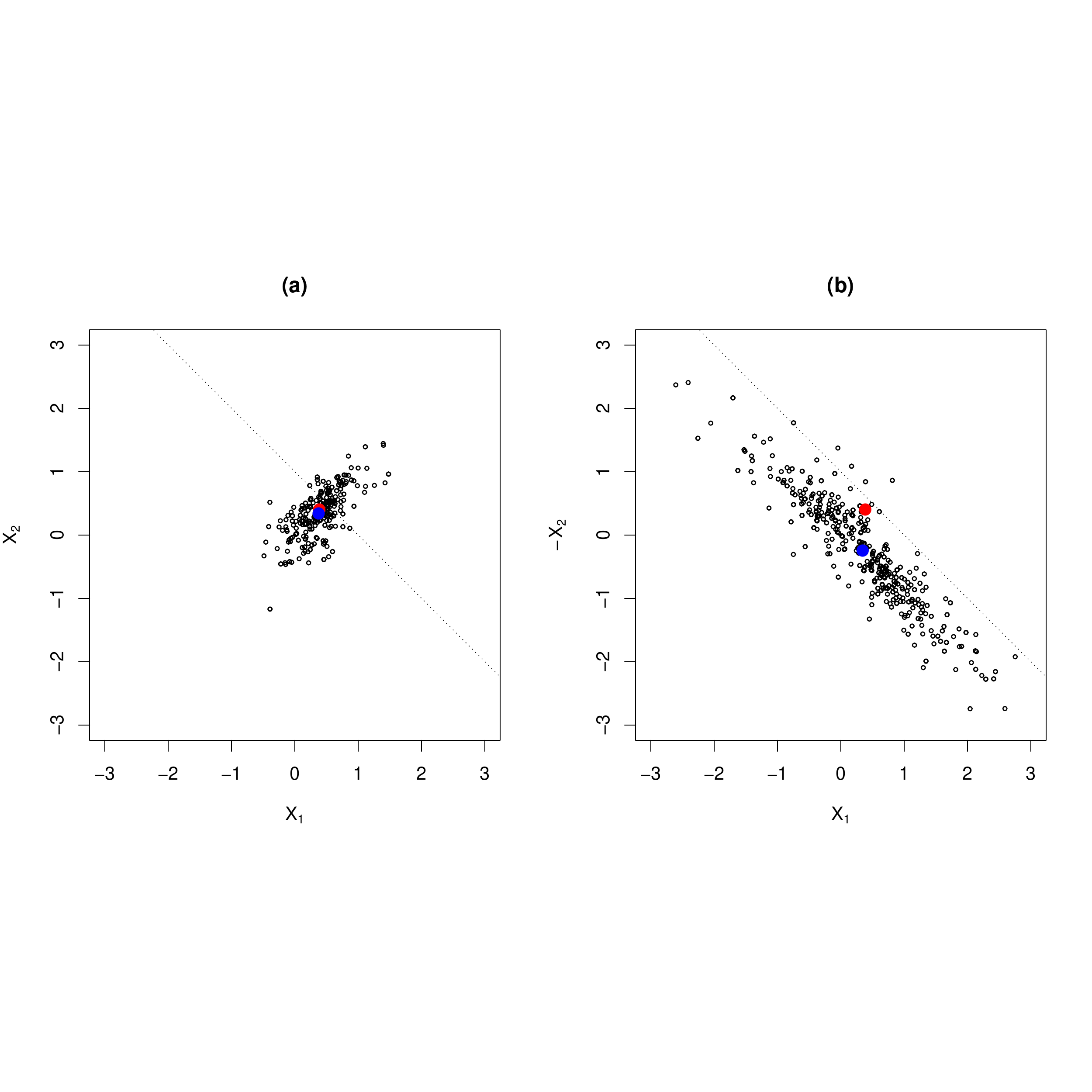}\vspace{-35mm}
  \caption{Scatter plots (black dots) of the first two components of (a) $\bX_{T+1|\mathcal F_T}^{\text{pos}}=(X_{T+1,1|\mathcal F_T},X_{T+1,2|\mathcal F_T},X_{T+1,3|\mathcal F_T})$ and (b) $\bX_{T+1|\mathcal F_T}^{\text{neg}}=(X_{T+1,1|\mathcal F_T},-X_{T+1,2|\mathcal F_T},X_{T+1,3|\mathcal F_T})$ for daily log-returns of the stock indices FTSE $X_{t,1}$, S\&P 500 $X_{t,2}$ and DJI $X_{t,3}$ falling in the region $\mathcal K_d(K,\delta)=\left\{\bx \in \IR^3: K-\delta < \sum_{j=1}^3 x_j <  K + \delta \right\} $ where $\delta = 0.3$ and $K=1$.
  The dotted lines represent $x + y =K$.
  The red dot represents the Euler allocation $\E[\bX'\ |\ \{S=K\}]$ and the blue dot represents the maximum likelihood allocation, which is the mode of  $f_{\bX'|\{S=K\}}$.
  }
  \label{fig:plots:empirical:plot}
\end{figure}

Next, we estimate the standard errors of the Euler and maximum likelihood allocations using the bootstrap method.
We compute the Euler allocation, MLA and their standard errors based on the $B=100$ number of samples of size $N=3712$ resampled from the original data with replacement.
The results are summarized in Table~\ref{table:estimators:empirical:study}.

\begin{table}[t]
\centering
\caption{Bootstrap estimates and estimated standard errors of the Euler allocation and MLA of $\bX_{T+1|\mathcal F_T}^{\text{pos}}=(X_{T+1,1|\mathcal F_T},X_{T+1,2|\mathcal F_T},X_{T+1,3|\mathcal F_T})$ and $\bX^{\text{neg}}=(X_{T+1,1|\mathcal F_T},-X_{T+1,2|\mathcal F_T},X_{T+1,3|\mathcal F_T})$ for daily log-returns of the stock indices FTSE $X_{t,1}$, S\&P 500 $X_{t,2}$ and DJI $X_{t,3}$.
The subsample size is $N=3712$ and the bootstrap sample size is $B=100$.}\label{table:estimators:empirical:study}\vspace{5mm}
\begin{tabular}{l rrr rrr}
  \toprule
  & \multicolumn{3}{c}{Estimator} & \multicolumn{3}{c}{Standard error}\\
  \cmidrule(lr{0.4em}){2-4}\cmidrule(lr{0.4em}){5-7}
  & \multicolumn{1}{c}{$X_1$} & \multicolumn{1}{c}{$X_2$} & \multicolumn{1}{c}{$X_3$} & \multicolumn{1}{c}{$X_1$} & \multicolumn{1}{c}{$X_2$} & \multicolumn{1}{c}{$X_3$}\\
  \midrule
  $\E[\bX^{\text{pos}}\ |\ \{S=K\}]$ & 0.378 & 0.338 & 0.285 & 0.019 & 0.022 & 0.038\\
  $\bK_{\operatorname{M}}[\bX^{\text{pos}};\mathcal K_d(K)]$ & 0.367 & 0.365 & 0.268 & 0.019 & 0.024 & 0.041\\[3mm]
  $\E[\bX^{\text{neg}}\ |\ \{S=K\}]$ & 0.345 & $-$0.248 & 0.903 & 0.037 & 0.039 & 0.015 \\
  $\bK_{\operatorname{M}}[\bX^{\text{neg}};\mathcal K_d(K)]$ & 0.371 & $-$0.280 & 0.909 & 0.040 & 0.039 & 0.013\\
  \bottomrule
\end{tabular}
\end{table}

In Figure~\ref{fig:plots:empirical:plot} we can observe that compared with Case
(a) the distribution in Case (b) is more spread out and losses take larger
absolute values. If the samples are regarded as stressed
scenarios, the scenario set in Case (b) contains a wider variety of
scenarios than in Case (a) since both of positive and negative losses can
appear in Case (b) whereas most realizations are positive in Case (a).
Nevertheless, as is observed from Table~\ref{table:estimators:empirical:study},
in both cases the Euler allocation and the MLA are close to each other also in
terms of standard errors.  This observation does not conflict with the stylized
fact that the joint log-returns nearly follow an elliptical distribution, and
thus the mean (Euler allocation) of $\bX\ | \ \{S=K\}$ coincides with its mode;
see Proposition~\ref{prop:ellipticality:conditional:distribution:constant:sum}
and Proposition~\ref{prop:unimodality:conditional:distribution:constant:sum}
Part 1.


\subsection{Simulation study}\label{subsec:simulation:study}

In this section, we consider four
models, referred to as (M1), (M2), (M3) and (M4), respectively, with $d=3$ and having the same marginal distributions
$X_1 \sim \Par(2.5,5)$, $X_2 \sim \Par(2.75,5)$ and $X_3 \sim \Par(3,5)$ (where $\Par(\theta,\lambda)$ denotes the Pareto distribution with shape parameter $\theta>0$ and scale parameter $\lambda>0$) but different $t$ copulas with degrees of freedom $\nu=5$ and dispersion matrices
\begin{align}\label{eq:parameters:correlation:matrices}
  \nonumber P_1 &= \begin{pmatrix}
    1& 0.8 & 0.5\\
    0.8 & 1 & 0.8\\
    0.5 & 0.8 & 1\\
  \end{pmatrix},\quad
  P_2 = \begin{pmatrix}
    1 & 0.5 & 0.5\\
    0.5 & 1 & 0.5\\
    0.5 & 0.5 & 1\\
  \end{pmatrix}, \\
  P_3 &= \begin{pmatrix}
    1& 0 & 0.5\\
    0 & 1 & 0 \\
    0.5 & 0 & 1 \\
  \end{pmatrix},\quad
  P_4 = \begin{pmatrix}
    1& -0.5 &  0.5\\
    -0.5 & 1 & -0.5\\
    0.5 & -0.5 & 1\\
  \end{pmatrix},
\end{align}
respectively.
For these parametric models, we first simulate $N=10^6$ samples from the unconditional distribution and then extract subsamples falling in the region $\mathcal K_d(K,\delta)$ with
$K=40$ and $\delta = 1$.
These pseudo samples from $\bX'\ | \ \{S=K\}$ are shown in Figure~\ref{fig:plots:simulation:plot}.
The red point in the figure represents the Euler allocation and the blue points are the (local) modes, which are estimated similarly as in Section~\ref{subsec:empirical:study}.

\begin{figure}
  \centering
  \vspace{-5mm}
  \includegraphics[width=15cm]{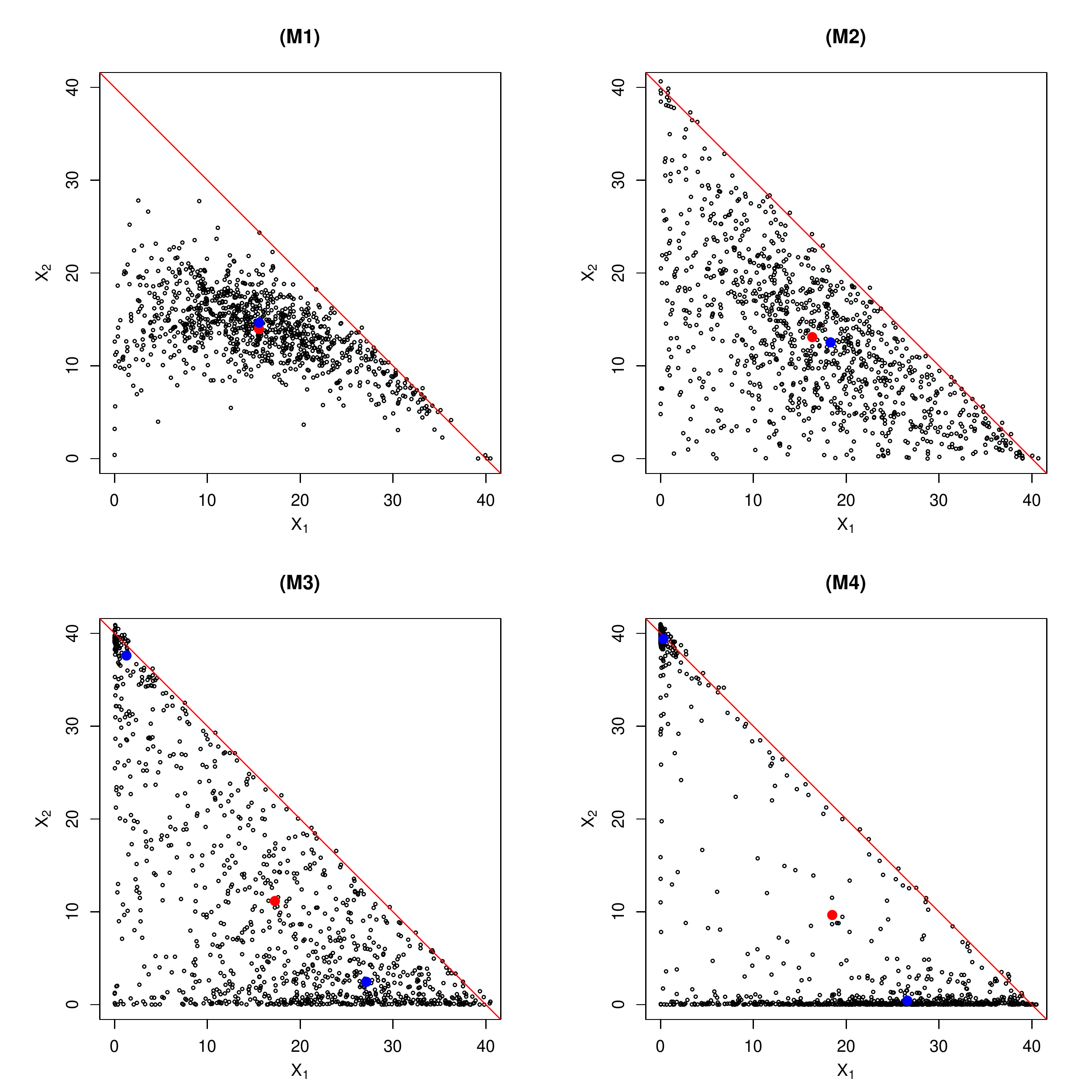}\vspace{-5mm}
  \caption{Scatter plots (black dots) of the first two components of the four models (M1), (M2), (M3) and (M4) falling in the region $\mathcal K_d(K,\delta)$
  with $K=40$ and $\delta = 1$.
  All the four models have the same marginal distributions $X_1 \sim \Par(2.5,5)$, $X_2 \sim \Par(2.75,5)$ and $X_3 \sim \Par(3,5)$ but different $t$ copulas with parameters provided in~\eqref{eq:parameters:correlation:matrices}.
  The red lines represent $x + y =K$.
  The red dot represents the Euler allocation $\E[\bX'\ |\ \{S=K\}]$ and the blue dots represent the (local) modes of $f_{\bX'|\{S=K\}}$.
  }
  \label{fig:plots:simulation:plot}
\end{figure}

\rev{The computational times required for calculating MLAs were (in seconds) (M1) 18.96
, (M2) 11.72
, (M3) 15.94
, and (M4) 22.76 
whereas the Euler allocations were computed almost instantly for all the cases.
Compared with the results in Section~\ref{subsec:empirical:study}, we observe that the computational time required to calculate MLA increases more rapidly than the Euler allocation does as the sample size increases.
}

In Figure~\ref{fig:plots:simulation:plot} we can observe that the conditional distribution is more concentrated under positive dependence (Model (M1) and (M2)) and it is more dispersed under negative dependence (Model (M4)).
Regarding the samples as stressed scenarios, the sets for Model (M3) and (M4) are more worrisome than those for Model (M1) and (M2) since (M3) and (M4) contain two distinct scenarios, one around the first axis and another around the upper-left corner of the plot region, both of which are likely to occur in the stressed situation $\{S=K\}$.
Unimodality of the conditional distribution for Model (M1) and (M2) leads to closer Euler allocation and MLA.
For Model (M1) and (M2), the choice of Euler allocation and MLA does not significantly change the resulting allocation.
On the other hand, for Model (M3) and (M4), the conditional distributions are multimodal, \rev{and thus a more careful decision making may be required}.

\begin{table}[t]
\centering
\caption{Estimates and estimated standard errors of the Euler allocation and MLA
of the four models (M1), (M2), (M3) and (M4) all having the same marginal distributions $X_1 \sim \Par(2.5,5)$, $X_2 \sim \Par(2.75,5)$ and $X_3 \sim \Par(3,5)$ but different $t$ copulas with parameters provided in~\eqref{eq:parameters:correlation:matrices}.
Estimates and estimated standard errors are computed based on $100$ replications, each of which consists of 500 conditional samples falling in the region $\mathcal K_d(K,\delta)$ with $K=40$ and $\delta=1$.
}\label{table:estimators:simulation:study}\vspace{5mm}
\begin{tabular}{l rrr rrr}
  \toprule
  & \multicolumn{3}{c}{Estimator} & \multicolumn{3}{c}{Standard error}\\
  \cmidrule(lr{0.4em}){2-4}\cmidrule(lr{0.4em}){5-7}
  & \multicolumn{1}{c}{$X_1$} & \multicolumn{1}{c}{$X_2$} & \multicolumn{1}{c}{$X_3$} & \multicolumn{1}{c}{$X_1$} & \multicolumn{1}{c}{$X_2$} & \multicolumn{1}{c}{$X_3$}\\
  \midrule
    \multicolumn{7}{l}{(M1) Pareto + $t$ copula: strong positive dependence}\\[5pt]
 $\E[\bX\ |\ \{S=K\}]$ & 15.549 & 13.889 & 10.562  & 0.336 & 0.157 & 0.288 \\
 $\bK_{\operatorname{M}}[\bX;\mathcal K_d(K)]$ & 15.849 & 14.434 & 9.718 & 0.482 & 0.213 & 0.356\\[3mm]

    \multicolumn{7}{l}{(M2) Pareto + $t$ copula: positive dependence}\\[5pt]
 $\E[\bX\ |\ \{S=K\}]$ & 16.228 & 13.042 & 10.562 & 0.399 & 0.355 & 0.288\\
 $\bK_{\operatorname{M}}[\bX;\mathcal K_d(K)]$ & 17.689 & 12.481 & 9.830 & 0.759 & 0.663 & 0.475\\[3mm]

    \multicolumn{7}{l}{(M3) Pareto + $t$ copula: no correlation}\\[5pt]
 $\E[\bX\ |\ \{S=K\}]$ & 17.479 & 11.368 & 10.562 & 0.517 & 0.530 & 0.288\\
 $\bK_{\operatorname{M},1}[\bX;\mathcal K_d(K)]$ & 25.678 & 3.107 & 11.215  & 1.185 & 0.278 & 1.205\\
 $\bK_{\operatorname{M},2}[\bX;\mathcal K_d(K)]$ &  2.639 & 35.275 & 2.086  & 0.973 & 1.306 & 0.424\\[3mm]

    \multicolumn{7}{l}{(M4) Pareto + $t$ copula: negative dependence}\\[5pt]
 $\E[\bX\ |\ \{S=K\}]$ & 19.062 & 9.272 & 10.562 &  0.556 & 0.614 & 0.288\\
 $\bK_{\operatorname{M},1}[\bX;\mathcal K_d(K)]$ &  28.353 & 0.684 & 10.962  & 2.125 & 1.646 & 2.154 \\
 $\bK_{\operatorname{M},2}[\bX;\mathcal K_d(K)]$ &  0.710 & 38.385 & 0.905  & 1.719 & 3.537 & 2.705 \\
 \bottomrule
\end{tabular}
\end{table}

To investigate the standard errors of the estimators, we compute the estimates of Euler allocation and (local) modes of $f_{\bX'|\{S=K\}}$ $100$ times for each model.
For each repetition, we simulate samples from $\bX$ so that there are $500$ samples in the region $\mathcal K_d(K,\delta)$.
The estimates and standard errors are computed based on the 100 replications and the results are summarized in Table~\ref{table:estimators:simulation:study}.
We can again see that for Models (M1) and (M2) the mode of $\bX'\ |\ \{S=K\}$ is unique and the two allocations are close.
On the other hand, for Models (M3) and (M4) where the conditional distributions are multimodal, the standard errors of the (local) modes are higher than those of the Euler allocation.


\rev{
In the end, we compute the multimodality-adjusted allocated capitals~\eqref{eq:multimodality:adjusted:risk:allocation} for (M3) and (M4).
In each case, the parameters are specified as $M=2$, $\mathcal X = \{\bK_1,\bK_2\}$ and $\bw=(w_1,w_2)$ with $w_m\propto f_{\bX}(\bK_m)$, where $\bK_1=\bK_{\operatorname{M},1}[\bX;\mathcal K_d(K)]=(26.726,2.114,11.158)$ and $\bK_2=\bK_{\operatorname{M},2}[\bX;\mathcal K_d(K)]=(1.505, 37.203, 1.291)$ for (M3), and $\bK_1 = (28.589, 0.432, 10.978)$ and $\bK_2=( 0.326, 39.314, 0.358)$ for (M4).
For both (M3) and (M4), the first and third units incur losses when $\{\bX=\bK_1\}$ occurs, and the second unit incurs a large loss when $\{\bX=\bK_2\}$ happens.
The probability weights of the scenarios are given by $\bw=(0.509,0.490)$ for (M3) and $\bw = (0.272,0.727)$ for (M4).
The two scenarios $\bK_1$ and $\bK_2$ are almost equally likely to occur for (M3), and the second scenario $\bK_2$ is more likely to occur for (M4).
Based on $\bw$ and $\mathcal X$, the baseline allocations $\bK_{\bw,\mathcal X}$ are given by $(14.357,19.323,6.319)$ for (M3) and $(8.038,28.705,3.256)$ for (M4).
As seen in Table~\ref{table:estimators:simulation:study}, these allocations are not quite close to the Euler allocations since $\bK_{\bw,\mathcal X}$ is calculated based only on the two points $\bK_1$ and $\bK_2$ in $\mathcal K_d(K)$.
}

\rev{
On computing the multimodality adjustments, we consider two cases when $\bK_{\bw,\mathcal X}$ or Euler allocations are used as baseline allocations.
If $\bK_{\bw,\mathcal X}$ is the baseline allocation, the average loss $w_1 (\bK_1-\bK_{\bw,\mathcal X})^{+}+w_2(\bK_2-\bK_{\bw,\mathcal X})^{+}$ in the multimodality adjustment is given by $(6.303, 11.204, 0.000)$ for (M3) and $(5.607,22.742,0.000)$ for (M4).
If Euler allocation is used as a baseline allocation, the average loss is given by $( 4.873, 9.828, 0.000)$ for (M3) and $(2.618, 14.775,  0.428)$ for (M4).
In all cases, the average loss incurred in the second unit is larger than those in the first and third units since the second unit incurs a large loss when the second scenario $\{\bX=\bK_2\}$ occurs.
Moreover, for (M4), the probability that the scenario $\{\bX=\bK_2\}$ occurs is higher than that of the first scenario $\{\bX=\bK_1\}$.
Therefore, the analysis of the modes of $\bX\ | \ \{S=K\}$ reveals that a large multimodality adjustment should be applied to $\bX_2$ to increase the soundness of risk allocations.
}

\section{Conclusion}\label{sec:discussion:conclusion}
\rev{Motivated from stress testing of risk allocations,}
we investigated properties of the conditional distribution of $\bX$
given the sum constraint $\{S=K\}$, and also introduced the novel risk allocation method called maximum
likelihood allocation (MLA). 
The \rev{super}level
set of $\bX\ | \ \{S=K\}$ can be regarded as a set of stressed (severe and
plausible) scenarios, and the modality of $\bX\ | \ \{S=K\}$ can be interpreted
as a variety of risky scenarios, which turned out to be an important
feature \rev{in risk assessment} related to the soundness of risk allocations. 
We then studied modality, dependence and tail behavior of $\bX\ | \ \{S=K\}$,
most of which are inherited from those of the unconditional loss $\bX$.  
\rev{We then investigated how to incorporate the knowledge of the modality of $\bX\ | \ \{S=K\}$ for more sound risk management.
Under unimodality, we
defined MLA as a mode of $\bX\ | \ \{S=K\}$, and studied its properties as a risk allocation principle, such as translation invariance and positive homogeneity.  
Under multimodality, we considered the so-called multimodality adjustment to increase the soundness of risk allocations based on the multiple modes.}
Euler allocation and MLA were then
compared in numerical experiments.  
Through the experiments, we demonstrated
that $\bX\ | \ \{S=K\}$ is
typically unimodal, and Euler allocation and MLA lead to close values when $\bX$ possesses positive dependence.  
On the other hand,
when the losses are negatively dependent, multimodality is likely to occur, and 
searching for the modes of $\bX\ | \ \{S=K\}$ is beneficial for discovering hidden risky scenarios,
evaluating the soundness of risk allocations, and  
\rev{eventually increasing the soundness of risk allocations by imposing the multimodality adjustment.}

Although we empirically observed the relationship between multimodality of
$\bX\ | \ \{S=K\}$ and negative dependence of $\bX$, this relationship
requires further theoretical investigation.  Another aspect of future research
is to study more distributional properties, such as tail dependence and measures
of concordance, of $\bX\ | \ \{S=K\}$ especially without assuming the existence
of a density.
\rev{
Unlike Euler allocations, estimation of MLAs is not a straightforward problem in general but various methods are known for estimating modes of multivariate distributions.
For applying the MLA principle in practice, efficient estimation methods of the modes of multivariate distributions in high dimensions need to be explored further.
}
\rev{An economic justification of the MLA principle is also an interesting direction for future research.}
\rev{
In addition, an extension of the multimodality adjustment to general sets of scenarios is another avenue to explore since the multimodality adjustment proposed in this paper relies on the assumption that the set of scenarios is finite.
}
In the end, efficient simulation approaches of $\bX\ | \ \{S=K\}$
may need to rely on MCMC methods as introduced in
Appendix~\ref{app:simulation:with:MCMC}, and further investigation is required
to assess how beneficial the distributional properties proven in this paper are to construct efficient MCMC methods since the performance of MCMC methods typically depends on
tail-heaviness and modality of the target distribution.
\section*{Funding}
This research was funded by NSERC through Discovery Grant RGPIN-5010-2015.

\section*{Declaration of interests}
Declarations of interest: none.

\section*{Acknowledgments}
\rev{We would like to thank the editor and three anonymous referees for their insightful comments on our manuscript.
}

\bibliographystyle{apalike}

\begin{thebibliography}{}

\bibitem[Asimit et~al., 2019]{asimit2019efficient}
Asimit, V., Peng, L., Wang, R., and Yu, A. (2019).
\newblock An efficient approach to quantile capital allocation and sensitivity
  analysis.
\newblock {\em Mathematical Finance}, 29(4):1131--1156.

\bibitem[Aumann and Shapley, 2015]{aumann2015values}
Aumann, R.~J. and Shapley, L.~S. (2015).
\newblock {\em Values of non-atomic games}.
\newblock Princeton University Press, Princeton, New Jersey.

\bibitem[Balkema and Nolde, 2010]{balkema2010asymptotic}
Balkema, G. and Nolde, N. (2010).
\newblock Asymptotic independence for unimodal densities.
\newblock {\em Advances in Applied Probability}, 42(2):411--432.

\bibitem[Breuer et~al., 2018]{breuer2018find}
Breuer, T., Jandacka, M., Rheinberger, K., and Summer, M. (2018).
\newblock How to find plausible, severe, and useful stress scenarios.
\newblock {\em Eighteenth issue (September 2009) of the International Journal
  of Central Banking}.

\bibitem[Carreira-Perpin{\'a}n, 2015]{carreira2015review}
Carreira-Perpin{\'a}n, M.~A. (2015).
\newblock A review of mean-shift algorithms for clustering.
\newblock {\em arXiv preprint arXiv:1503.00687}.

\bibitem[Chen et~al., 2016]{chen2016comprehensive}
Chen, Y.-C., Genovese, C.~R., Wasserman, L., et~al. (2016).
\newblock A comprehensive approach to mode clustering.
\newblock {\em Electronic Journal of Statistics}, 10(1):210--241.

\bibitem[Denault, 2001]{denault2001coherent}
Denault, M. (2001).
\newblock Coherent allocation of risk capital.
\newblock {\em Journal of Risk}, 4(1):1--34.

\bibitem[Dhaene et~al., 2012]{dhaene2012optimal}
Dhaene, J., Tsanakas, A., Valdez, E.~A., and Vanduffel, S. (2012).
\newblock Optimal capital allocation principles.
\newblock {\em Journal of Risk and Insurance}, 79(1):1--28.

\bibitem[Dharmadhikari and Joag-Dev, 1988]{dharmadhikari1988unimodality}
Dharmadhikari, S. and Joag-Dev, K. (1988).
\newblock {\em Unimodality, convexity, and applications}.
\newblock Elsevier.

\bibitem[Ding, 2016]{ding2016conditional}
Ding, P. (2016).
\newblock On the conditional distribution of the multivariate t distribution.
\newblock {\em The American Statistician}, 70(3):293--295.

\bibitem[Fang, 2018]{fang2018symmetric}
Fang, K.~W. (2018).
\newblock {\em Symmetric multivariate and related distributions}.
\newblock Chapman and Hall/CRC.

\bibitem[Fernandez et~al., 1995]{fernandez1995modeling}
Fernandez, C., Osiewalski, J., and Steel, M.~F. (1995).
\newblock Modeling and inference with $\upsilon$-spherical distributions.
\newblock {\em Journal of the American Statistical Association},
  90(432):1331--1340.

\bibitem[Fern{\'a}ndez and Steel, 1998]{fernandez1998bayesian}
Fern{\'a}ndez, C. and Steel, M.~F. (1998).
\newblock On bayesian modeling of fat tails and skewness.
\newblock {\em Journal of the American Statistical Association},
  93(441):359--371.

\bibitem[Fukunaga and Hostetler, 1975]{fukunaga1975estimation}
Fukunaga, K. and Hostetler, L. (1975).
\newblock The estimation of the gradient of a density function, with
  applications in pattern recognition.
\newblock {\em IEEE Transactions on information theory}, 21(1):32--40.

\bibitem[Furman and Landsman, 2006]{furman2006tail}
Furman, E. and Landsman, Z. (2006).
\newblock Tail variance premium with applications for elliptical portfolio of
  risks.
\newblock {\em ASTIN Bulletin: The Journal of the IAA}, 36(2):433--462.

\bibitem[Furman et~al., 2017]{furman2017gini}
Furman, E., Wang, R., and Zitikis, R. (2017).
\newblock Gini-type measures of risk and variability: Gini shortfall, capital
  allocations, and heavy-tailed risks.
\newblock {\em Journal of Banking {\&} Finance}, 83:70--84.

\bibitem[Huang et~al., 2009]{huang2009estimating}
Huang, J.-J., Lee, K.-J., Liang, H., and Lin, W.-F. (2009).
\newblock Estimating value at risk of portfolio by conditional copula-garch
  method.
\newblock {\em Insurance: Mathematics and economics}, 45(3):315--324.

\bibitem[Joe and Li, 2019]{joe2019tail}
Joe, H. and Li, H. (2019).
\newblock Tail densities of skew-elliptical distributions.
\newblock {\em Journal of Multivariate Analysis}, 171:421--435.

\bibitem[Jondeau and Rockinger, 2006]{jondeau2006copula}
Jondeau, E. and Rockinger, M. (2006).
\newblock The copula-garch model of conditional dependencies: An international
  stock market application.
\newblock {\em Journal of international money and finance}, 25(5):827--853.

\bibitem[Kalkbrener, 2005]{kalkbrener2005axiomatic}
Kalkbrener, M. (2005).
\newblock An axiomatic approach to capital allocation.
\newblock {\em Mathematical Finance}, 15(3):425--437.

\bibitem[Karlin and Rinott, 1980a]{karlin1980classes1}
Karlin, S. and Rinott, Y. (1980a).
\newblock Classes of orderings of measures and related correlation
  inequalities. i. multivariate totally positive distributions.
\newblock {\em Journal of Multivariate Analysis}, 10(4):467--498.

\bibitem[Karlin and Rinott, 1980b]{karlin1980classes2}
Karlin, S. and Rinott, Y. (1980b).
\newblock Classes of orderings of measures and related correlation inequalities
  ii. multivariate reverse rule distributions.
\newblock {\em Journal of Multivariate Analysis}, 10(4):499--516.

\bibitem[Koike and Hofert, 2020]{koike2020markov}
Koike, T. and Hofert, M. (2020).
\newblock Markov chain monte carlo methods for estimating systemic risk
  allocations.
\newblock {\em Risks}, 8(1):6.

\bibitem[Koike and Minami, 2019]{koike2019estimation}
Koike, T. and Minami, M. (2019).
\newblock Estimation of risk contributions with mcmc.
\newblock {\em Quantitative Finance}, 19(9):1579--1597.

\bibitem[Laeven and Goovaerts, 2004]{laeven2004optimization}
Laeven, R.~J. and Goovaerts, M.~J. (2004).
\newblock An optimization approach to the dynamic allocation of economic
  capital.
\newblock {\em Insurance: Mathematics and Economics}, 35(2):299--319.

\bibitem[Li, 2013]{li2013toward}
Li, H. (2013).
\newblock Toward a copula theory for multivariate regular variation.
\newblock In {\em Copulae in mathematical and quantitative finance}, pages
  177--199. Springer.

\bibitem[Maume-Deschamps et~al., 2016]{maume2016capital}
Maume-Deschamps, V., Rulli{\`e}re, D., and Said, K. (2016).
\newblock On a capital allocation by minimization of some risk indicators.
\newblock {\em European Actuarial Journal}, 6(1):177--196.

\bibitem[McNeil et~al., 2015]{mcneil2015quantitative}
McNeil, A.~J., Frey, R., and Embrechts, P. (2015).
\newblock {\em Quantitative risk management: Concepts, techniques and tools}.
\newblock Princeton University Press, Princeton.

\bibitem[M{\"u}ller and Stoyan, 2002]{muller2002comparison}
M{\"u}ller, A. and Stoyan, D. (2002).
\newblock {\em Comparison methods for stochastic models and risks}, volume 389.
\newblock Wiley New York.

\bibitem[Norkin and Roenko, 1991]{norkin1991alpha}
Norkin, V. and Roenko, N. (1991).
\newblock $\alpha$-concave functions and measures and their applications.
\newblock {\em Cybernetics and Systems Analysis}, 27(6):860--869.

\bibitem[Osiewalski, 1993]{osiewalski1993robust}
Osiewalski, J. (1993).
\newblock Robust bayesian inference in lq-spherical models.
\newblock {\em Biometrika}, 80(2):456--460.

\bibitem[Resnick, 2007]{resnick2007heavy}
Resnick, S.~I. (2007).
\newblock {\em Heavy-tail phenomena: probabilistic and statistical modeling}.
\newblock Springer Science \& Business Media.

\bibitem[Roth, 2012]{roth2012multivariate}
Roth, M. (2012).
\newblock {\em On the multivariate t distribution}.
\newblock Link{\"o}ping University Electronic Press.

\bibitem[Saumard and Wellner, 2014]{saumard2014log}
Saumard, A. and Wellner, J.~A. (2014).
\newblock Log-concavity and strong log-concavity: a review.
\newblock {\em Statistics Surveys}, 8:45.

\bibitem[Schmidt, 2002]{schmidt2002tail}
Schmidt, R. (2002).
\newblock Tail dependence for elliptically contoured distributions.
\newblock {\em Mathematical Methods of Operations Research}, 55(2):301--327.

\bibitem[Shaked and Shanthikumar, 2007]{shaked2007stochastic}
Shaked, M. and Shanthikumar, J.~G. (2007).
\newblock {\em Stochastic orders}.
\newblock Springer Science \& Business Media.

\bibitem[Sweeting et~al., 1986]{sweeting1986converse}
Sweeting, T.~J. et~al. (1986).
\newblock On a converse to scheff{\'e}'s theorem.
\newblock {\em The Annals of Statistics}, 14(3):1252--1256.

\bibitem[Tasche, 1995]{tasche1999risk}
Tasche, D. (1995).
\newblock Risk contributions and performance measurement.
\newblock Working Paper, Techische Universit\"at M{\"u}nchen.

\bibitem[Tasche, 2001]{tasche2001conditional}
Tasche, D. (2001).
\newblock Conditional expectation as quantile derivative.
\newblock {\em arXiv preprint math/0104190}.

\bibitem[Wang and Wang, 2011]{wang2011complete}
Wang, B. and Wang, R. (2011).
\newblock The complete mixability and convex minimization problems with
  monotone marginal densities.
\newblock {\em Journal of Multivariate Analysis}, 102(10):1344--1360.

\end{thebibliography}

\appendix

\section*{Appendices}

\section{Proofs}\label{app:proofs}

\subsection*{Proof of Proposition~\ref{prop:ellipticality:conditional:distribution:constant:sum}}
\begin{proof}
Notice that $(\bX',S)=A\bX \sim {\mathcal E}_d(A\bmu,A\Sigma A\T,\psi)$ where
$A=\begin{pmatrix}
\bI_d & \bzero_d\\
\bone_d\T & 1\\
\end{pmatrix} \in \IR^{d\times d}$.
Therefore, the conditional distribution $\bX'\ |\ \{S=K\}$ also follows an elliptical distribution with the location parameter $\bmu_K$ and the dispersion parameter $\Sigma_K$ as specified in \eqref{eq:location:dispersion:parameter:conditional:dist:constant:sum}.
The corresponding characteristic generator $\psi_K$ can be specified through Theorem 2.18 of \citet{fang2018symmetric}.
If $\bX$ admits a density with density generator $g$, then
\begin{align*}
f_{\bX'|\{S=K\}}(\bx') &= \frac{f_{(\bX',S)}(\bx',K)}{f_S(K)}
\propto g_{d}\left(
\frac{1}{2} (\bx'-\bmu',K - \mu_S)  \T
\begin{pmatrix}
\Sigma' & (\Sigma\bone_d)'\\
\tra{(\Sigma\bone_d)'} & \sigma_S^2\\
\end{pmatrix}^{-1}
 (\bx'-\bmu',K - \mu_S)
\right).
\end{align*}
The quadratic term reduces to
\begin{align*}
(\bx'-\bmu',K - \mu_S)  \T
&\begin{pmatrix}
\Sigma' & (\Sigma\bone_d)'\\
\tra{(\Sigma\bone_d)'} & \sigma_S^2\\
\end{pmatrix}^{-1}
 (\bx'-\bmu',K - \mu_S)=(\bx'-\bmu_K)\T \Sigma_K^{-1} (\bx'-\bmu_K) +\left(\frac{K-\mu_S}{\sigma_S}\right)^2.
\end{align*}
Therefore, we have that
\begin{align*}
f_{\bX'|\{S=K\}}(\bx') &\propto g\left(
\frac{1}{2}
(\bx'-\bmu_K)\T \Sigma_K^{-1} (\bx'-\bmu_K) + \Delta_K
\right)
 = g_K\left(
\frac{1}{2}
(\bx'-\bmu_K)\T \Sigma_K^{-1} (\bx'-\bmu_K)
\right),
\end{align*}
where $\Delta_K =  (K-\mu_S)^2/(2\sigma_S^2)$ and $g_K(t)=g(t  + \Delta_K)$ as specified in \eqref{eq:density:generator:conditional:distribution:constant:sum}.
\end{proof}

\subsection*{Proof of Proposition~\ref{prop:unimodality:conditional:distribution:constant:sum}}

\begin{proof}
\begin{enumerate}
\item By Proposition~\ref{prop:ellipticality:conditional:distribution:constant:sum}, $\bX'\ |\ \{S=K\}$ follows a $d'$-dimensional elliptical distribution with location vector $\bmu_K$, dispersion matrix $\Sigma_K$ and density generator $g_K$.
Furthermore, $g_K$ is decreasing if $g$ is.
Therefore, for $0<s\leq c_K t_K^\ast/\sqrt{|\Sigma_K|}$,
\begin{align*}
L_s(\bX'\ |\ \{S=K\}) &= \left\{\bx'\in\IR^{d'}: g_K\left(\frac{1}{2}(\bx'-\bmu_K)\T\Sigma_K^{-1}(\bx'-\bmu_K)\right) \geq \frac{s \sqrt{|\Sigma_K|}}{c_K}\right\}\\
&=  \left\{\bx'\in\IR^{d'}: 0 \leq (\bx'-\bmu_K)\T\Sigma_K^{-1}(\bx'-\bmu_K) \leq
2 \left\{g^{-1}\left(\frac{s \sqrt{|\Sigma_K|}}{c_K}\right)-\Delta_K\right\}\right\},
\end{align*}
which is a convex set with ellipsoid as surface.
Moreover, when $s^\ast = c_K t_K^\ast/\sqrt{|\Sigma_K|}$, we have
\begin{align*}
L_{s^{\ast}}(\bX'\ |\ \{S=K\})= \left\{\bx'\in\IR^{d'}:  (\bx'-\bmu_K)\T\Sigma_K^{-1}(\bx'-\bmu_K) =0\right\}=\{\bmu_K\}
\end{align*}
and thus $\bX'\ |\ \{S=K\}$ has a mode $\bmu_K$.
\item
For $t>0$ and $\bx' \in \IR^{d'}$, we have the equivalence relation:
\begin{align}\label{eq:evuivalence:level:sets}
\bx' \in L_t(\bX'\ |\ \{S=K\})\quad \text{if and only if}\quad  (\bx',K-\bone_{d'}\T\bx')  \in  L_{tf_S(K)}(\bX)
\end{align}
since $f_{\bX'|\{S=K\}}(\bx')=f_{\bX}(\bx',K-\bone_{d'}\T\bx')/f_{S}(K)$ and thus
\begin{align*}
L_t(\bX'\ |\ \{S=K\})&=\{\bx' \in \IR^{d'}: f_{\bX'|\{S=K\}}(\bx')\geq t\}
= \{\bx' \in \IR^{d'}: f_{\bX}(\bx',K-\bone_{d'}\T\bx')\geq t  f_S(K) \}.
\end{align*}
Suppose $\bx',\  \by' \in L_t(\bX'\ |\ \{S=K\})$.
By~\eqref{eq:evuivalence:level:sets}, we have that $(\bx',K-\bone_{d'}\T\bx'),\  (\by',K-\bone_{d'}\T\by') \in  L_{tf_S(K)}(\bX)$.
Since $\bX$ is convex unimodal, $L_{tf_S(K)}(\bX)$ is a convex set.
Therefore, we have, for $\theta \in (0,1)$, that
\begin{align*}
\theta (\bx',K-\bone_{d'}\bx')+(1-\theta)(\by',K-\bone_{d'}\T\by')&=(\theta\bx'+(1-\theta)\by',\ \theta(K-\bone_{d'}\T\bx') + (1-\theta)(K-\bone_{d'}\T\by')\\
&= (\theta\bx'+(1-\theta)\by',\ K-\bone_{d'}\T(\theta\bx' + (1-\theta)\by'))
 \in L_{tf_S(K)}(\bX),
\end{align*}
which implies that $\theta \bx' + (1-\theta)\by' \in L_t(\bX'\ |\ S=K)$ by~\eqref{eq:evuivalence:level:sets}.
\end{enumerate}
 \end{proof}

\subsection*{Proof of Proposition~\ref{prop:tail:density:conditional:distribution:constant:sum}}

\begin{proof}
Let $\tilde \bX = (\bX',K-X_d)$.
Since the density of $\tilde \bX$ is written as $f_{\tilde \bX}(\sqc{x}{d})=f_{\bX}(\sqc{x}{d'},K-x_d)$, we have, by~\eqref{eq:conditional:density:X:constant:sum}, that
\begin{align*}
f_{\bX'|\{S=K\}}(\bx')=\frac{f_{\bX}(\bx',K-\bone_{d'}\T\bx')}{f_S(K)}
=\frac{f_{\tilde \bX}(\bx',\bone_{d'}\T\bx')}{f_S(K)},\quad \bx' \in \IR_{+}^{d'}.
\end{align*}
Therefore, if $\tilde \bX$ has a limit function $\tilde \lambda$, then the density of $\bX'\ |\ \{S=K\}$ satisfies
\begin{align*}
\lim_{t\rightarrow \infty} \frac{f_{\bX'|\{S=K\}}(t\by')}{f_{\bX'|\{S=K\}}(t\bx')}&=
\lim_{t\rightarrow \infty} \frac{f_{\tilde \bX}(t\by',t\bone_{d'}\T\by')}{f_{\tilde \bX}(t\bx',t\bone_{d'}\T\bx')}= \tilde \lambda((\bx',\bone_{d'}\T\bx'),(\by',\bone_{d'}\T\by'))=\lambda'(\bx',\by'),
\end{align*}
for any $\bx',\ \by' \in \IR_{+}^{d'}$ since $(\bx',\bone_{d'}\T\bx'),\ (\by',\bone_{d'}\T\by') \in \IR_{+}^{d}$.
Similarly, if $\tilde \bX$ is MRV($\infty$), then
\begin{align*}
\lim_{t\rightarrow \infty} \frac{f_{\bX'|\{S=K\}}(st\bx')}{f_{\bX'|\{S=K\}}(t\bx')}&=
\lim_{t\rightarrow \infty} \frac{f_{\tilde \bX}(st\bx',st\bone_{d'}\T\bx')}{f_{\tilde \bX}(t\bx',t\bone_{d'}\T\bx')}=\begin{cases}
0, & s>1,\\
\infty, & 0< s < 1,\\
\end{cases}
\end{align*}
for any $s>0$ and $\bx' \in \IR_{+}^{d'}$.
\end{proof}

\subsection*{Proof of Proposition~\ref{prop:mrv:conditional:distribution:elliptical}}

\begin{proof}
Proposition~\ref{prop:ellipticality:conditional:distribution:constant:sum} yields that $\bX'\ |\ \{S=K\}$ follows a $d'$-dimensional elliptical distribution with location vector $\bmu_K$, dispersion matrix $\Sigma_K$ and density generator $g_K$.
If $g$ is regularly varying, then
\begin{align*}
\lim_{t\rightarrow \infty} \frac{f_{\bX'|\{S=K\}}(t\by')}{f_{\bX'|\{S=K\}}(t\bx')}
&= \lim_{t\rightarrow \infty} \frac{g_K\left(\frac{1}{2}(t\by'-\bmu_K)\T\Sigma_K^{-1}(t\by'-\bmu_K)\right)}{g_K\left(\frac{1}{2}(t\bx'-\bmu_K)\T\Sigma_K^{-1}(t\bx'-\bmu_K)\right)}\\
&= \lim_{t\rightarrow \infty} \frac{g\left(\frac{1}{2}t^2(\by'-\bmu_{K}/t)\T\Sigma_K^{-1}(\by'-\bmu_{K}/t)+\Delta_K\right)}{g\left(\frac{1}{2}t^2(\bx'-\bmu_{K}/t)\T\Sigma_K^{-1}(\bx'-\bmu_{K}/t)+\Delta_K\right)}\\
& = \lim_{t\rightarrow \infty} \frac{g(\frac{1}{2}t^2{\by'}\T\Sigma_K^{-1}\by')}{g(\frac{1}{2}t^2{\bx'}\T\Sigma_K^{-1}\bx')}=\lambda_g({\bx'}\T\Sigma_K^{-1}\bx',\ {\by'}\T\Sigma_K^{-1}\by')=\lambda_K(\bx',\by'),
\end{align*}
for any $\bx',\ \by' \in \IR^{d'}$, where the third equality comes from continuity of $g$ and the fourth equality holds since ${\bx'}\T\Sigma_K^{-1}\bx',\ {\by'}\T\Sigma_K^{-1}\by'>0$.
Therefore, $\bX'\ |\ \{S=K\}$ is MRV($\lambda_K$).
For the rapidly varying case,
\begin{align*}
\lim_{t\rightarrow \infty} \frac{f_{\bX'|\{S=K\}}(st\bx')}{f_{\bX'|\{S=K\}}(t\bx')}
 = \lim_{t\rightarrow \infty} \frac{g(\frac{1}{2}t^2s^2{\bx'}\T\Sigma_K^{-1}\bx')}{g(\frac{1}{2}t^2{\bx'}\T\Sigma_K^{-1}\bx')}
=\begin{cases}
0, & s>1,\\
\infty, & 0< s < 1,\\
\end{cases}
\end{align*}
for any $s>0$ and $\bx',\ \by' \in \IR^{d'}$ since $s>1$ if and only if $s^2>1$ and $0<s<1$ if and only if $0<s^2<1$ for $s>0$.
Therefore, $\bX'\ |\ \{S=K\}$ is rapidly varying.
\end{proof}

\subsection*{Proof of Proposition~\ref{prop:properties:mla}}

\begin{proof}
\begin{enumerate}
  \item \emph{Translation invariance}:
  Let $\tilde \bX = \bX + \bc$, $\tilde S = S + \bone_{d}\T\bc$ and $\tilde K=  K + \bone_{d}\T\bc$.
 Since $f_{\bX+\bc}(\bx)=f_{\bX}(\bx-\bc)$, we have that
  \begin{align*}
  f_{\tilde \bX'|\{\tilde S = \tilde K \}}(\tilde \bx')& =\frac{f_{(\tilde \bX,\tilde S)}(\tilde \bx',\tilde K)}{f_{\tilde S}(\tilde K)}
  = \frac{f_{(\bX',S)}(\tilde \bx'-\bc',K)}{f_{S}(K)}
  =  f_{\bX'|\{S=K\}}(\tilde \bx'-\bc').
  \end{align*}
  Therefore, uniqueness of the maximizer of $f_{\bX'|\{S=K\}}$ implies that of $f_{\tilde \bX'|\{\tilde S = \tilde K \}}$, and these maximizers are related via
  $\bK_{\operatorname{M}}[\bX+\bc;\ \mathcal K_d(K+\bone_{d}\T\bc)]=\bK_{\operatorname{M}}[\bX;\mathcal K_d(K)]+\bc$.
  \item \emph{Positive homogeneity}:
  Let $\tilde \bX = c\bX$, $\tilde S = cS $ and $\tilde K=  cK$.
 Since $f_{c\bX}(\bx)=f_{\bX}(\bx/c)$, we have that
  \begin{align*}
  f_{\tilde \bX'|\{\tilde S = \tilde K \}}(\tilde \bx')& =\frac{f_{(\tilde \bX,\tilde S)}(\tilde \bx',\tilde K)}{f_{\tilde S}(\tilde K)}
  = \frac{f_{(\bX',S)}(\tilde \bx'/c,K)}{f_{S}(K)}
  = f_{\bX'|\{S=K\}}(\tilde \bx'/c).
  \end{align*}
  As seen in the case of translation invariance,
  this equality implies that $\tilde \bX \in \mathcal U_d(\tilde K)$ and $\bK_{\operatorname{M}}[\bX;\mathcal K_d(cK)]=c\bK_{\operatorname{M}}[\bX;\mathcal K_d(K)]$.
    \item \emph{Symmetry}: Without loss of generality, consider $i=1$ and $j=2$.
    Let $\tilde \bX = (X_2,X_1,\bX_{-(1,2)})$ and $\tilde S= \bone_d\T \tilde \bX$, where $\bx_{-(1,2)}$ is a shorthand for $(x_3,\dots,x_d)$ for $\bx \in \IR^d$.
    Then $f_{\tilde \bX}(\bx)=f_{\bX}(\tilde \bx)$ for $\bx = ( x_1,x_2,\bx_{-(1,2)}) \in \IR^d$ and $\tilde \bx = (x_2,x_1,\bx_{-(1,2)}) \in \IR^d$.
    Moreover, when $\smash{\bX \deq \tilde \bX}$, we have that $\tilde \bX \in \mathcal U_d(K)$ and $f_{\bX} =f_{\tilde \bX}$. 
    Consequently, we have that
    \begin{align}\label{eq:density:equality:under:symmetry}
 f_{\bX'|\{S=K\}}(\bx')&=\frac{f_{\bX}(\bx',K-\bone_{d'}\T\bx')}{f_S(K)}
    =\frac{f_{\tilde \bX}(\bx',K-\bone_{d'}\T\bx')}{f_{S}(K)}
    =\frac{f_{\bX}(\tilde \bx',K-\bone_{d'}\T\tilde \bx')}{f_{S}(K)}
    =f_{\bX'|\{S=K\}}(\tilde \bx'),
    \end{align}
    where the third equation holds since $\bone_{d'}\T\bx'=\bone_{d'}\T\tilde \bx'$.
Now suppose that $\bK_{\operatorname{M}}[\bX;\mathcal K_d(K)]_1\neq \bK_{\operatorname{M}}[\bX;\mathcal K_d(K)]_2$.
Then two different vectors $\bK_{\operatorname{M}}[\bX;\mathcal K_d(K)]$ and
$(\bK_{\operatorname{M}}[\bX;\mathcal K_d(K)]_2,\ \bK_{\operatorname{M}}[\bX;\mathcal K_d(K)]_{1},\bK_{\operatorname{M}}[\bX;\mathcal K_d(K)]_{-(1,2)})$
attain the maximum of $f_{\bX'|\{S=K\}}$ by ~\eqref{eq:density:equality:under:symmetry}.
 Since $\bK_{\operatorname{M}}[\bX;\mathcal K_d(K)]$ is obtained by the unique maximizer of $f_{\bX'|\{S=K\}}(\bx')$, this leads to a contradiction.
     \item  \emph{Continuity}:
     When $f_n$ is uniformly continuous and bounded for $n=1,2,\dots$, the sequence $(f_n)$ is asymptotically uniformly equicontinuous and bounded in the sense introduced in \citet{sweeting1986converse}.
     Together with the assumption that $\bX_n \rightarrow \bX$ weakly, Theorem 2 of \citet{sweeting1986converse} implies that $f_n \rightarrow f$ pointwise and uniformly in $\IR^d$ for the uniformly continuous density $f$ of $\bX$.
    Define $g_n(\bx')=f_{n}(\bx',K-\bone_{d'}\T\bx')$ for $n=1,2,\dots$ and $g(\bx')=f(\bx',K-\bone_{d'}\T\bx')$, $\bx'\in \IR^{d'}$.
    By~\eqref{eq:conditional:density:X:constant:sum} and since $\bX_n,\ \bX \in \mathcal U_d(K)$,
 the maximizers of $g_n$ and $g$ are uniquely determined.
    Denote them as $\bx_n^\ast =\smash{\argmax_{\bx \in \IR^{d'}} g_n(\bx)}$ and $\bx^\ast =\smash{\argmax_{\bx \in \IR^{d'}} g(\bx)}$.
    By definition of $\bx_n^\ast$, we have that
    \begin{align*}
    g_n(\bx_n^\ast) \geq g_n(\bx) \quad \text{for any}\quad \bx \in \IR^{d'}.
    \end{align*}

    Since $g_n$ converges uniformly to $g$, it holds that
    \begin{align*}
    g(\smash{\limsup_{n\rightarrow \infty} \bx_n^\ast}) \geq g(\bx)\quad \text{and}\quad
    g(\smash{\liminf_{n\rightarrow \infty} \bx_n^\ast}) \geq g(\bx) \quad \text{for any}\quad \bx \in \IR^{d'}.
    \end{align*}
    If $\limsup_{n\rightarrow \infty} \bx_n^\ast>\liminf_{n\rightarrow \infty} \bx_n^\ast$, then two points attain the maximum of $g$, which contradicts the uniqueness of the maximizer of $g$.
    As a consequence, $\limsup_{n\rightarrow \infty} \bx_n^\ast=\liminf_{n\rightarrow \infty} \bx_n^\ast = \lim_{n \rightarrow \infty}\bx_n^\ast = \bx^\ast$ and thus $\lim_{n \rightarrow \infty}\bK_{\operatorname{M}}[\bX_n;\mathcal K_d(K)]=  \bK_{\operatorname{M}}[\bX;\mathcal K_d(K)]$.
\end{enumerate}
\end{proof}

\section{Modality and $s$-concave densities}\label{app:s:concave:densities}

As we saw in Section~\ref{subsec:unimodality:conditional:distribution:constant:sum}, neither joint unimodality nor marginal unimodality implies the other .
However, unimodality is preserved under marginalization for some specific class of densities, so-called $s$-concave densities.
In this appendix we briefly introduce the connection between unimodality and $s$-concavity of the conditional distribution given a constant sum.

\begin{definition}[$s$-concavity]\label{def:s:concavity}
For $s \in \IR$, a density $f$ on $\IR^d$ is called $s$-\emph{concave} on a convex set $A \subseteq \IR^d$ if
\begin{align*}
f(\theta\bx + (1-\theta)\by)\geq M_s(f(\bx),f(\by);\theta),\quad \bx,\by \in A,\quad \theta \in (0,1),
\end{align*}
where $M_s$ is called the \emph{generalized mean} defined, by continuity, as
\begin{align*}
M_s(a,b;\theta)=\begin{cases}
\{\theta a^s + (1-\theta)b^s\}^{1/s}, & 0<s<\infty\text{ or } (-\infty < s<0\ \text{and}\ ab\neq 0),\\
0, & -\infty <s <0 \ \text{and}\ ab = 0,\\
a^{\theta}b^{1-\theta}, & s=0,\\
a \wedge b, & s = - \infty,\\
a \vee b, & s = +\infty,\\
\end{cases}
\end{align*}
for $s\in \IR$, $a,\ b \geq 0$ and $\theta \in (0,1)$.
\end{definition}

Definition~\ref{def:s:concavity} of $s$-concavity is based on densities and can be extended to a measure-based definition for distributions that do not admit densities; see \citet{dharmadhikari1988unimodality}.
For $s=-\infty$, $s$-concavity is also known as \emph{quasi-concavity} and $0$-concavity is also known as \emph{log-concavity}.
By definition, for $0<s<\infty$, $f$ is $s$-concave if and only if $f^s$ is a concave function.
As shown in \citet{dharmadhikari1988unimodality}, the function $s \mapsto M_s(a,b;\theta)$ is increasing for fixed $(a,b;\theta)$.
From this we have that $t$-concavity of $f$ implies $s$-concavity for $s<t$.
Examples of $s$-concave densities include the skew-normal distributions \citep{balkema2010asymptotic}, Wishart distributions, Dirichlet distributions with certain range of parameters \citep{dharmadhikari1988unimodality} and uniform distributions on a convex set in $\IR^d$ \citep{norkin1991alpha}.

Convex unimodality (Definition~\ref{def:concepts:unimodality}) is related to $s$-concavity since a density $f$ is convex unimodal if and only if it is $-\infty$-concave \citep{dharmadhikari1988unimodality}.
Therefore, $f$ is convex unimodal if it is $s$-concave for some $s \in \IR$.
Furthermore, it is straightforward to show that $\bX'\ |\ \{S=K\}$ has an $s$-concave density if $\bX$ has.
As shown in \citet{dharmadhikari1988unimodality} and \citet{saumard2014log}, $s$-concavity is preserved under marginalization, convolution and weak-limit for certain ranges of $s \in \IR$.
Therefore, convex unimodality can also be preserved under these operations if $\bX$ has the $s$-concave the density $f_{\bX}$.

\section{\rev{Dependence of $\bX\ |\ \{S=K\}$ and stochastic orders}}\label{app:dependence:stochastic:order}

\rev{In this appendix, we investigate the dependence, especially the total positivity and its related orders of $\bX'\ |\ \{S=K\}$ implied by those of $\bX$.
To this end, we define the following concepts.
}

\begin{definition}[Multivariate total positivity of order $2$]\label{def:mtp2:mrr2:tp2:order}
Suppose random vectors $\bX$ and $\bY$ have densities $f_{\bX}$ and $f_{\bY}$, respectively.
\begin{enumerate}
\item $\bX$ is said to be \emph{multivariate totally positively ordered of order 2 (MTP2)} if
\begin{align*}
f_{\bX}(\bx)f_{\bX}(\by) \leq f_{\bX}(\bx \wedge \by)f_{\bX}(\bx \vee \by),\quad \text{for all }\bx,\ \by \in \IR^d.
\end{align*}
\item $\bX$ is said to be \emph{multivariate reverse rule of order 2 (MRR2)} if
\begin{align*}
f_{\bX}(\bx)f_{\bX}(\by) \geq f_{\bX}(\bx \wedge \by)f_{\bX}(\bx \vee \by),\quad \text{for all }\bx,\ \by \in \IR^d.
\end{align*}
\item $\bY$ is said to be larger than $\bX$ in \emph{$TP2$-order}, denoted as $\bX\leq_{tp}\bY$ if
\begin{align*}
f_{\bX}(\bx)f_{\bY}(\by) \leq f_{\bX}(\bx \wedge \by)f_{\bY}(\bx \vee \by),\quad \text{for all }\bx,\ \by \in \IR^d.
\end{align*}
\end{enumerate}
\end{definition}

For examples and implied dependence properties of MTP2, MRR2 and TP2 ordered distributions, see \citet{karlin1980classes1} and \citet{karlin1980classes2}.
The following proposition states that the MTP2, MRR2 and TP2 order of $\bX'\ |\ \{\bone_d\T \bX=K\}$ and $\bY'\ |\ \{\bone_d\T \bY=K\}$ are inherited from those of $(\bX',\bone_d\T \bX)$ and $(\bY',\bone_d\T \bY)$.

\begin{proposition}[MTP2, MRR2 and TP2 order of $\bX'\ |\ \{S=K\}$]
\label{prop:mtp2:mrr2:tp2:order:conditional:dist:constant:sum}
Suppose $(\bX',S)$ and $(\bY',T)$ with $S=\bone_d\T\bX$ and $T=\bone_d\T\bY$ have densities $f_{(\bX',S)}$ and $f_{(\bY',T)}$, respectively.
\begin{enumerate}
\item If $(\bX',S)$ is MTP2 (MRR2) then $\bX'\ |\ \{S=K\}$ is MTP2 (MRR2).
\item If $(\bX',S)\leq_{tp} (\bY',T)$ then $\bX'\ |\ \{S=K\}\leq_{tp} \bY'\ |\ \{T=K\}$.
\end{enumerate}
\end{proposition}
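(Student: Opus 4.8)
The plan is to exploit the elementary fact that conditioning on $\{S=K\}$ simply restricts the joint density of $(\bX',S)$ to the affine slice $\{s=K\}\subseteq\IR^d$, on which the lattice operations $\wedge$ and $\vee$ act only on the first $d'$ coordinates. Concretely, for arbitrary $\bx',\by'\in\IR^{d'}$ put $\bu=(\bx',K)$ and $\bv=(\by',K)$ in $\IR^d$; then, since $K\wedge K=K\vee K=K$, we have $\bu\wedge\bv=(\bx'\wedge\by',K)$ and $\bu\vee\bv=(\bx'\vee\by',K)$. Everything will follow by feeding this particular pair of points into the defining inequalities of MTP2, MRR2 and the $TP2$-order.

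For Part 1, I would start from the representation $f_{\bX'|\{S=K\}}(\bx')=f_{(\bX',S)}(\bx',K)/f_S(K)$ in~\eqref{eq:conditional:density:X:constant:sum}, which is valid since the conditional density is only considered when $f_S(K)>0$. Multiplying the MTP2 inequality to be proved for $\bX'\ |\ \{S=K\}$ through by $f_S(K)^2>0$, it becomes
\[
f_{(\bX',S)}(\bx',K)\,f_{(\bX',S)}(\by',K)\le f_{(\bX',S)}(\bx'\wedge\by',K)\,f_{(\bX',S)}(\bx'\vee\by',K),
\]
which is exactly the MTP2 inequality for $(\bX',S)$ evaluated at $\bu$ and $\bv$; as $\bx',\by'$ range over $\IR^{d'}$, the pair $(\bu,\bv)$ ranges over all pairs of points of the slice $\{s=K\}$, so the implication holds. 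The MRR2 statement is obtained verbatim with every $\le$ replaced by $\ge$.

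For Part 2, I would substitute $f_{\bX'|\{S=K\}}(\bx')=f_{(\bX',S)}(\bx',K)/f_S(K)$ and $f_{\bY'|\{T=K\}}(\by')=f_{(\bY',T)}(\by',K)/f_T(K)$; the positive normalizing constants $f_S(K)$ and $f_T(K)$ cancel from both sides of the claimed $TP2$-order inequality, reducing it to
\[
f_{(\bX',S)}(\bx',K)\,f_{(\bY',T)}(\by',K)\le f_{(\bX',S)}(\bx'\wedge\by',K)\,f_{(\bY',T)}(\bx'\vee\by',K),
\]
which is precisely the defining inequality of $(\bX',S)\leq_{tp}(\bY',T)$ applied to $\bu=(\bx',K)$ and $\bv=(\by',K)$.

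There is essentially no hard step; the only points worth a line of comment are that the lattice operations on $\IR^d$ restricted to the hyperplane $\{s=K\}$ coincide with the lattice operations on $\IR^{d'}$ in the first $d'$ coordinates (immediate from $K\wedge K=K\vee K=K$), and that $f_S(K)>0$, $f_T(K)>0$ under the standing assumption that the relevant conditional densities exist. In particular the argument uses the hypotheses only for pairs of points lying in the slice $\{s=K\}$, so no regularity of the densities beyond their existence is needed.
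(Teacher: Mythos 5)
Your argument is correct and is essentially identical to the paper's own proof: both rely on the representation $f_{\bX'|\{S=K\}}(\bx')=f_{(\bX',S)}(\bx',K)/f_S(K)$, the observation that $K\wedge K=K\vee K=K$ so the lattice operations act only on the first $d'$ coordinates, and the cancellation of the positive normalizing constants, with MRR2 and the $TP2$-order handled verbatim. No gaps.
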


\begin{proof}
By~\eqref{eq:conditional:density:X:constant:sum} we have, for $\bx',\ \by' \in \IR^{d'}$, that
\begin{align*}
f_{\bX'|\{S=K\}}(\bx')f_{\bX'|\{S=K\}}(\by')&=\frac{f_{(\bX',S)}(\bx',K)f_{(\bX',S\ )}(\by',K)}{f_S^2(K)}\\
&\leq \frac{f_{(\bX',S)}(\bx' \wedge \by',K\wedge K )f_{(\bX',S)}(\bx'\vee \by',K \vee K)}{f_S^2(K)}\\
&=f_{\bX'|\{S=K\}}(\bx' \wedge \by')f_{\bX'|\{S=K\}}(\bx'\vee \by'),
\end{align*}
which proves the first part on MTP2.
The MRR2 and TP2 parts are shown in similar manners.
\end{proof}

The properties of MTP2 (MRR2) and TP2 order have various implications.
For example, when $\bX'\ |\ \{S=K\}$ is MTP2, then $\bX'\ |\ \{S=K\}$ is \emph{positively associated} in the sense that $\Cov[g(X_i),h(X_j)\ |\ \{S=K\}]\geq 0$ for all increasing functions $g:\IR \rightarrow \IR$ and $h:\IR \rightarrow \IR$.
If $\bX'\ |\ \{S=K\}\leq_{tp} \bY'\ |\ \{T=K\}$, then $\bX'\ |\ \{S=K\} \leq_{st}\bY\ |\ \{T=K\}$, that is, $\E[h(\bX')\ |\ \{S=K\}]\leq \E[h(\bY')\ |\ \{T=K\}]$ for all bounded and increasing functions $h:\IR^{d'}\rightarrow \IR$.
See \citet{muller2002comparison} for more implications of the MTP2, MRR2 and TP2 order.


\section{\rev{Fallacies in risk allocations}}\label{app:properties:not:hold:MLA}

\rev{In this appendix we introduce two properties which intuitively hold but in general do not for the Euler and maximum likelihood allocations.
For a $d$-dimensional random vector $\bX$ and a real number $K \in \IR$, an allocation principle $\bK$ maps $(\bX,K)$ to $\bK(\bX;K) \in \IR^d$ such that $\bone_{d}\T \bK(\bX;K)=K$.}

\begin{enumerate}
\item \emph{Invariance under independence}:\vspace{2mm}\\
For two integers $d,\tilde d\geq 2$, consider a $d$-dimensional random vector $\bX$ with $S=\bone_{d}\T\bX$ and a $\tilde d$-dimensional random vector $\tilde \bX$ with $\tilde S=\bone_{\tilde d}\T\tilde \bX$.
For $K,\tilde K >0$, we call a risk allocation $\bK$ \emph{invariant under independence} if
\begin{align*}
\bK((\bX,\tilde \bX); K+\tilde K) =
(\bK(\bX; K),
\bK(\tilde \bX; \tilde K))
\end{align*}
provided that $\bX$ and $\tilde \bX$ are independent of each other.
This property means that risk allocation problems of multiple portfolios independent of each other can be considered separately.
This property does not hold for MLA since the maximizers of the two functions
\begin{align}\label{eq:counterexample:invariance:independence:LHS}
\nonumber f_{(\bX,\tilde \bX)|\{S+\tilde S=K+\tilde K\}}((\bx,\tilde \bx))&=\frac{f_{(\bX,\tilde \bX)}((\bx,\tilde \bx))\bone_{\{\bone_d\T\bx +\bone_{\tilde d}\T\tilde \bx =K+\tilde K\} }}{f_{S+\tilde S}(K+\tilde K)}
 =\frac{f_{\bX}(\bx)f_{\tilde \bX}(\tilde \bx)\bone_{\{\bone_d\T\bx +\bone_{\tilde d}\T\tilde \bx =K+\tilde K\} }}{f_{S+\tilde S}(K+\tilde K)}\\
 &\propto f_{\bX}(\bx) f_{\tilde \bX}(\tilde \bx)\bone_{\{\bigcup_{\{(k,\tilde k)\in\IR^{2},\ k + \tilde k =K + \tilde K\}}\{\bone_d\T\bx= k\}\cap \{\bone_{\tilde d} \tilde \bx= \tilde k\}\}  }
\end{align}
and
\begin{align}\label{eq:counterexample:invariance:independence:RHS}
f_{\bX|\{S=K\}}(\bx)
 f_{\tilde \bX|\{\tilde S=\tilde K\}}(\tilde \bx) \propto
f_{\bX}(\bx)\bone_{\{\bone_d\T\bx= K\}}f_{\tilde \bX}(\tilde \bx)\bone_{\{\bone_{\tilde d} \tilde \bx= \tilde K\}}
\end{align}
are in general different.
For example, let $d=d'$ and $\bX$ and $\tilde \bX$ be two independent and identically distributed standard normal distributions.
Then the maximum of~\eqref{eq:counterexample:invariance:independence:LHS} is attained at $(K+\tilde K) \bone_{2d} /2d$ whereas that of ~\eqref{eq:counterexample:invariance:independence:RHS} is attained at $(K\bone_d/d,\tilde K\bone_d/d)$.
The two vectors are not equal unless $K=\tilde K$.
\rev{In this example, the Euler allocation provides the same allocated capitals as MLA. 
Therefore, neither Euler allocation nor MLA satisfies invariance under independence.}
\item \emph{Additivity under convolution}:\vspace{2mm}\\
Consider two independent $d$-dimensional random vectors $\bX$ and $\tilde \bX$ with $S=\bone_{d}\T\bX$ and $\tilde S=\bone_{d}\T\tilde \bX$.
For $K,\ \tilde K >0$,
we call an allocation $\bK$ \emph{additive under convolution} if
\begin{align*}
\bK(\bX+\tilde \bX; K+\tilde K) =
\bK(\bX; K)+
\bK(\tilde \bX;\tilde K).
\end{align*}
Neither Euler allocation nor MLA satisfies this property.
For example, let $\bX \sim \N_d(\bmu,\Sigma)$ and $\tilde \bX \sim \N_d(\tilde \bmu,\tilde \Sigma)$ be two independent normal random vectors for $\bmu,\tilde \bmu \in \IR^d$ and $\Sigma, \tilde \Sigma \in \mathcal M_{+}^{d\times d}$.
By Proposition~\ref{prop:ellipticality:conditional:distribution:constant:sum}, Equation~\eqref{eq:location:dispersion:parameter:conditional:dist:constant:sum} and Proposition~\ref{prop:unimodality:conditional:distribution:constant:sum} Part 1, we have that
\begin{align*}
\bK_{\operatorname{M}}(\bX; \mathcal K_{d}(K))= \bmu' + \frac{K - \mu_S}{\sigma_S^2}(\Sigma\bone_{d})'\quad\text{and}\quad
\bK_{\operatorname{M}}(\tilde \bX; \mathcal K_{d}(\tilde K))= \tilde \bmu' + \frac{\tilde K - \mu_{\tilde S}}{\sigma_{\tilde S}^2}(\tilde \Sigma\bone_{d})'.
\end{align*}
Similarly, since $\bX+\tilde \bX \sim \N_d(\bmu+\tilde \bmu,\Sigma+\tilde \Sigma)$, we have that $\sigma_{S+\tilde S}^2=\sigma_S^2 + \sigma_{\tilde S}^2$ and that
\begin{align*}
\bK_{\operatorname{M}}(\bX+\tilde \bX; \mathcal K_{d}(K+\tilde K))&= \bmu' + \tilde \bmu'  + \frac{K+\tilde K - (\mu_{S}+\mu_{\tilde S})}{\sigma_S^2+\sigma_{\tilde S}^2}((\Sigma+\tilde \Sigma)\bone_{d})'\\
&= \bmu' + \tilde \bmu'  + \left(
\frac{\sigma_S^2}{\sigma_S^2+\sigma_{\tilde S}^2} \frac{K -\mu_{S}}{\sigma_S^2} +
\frac{\sigma_{\tilde S}^2}{\sigma_S^2+\sigma_{\tilde S}^2}\frac{\tilde K -\mu_{\tilde S}}{\sigma_{\tilde S}^2}\right)
((\Sigma+\tilde \Sigma)\bone_{d})',
\end{align*}
which is not equal to $
\bK_{\operatorname{M}}(\bX; \mathcal K_{d}(K))+
\bK_{\operatorname{M}}(\tilde \bX; \mathcal K_{d}(\tilde K))$ unless, for instance, $\Sigma=\tilde \Sigma$.
\rev{Since Euler and maximum likelihood allocations coincide under ellipticality, the same statement holds for Euler allocations.}
\end{enumerate}

\section{\rev{Further properties of the multimodality adjustment}}\label{app:sec:further:properties:multimodality:adjustment}

In this section we study further properties of the multimodality adjustment $\bK_{\bw,\mathcal X, \Lambda}$ introduced in Section~\ref{subsec:multimodality:adjustment}.
To clarify the relationship between $\bK_{\bw,\mathcal X, \Lambda}$, the total capital $K$ and the loss distribution of $\bX$, define $\bK_{\bw,\mathcal X, \Lambda}[\bX; \mathcal K_d(K)]$ and $\bar \bK_{\bw,\mathcal X}[\bX; \mathcal K_d(K)]$ to be the multimodality-adjusted allocated capitals~\eqref{eq:multimodality:adjusted:risk:allocation} and their first term $\sum_{m=1}^M w_m \bK_m$, respectively, with $\mathcal X$ being the set of local modes $\bK_1,\dots,\bK_M$ of $\bx \mapsto f_{\bX}(\bx)\bone_{\{\bx \in \mathcal K_d(K)\}}$ (assumed to be a finite set) and with $w_m \propto f_{\bX}(\bK_m)$.
To this end, we adopt the following definition of local modes.

\begin{definition}[Local modes]\label{def:local:modes}
For an $\IR_{+}$-valued function $f$ on $\IR^d$, $\bx \in \IR^d$ is called a \emph{local mode} of $f$ if there exists $\epsilon>0$ such that
\begin{align}\label{eq:local:modes:definition}
f(\bx)\geq f(\by)\quad \text{for all}\quad \by \in \mathcal N_{\epsilon}(\bx),
\end{align}
where $\mathcal N_{\epsilon}(\bx)=\{\bz \in \IR^d: ||\bz-\bx|| < \epsilon\}$.
If~\eqref{eq:local:modes:definition} holds for any $\epsilon>0$, then $\bx$ is called a \emph{global mode} of $f$.
\end{definition}

Properties of $\bK_{\bw,\mathcal X, \Lambda}[\bX; \mathcal K_d(K)]$ are then summarized as follows.

\begin{enumerate}
\item \emph{Translation invariance}:
We show that  $\bK_{\bw, \mathcal X,\Lambda}$ is translation invariant in the sense that
\begin{align*}
\bK_{\bw, \mathcal X,\Lambda}[\bX+\bc;\mathcal K_d(K+\bone_d\T\bc)]=\bK_{\bw, \mathcal X,\Lambda}[\bX;\mathcal K_d(K)] + \bc\quad \text{for}\quad \bc \in \IR^d.
\end{align*}
To show this, notice that local modes of $\bx \mapsto f_{\bX+\bc}(\bx)\bone_{\{\bx \in \mathcal K_d(K+\bone_d\T\bc)\}}$ are given by $\bK_m+\bc$, $m=1,\dots,M$, if $\bK_m$, $m=1,\dots,M$, are the local modes of $\bx \mapsto f_{\bX}(\bx)\bone_{\{\bx \in \mathcal K_d(K)\}}$.
Since $w_m = f_{\bX}(\bK_m) = f_{\bX+\bc}(\bK_m + \bc)$, the probability weight assigned to the $m$th scenario does not change from $(\bX,\mathcal K_d(K))$ to $(\bX+\bc,\mathcal K_d(K+\bone_d\T\bc))$ for all $m=1,\dots,M$.
Therefore, $\bar \bK_{\bw,\mathcal X}[\bX+\bc; \mathcal K_d(K+\bone_d\T\bc)]=\bar \bK_{\bw,\mathcal X}[\bX; \mathcal K_d(K)]+\bc$ and thus
\begin{align*}
\bK_{\bw,\mathcal X,\Lambda}[\bX+\bc;\mathcal K_d(K+\bone_d\T\bc)]& =\bar \bK_{\bw,\mathcal X}[\bX+\bc; \mathcal K_d(K+\bone_d\T\bc)] \\
& \hspace{5mm}+ \sum_{m=1}^M w_m \blambda_m \circ (\bK_m+\bc-\bar \bK_{\bw,\mathcal X}[\bX+\bc; \mathcal K_d(K+\bone_d\T\bc)])^{+}\\
&= \bar \bK_{\bw,\mathcal X}[\bX; \mathcal K_d(K)]+\bc + \sum_{m=1}^M w_m \blambda_m \circ (\bK_m-\bar \bK_{\bw,\mathcal X}[\bX; \mathcal K_d(K)])^{+}\\
&= \bK_{\bw,\mathcal X,\Lambda}[\bX;\mathcal K_d(K)]+\bc,
\end{align*}
which shows translation invariance.
\item \emph{Positive homogeneity}:
Multimodality-adjusted allocated capitals are positive homogeneous in the sense that
\begin{align*}
\bK_{\bw, \mathcal X,\Lambda}[c\bX;\mathcal K_d(cK)] =c\bK_{\bw, \mathcal X,\Lambda}[\bX;\mathcal K_d(K)] \quad \text{for}\quad c >0.
\end{align*}
This can be checked similarly as translation invariance.
The local modes of $\bx \mapsto f_{c\bX}(\bx)\bone_{\{\bx \in \mathcal K_d(cK)\}}$ are given by $c\bK_m$, $m=1,\dots,M$, if $\bK_m$, $m=1,\dots,M$, are the local modes of $\bx \mapsto f_{\bX}(\bx)\bone_{\{\bx \in \mathcal K_d(K)\}}$.
Moreover, the probability weight assigned to the $m$th scenario does not change from $(\bX,\mathcal K_d(K))$ to $(c\bX,\mathcal K_d(cK))$ since $w_m = f_{\bX}(\bK_m) = f_{c\bX}(c \bK_m)$ for all $m=1,\dots,M$.
Therefore, $\bar \bK_{\bw,\mathcal X}[c \bX; \mathcal K_d(c K)]=c \bar \bK_{\bw,\mathcal X}[\bX; \mathcal K_d(K)]$ and thus
\begin{align*}
\bK_{\bw,\mathcal X,\Lambda}[c \bX;\mathcal K_d(c K)]& =\bar \bK_{\bw,\mathcal X}[c \bX; \mathcal K_d(cK)] \\
& \hspace{5mm}+ \sum_{m=1}^M w_m \blambda_m \circ (c \bK_m-\bar \bK_{\bw,\mathcal X}[c\bX; \mathcal K_d(c K)])^{+}\\
&= c \bar \bK_{\bw,\mathcal X}[\bX; \mathcal K_d(K)] + c \sum_{m=1}^M w_m \blambda_m \circ (\bK_m-\bar \bK_{\bw,\mathcal X}[\bX; \mathcal K_d(K)])^{+}\\
&= c \bK_{\bw,\mathcal X,\Lambda}[\bX;\mathcal K_d(K)],
\end{align*}
which shows positive homogeneity.
\item \emph{Riskless asset}:
Multimodality-adjusted allocated capitals satisfy the riskless asset property in the following sense.
Suppose that $X_j=c_j \in \IR$ a.s.\ for $j \in I \subseteq \{1,\dots,d\}$ and that $\bX_{-I}=(X_j,j\in \{1,\dots,d\}\backslash I)$ admits a density $f_{\bX_{-I}}$.
As we discussed in Section~\ref{subsubsec:properties:mla}, any realization $\bx$ of $\bX\ |\ \{S=K\}$ satisfies $\bx_{I}=\bc$ where $\bc=(c_j; j \in I)$, and the likelihood of $\bx$ is quantified through the density $f_{\bX_{-I}|\{\bone_{|-I|}\T\bX_{-I}=K-\bone_{|I|}\T\bc\}}(\bx_{-I})$.
Therefore, reasonable choices of the scenarios $\bK_1,\dots,\bK_M \in \mathcal K_d(K)$ are such that
$(\bK_{m})_{I}=\bc$ and $(\bK_{m})_{-I}$ are local modes of $\bX_{-I}\ |\ \{\bone_{|-I|}\T\bX_{-I}=K-\bone_{|I|}\T\bc\}$.
In this case, we have that $(\bar \bK_{\bw,\mathcal X})_{I}=\bc$ and the multimodality adjustment yields
\begin{align*}
\left(\sum_{m=1}^M w_m \blambda_m \circ (\bK_m-\bar \bK_{\bw, \mathcal X} )^{+}\right)_{I}
=\sum_{m=1}^M w_m (\blambda_m)_{I} \circ (\bc - \bc)^{+}=\bzero_{|I|}.
\end{align*}
Therefore, it holds that $\bK_{\bw,\mathcal X,\Lambda}[\bX; \mathcal K_d(K)]_{I}=\bc$ if $\bX_{I}=\bc$ a.s.
\item \emph{Symmetry}:
For a reasonable choice of $\Lambda$, the multimodality-adjusted allocated capitals satisfy the symmetry property, that is, $\bK_{\bw, \mathcal X,\Lambda}[\bX;\mathcal K_d(K)]_i=\bK_{\bw, \mathcal X,\Lambda}[\bX;\mathcal K_d(K)]_j$ for $i,j \in \{ 1,\dots,d\}$, $i\neq j$, if $\smash{\bX\deq \tilde \bX}$ where $\tilde \bX$ is a $d$-dimensional random vector satisfying $\tilde X_j=X_i$, $\tilde X_i=X_j$ and $\tilde X_k=X_k$ for all $k \in \{1,\dots,d\}\backslash\{i,j\}$.
For any $\bx \in \IR^d$, denote by $\tilde \bx$ a $d$-dimensional vector such that $\tilde x_j=x_i$, $\tilde x_i=x_j$ and $\tilde x_k=x_k$ for all $k \in \{1,\dots,d\}\backslash\{i,j\}$.
To show the symmetry, suppose that $\bK \in \mathcal K_d(K)$ is a local mode of $\bx \mapsto f_{\bX}(\bx)\bone_{\{\bx \in \mathcal K_d(K)\}}$.
Then $\tilde \bK$ is also a local mode of $\bx \mapsto f_{\bX}(\bx)\bone_{\{\bx \in \mathcal K_d(K)\}}$ since $f_{\bX}(\by)=f_{\tilde \bX}(\tilde \by)=f_{\bX}(\tilde \by)$ for any $\by \in \IR^d$ by $\bX\deq \tilde \bX$, and thus $\tilde \bK$ satisfies~\eqref{eq:local:modes:definition} for some $\epsilon>0$.
Therefore, any element $\bK_m=(K_{m,1},\dots,K_{m,d})$ in $\mathcal X$ satisfies either (1) $K_{m,i}=K_{m,j}$ or (2) there exists a unique element $\bK_{m'} \in \mathcal X$ such that $K_{m,i}\neq K_{m,j}$, $K_{m',i}\neq K_{m',j}$, $K_{m',i}=K_{m,j}$, $K_{m,i}=K_{m',j}$ and $K_{m,k}=K_{m',k}$ for all $k \in \{1,\dots,d\}\backslash\{i,j\}$.
For such a pair of indices $(m,m')$, it holds that $w_m=w_{m'}$ since $w_m \propto f_{\bX}(\bK_m)$ and $w_{m'} \propto f_{\bX}(\bK_{m'})=f_{\tilde \bX}(\bK_{m})=f_{\bX}(\bK_{m})$ by $\bX\deq \tilde \bX$.
Therefore, if $\blambda_m=\blambda_{m'}$ holds for all pairs of $(m,m')$, then
the symmetry $\bK_{\bw, \mathcal X,\Lambda}[\bX;\mathcal K_d(K)]_i=\bK_{\bw, \mathcal X,\Lambda}[\bX;\mathcal K_d(K)]_j$ holds.

\end{enumerate}

The continuity property $\lim_{n\rightarrow \infty}\bK_{\bw, \mathcal X,\Lambda}[\bX_n;\mathcal K_d(K)]=\bK_{\bw, \mathcal X,\Lambda}[\bX;\mathcal K_d(K)]$ for a given $\bX_n$ and $\bX$ such that $\bX_n$ converges to $\bX$ weakly
may not be straightforward to verify since a limit of multimodal distributions can be unimodal, and more generally, the number of scenarios may change in $n$.


\section{Simulation of $\bX\ |\ \{S=K\}$ with MCMC methods}\label{app:simulation:with:MCMC}

Efficient simulation of the conditional distribution of $\bX$ given a constant sum $\{S=K\}$ for $K\in \IR$ is a challenging task in general.
In Section~\ref{subsec:motivating:example} and Section~\ref{sec:numerical:experiments}, the constraint $\{S=K\}$ was replaced by $\{K-\delta <S<K+\delta\}$ for a small $\delta>0$ so that $\P(K-\delta <S<K+\delta)>0$.
However, this modification distorts the conditional distribution $\bX\ |\ \{S=K\}$ and the resulting estimates of risk allocations suffer from inevitable biases.
To overcome this issue,
we briefly review MCMC methods, specifically the \emph{Metropolis-Hastings (MH)} algorithm, and then demonstrate their efficiency for simulating $\bX\ |\ \{S=K\}$.

\subsection{MCMC methods}\label{app:subsec:simulation:HMC}

As mentioned in Section~\ref{subsec:capital:allocation}, it suffices to simulate $\bX'\ |\ \{S=K\}=(X_1,\dots,X_{d'})\ |\ \{S=K\}$ for $d'=d-1$.
Assume that $\bX'\ |\ \{S=K\}$ admits a density~\eqref{eq:conditional:density:X:constant:sum}.
We call this density the \emph{target density} and denote it as $\pi$.
In the MCMC approach, a Markov chain is constructed such that its stationary distribution is $\pi$.
Constructing such a Markov chain can be achieved by the MH algorithm as we now explain.
From the current state $\bX'_n$, a candidate $\bY'_{n}$ of the next state is simulated from $q(\bX'_n,\cdot)$ where $q(\bx',\by')$, $\bx',\by'\in\IR^{d'}$ is called the \emph{proposal density} satisfying the two conditions that (i) $\bx' \mapsto q(\bx',\by')$ is measurable for all $\by' \in \IR^{d'}$, and (ii) $\by' \mapsto q(\bx',\by')$ is a density function for all $\bx' \in \IR^{d'}$.
The candidate is accepted, that is, $\bX'_{n+1}=\bY'_n$, with probability $\alpha(\bX'_n,\bY'_{n})$ where
\begin{align}\label{eq:acceptance:probability}
\alpha(\bx',\by')=1 \wedge \frac{q(\bx',\by')\pi(\by')}{q(\by',\bx')\pi(\bx')}=1 \wedge \frac{q(\bx',\by')f_{\bX}(\by',K-\bone_{d'}\T\by')}{q(\by',\bx')f_{\bX}(\bx',K-\bone_{d'}\T\bx')},
\end{align}
and otherwise the chain stays at the current state $\bX'_{n+1}=\bX'_n$.
Calculation of the \emph{acceptance probability}~\eqref{eq:acceptance:probability} is often possible since it does not depend on $f_{S}(K)$.
The resulting Markov chain is shown to have $\pi$ as a stationary distribution and thus $\bX_1',\bX_2',\dots$ can be used as samples from $\pi$ in order to estimate Euler and maximum likelihood allocations.

An appropriate choice of $q$ is important since MCMC samples are typically positively correlated due to the acceptance-rejection procedure.
To reduce positive correlation among MCMC samples, $q$ must reflect properties of $\pi$, such as the shape of its support, modality and tail behavior, for maintaining high acceptance probability $\alpha$.
First, the support of $\pi$ must be taken into account since a candidate outside of the support of $\pi$ is immediately rejected.
Second, heavy-tailed target distribution $\pi$ requires specific design of $q$ since most standard MCMC methods such as random walk MH, independent MH, Gibbs samplers and the Hamiltonian Monte Carlo methods cannot guarantee the theoretical convergence when $\pi$ is heavy-tailed.
Finally, multimodality of $\pi$ also requires to be handled specifically since the chain needs to traverse from one mode to another to sample from the entire support of $\pi$.

\subsection{An application of MCMC to core allocation}\label{app:subsec:application:core:allocation}

In this section we compute the Euler allocation and MLA on the restricted set of allocations called the \emph{(atomic) core} defined by
\begin{align*}
\mathcal K_d^\text{C}(K; r)=\{\bx \in \IR^d: \bone_d\T\bx=K,\ \blambda\T\bx \leq r(\blambda),\ \blambda \in \{0,1\}^d\}\subseteq \mathcal K_d(K),
\end{align*}
where $K$ is a given total capital and $r:\{0,1\}^d \rightarrow \IR$ is called a \emph{participation profile function} typically determined as $r(\blambda)=\varrho(\blambda\T\bX)$ for a $d$-dimensional loss random vector $\bX$ and a risk measure $\varrho$.
We call an element of $\mathcal K_d^\text{C}(K; r)$ a \emph{core allocation}.
As explained in \citet{denault2001coherent}, core allocations possess an important property as risk allocations, that is, any subportfolio of $\bX=(\sqc{X}{d})$ of the form $(\lambda_1 X_1,\dots,\lambda_d X_d)$ gains benefit of capital reduction by managing risk as a portfolio $\bX$.
In fact, for a participation profile $\blambda=(\lambda_1,\dots,\lambda_d)$ where $\lambda_j\in \{0,1\}$ represents the presence $(\lambda_j=1)$ or the absence $(\lambda_j=0)$ of the $j$th entity, the total amount of capital required to cover the loss $\blambda\T\bX$ is $\blambda\T\bx$ for an allocation $\bx \in \mathcal K_d(K)$.
The value $r(\lambda)=\varrho(\blambda\T\bX)$ is interpreted as a stand-alone capital that would have been required if the total loss $\blambda\T\bX$ had been managed individually.
 Therefore, under the core allocation $\bx \in \mathcal K_d^\text{C}(K; r)$, the
subportfolio $(\lambda_1 X_1,\dots,\lambda_d X_d)$ gains benefit of capital reduction by $\blambda\T\bx$ in comparison to $r(\blambda)$.

Given $K$, $r$ and the joint loss $\bX$, we are interested in calculating the core-compatible versions of Euler allocation $\E[\bX\ |\  \{\bX \in \mathcal K_d^\text{C}(K; r)\}]$, MLA $\bK_{\operatorname{M}}[\bX;\mathcal K_d^\text{C}(K; r)]$ and local modes of $f_{\bX| \{\bX \in \mathcal K_d^\text{C}(K; r)\}}$ if they exist.
However, generating a large number of samples from $\bX'\ |\ \{\bX \in \mathcal K_d^\text{C}(K; r)\}$ is computationally involved since an unconditional sample $\bX$ is first filtered by the condition $\bX \in \{\bx \in \IR^d: K-\delta<\bone_d\T\bx<K+\delta\}=\mathcal K_d(K,\delta)$ for a small $\delta>0$, and then filtered again by the core condition $\blambda\T\bX \leq r(\blambda)$ for all possible $\blambda\in\{0,1\}^d$.
To overcome this issue, we utilize the \emph{Hamiltonian Monte Carlo (HMC) method with reflection} to directly simulate $\smash{f_{\bX'| \{\bX \in \mathcal K_d^\text{C}(K; r)\}}}$.
Note that the support of $\smash{\bX'\ |\  \{\bX \in \mathcal K_d^\text{C}(K; r)\}}$ is a projection of $\mathcal K_d^\text{C}(K; r)$ onto $\IR^{d'}$, which is an intersection of finite number of hyperplanes $\{\bx' \in \IR^{d'}: \blambda\T(\bx',K-\bone_{d'}\T\bx') \leq r(\blambda)\}$ for $\blambda \in \{0,1\}^d$.
In the HMC method, a candidate is proposed according to the so-called Hamiltonian dynamics, and the chain reflects at the boundaries $\{\bx' \in \IR^{d'}: \blambda\T(\bx',K-\bone_{d'}\T\bx')=r(\blambda)\}$, $\blambda \in \{0,1\}^d$, so that it does not violate the support constraint; see \citet{koike2020markov} for details.

For a numerical experiment, let $\bX\sim t_{\nu}(\bzero_d,P)$ with $d=3$, $\nu=5$ and $P=(\rho_{ij})$ being a correlation matrix with $\rho_{12}=\rho_{23}=1/3$ and $\rho_{13}=2/3$.
For $p=0.99$, we set $r(\blambda)=\VaR{p}{\blambda\T\bX}$ for $\blambda \in \{0,1\}^3$ and $K=r(\bone_3)$.
For $\delta=0.001$, we first generate $N_{\text{MC}}=10^6$ samples from $\bX$ and estimate $K$ and $(r(\blambda),\blambda\in\{0,1\}^3)$ from these samples.
Then we extract samples of $\bX$ falling in the region
\begin{align*}
\mathcal K_d^\text{C}(K, \delta ;r)=\mathcal K_d(K,\delta)\cap\{ \bx \in \IR^d: \blambda\T\bx \leq r(\blambda),\ \blambda \in \{0,1\}^3\backslash \{\bone_3\}\}.
\end{align*}
Figure~\ref{fig:plots:core:mcmc} (a) shows the first two components of the MC samples from $\bX$ and the conditional samples falling in $\mathcal K_d^\text{C}(K, \delta ;r)$.
Among the $N_{\text{MC}}=10^6$ samples, $2000$ samples were contained in $\mathcal K_d(K,\delta)$ and only $189$ samples fell in $\mathcal K_d^\text{C}(K, \delta ;r)$.
Therefore, this crude simulation method is not efficient since 99.98\% of the unconditional samples are discarded.

Instead, we conduct an MCMC simulation to generate $N_{\text{MCMC}}=10^4$  samples directly from $\bX\ |\ \{\bX \in \mathcal K_d^\text{C}(K;r)\}$.
Hyperparameters of the HMC method are estimated based on the 189 MC samples; see \citet{koike2020markov}for details.
The resulting stepsize and integration time are $\eps=0.105$ and $T=24$, respectively.
It took 49.534 seconds to simulate a Markov chain with length $N_{\text{MCMC}}=10^4$ on a MacBook Air with 1.4 GHz Intel Core i5 processor and 4 GB 1600 MHz of DDR3 RAM.
The resulting acceptance rate was 0.866 and serial correlations were below 0.03 at lag $1$.
Based on these inspections, we conclude that the MCMC method performed correctly.
The first 3000 MCMC samples of $\bX'\ |\ \{\bX \in \mathcal K_d^\text{C}(K;r)\}$ are plotted in Figure~\ref{fig:plots:core:mcmc} (b).

\begin{figure}[t]
  \centering
  \vspace{-35mm}
  \includegraphics[width=15 cm]{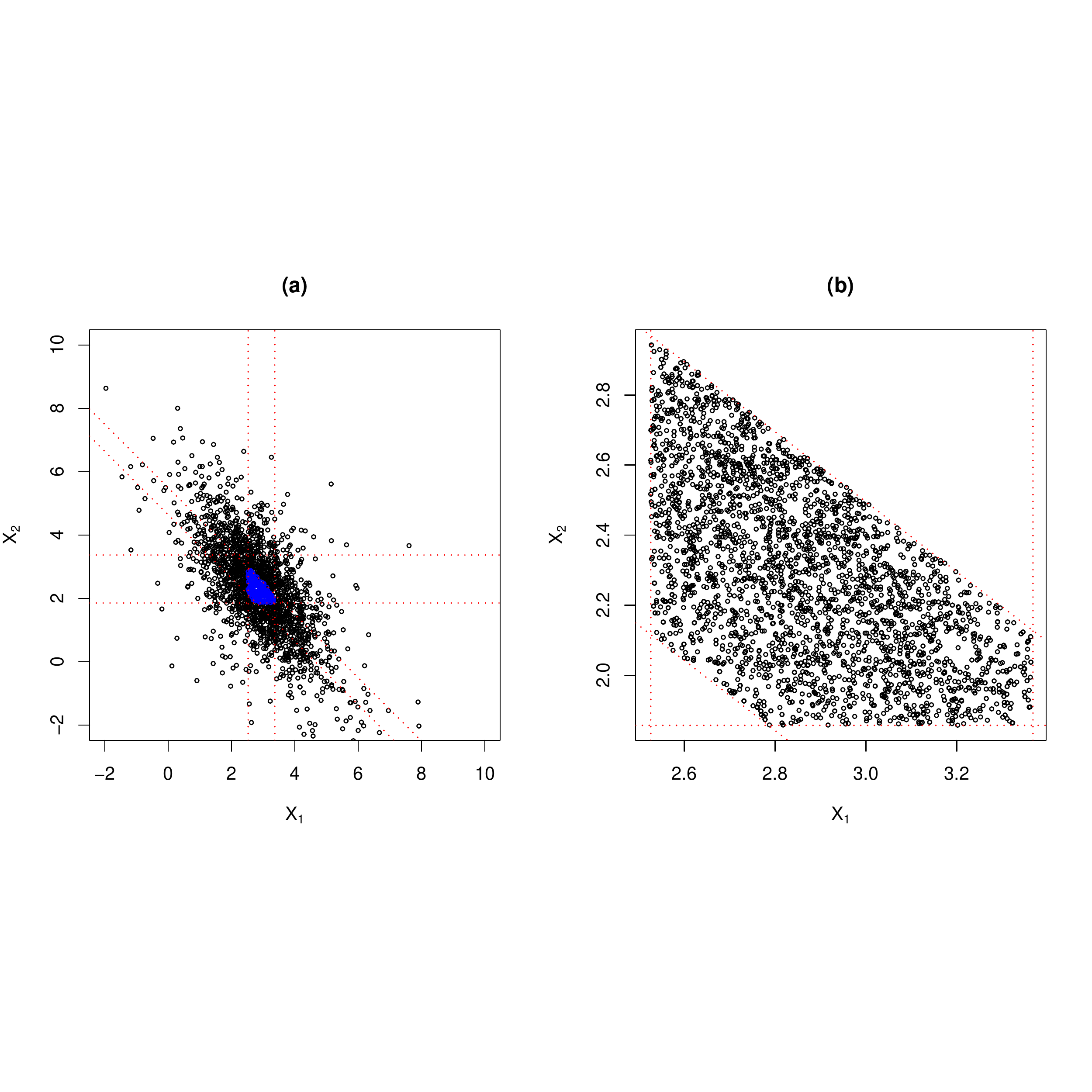}\vspace{-35mm}
  \caption{Scatter plots of (a) MC samples from $\bX'\ | \ \{\bX\in \mathcal K_d(K,\delta)\}$ (black) and $\bX'\ | \ \{\bX\in \mathcal K_d^\text{C}(K,\delta;r)\}$ (blue), and of (b) MCMC samples from $\bX'\ | \ \{\bX\in \mathcal K_d^\text{C}(K;r)\}$ (black) where $\bX \sim t_\nu(\bzero_d,P)$ with $d=3$, $\nu=5$ and $P=(\rho_{i,j})$ being a correlation matrix with $\rho_{1,2}=\rho_{2,3}=1/3$ and $\rho_{1,3}=2/3$,
  $r(\blambda)=\VaR{p}{\blambda\T\bX}$ with $p=0.99$ for $\blambda \in \{0,1\}^3$, $K=r(\bone_3)$ and $\delta=0.001$.
  Red lines indicate $\{\bx' \in \IR^2: \blambda\T(\bx',K-\bone_2\T\bx')=r(\blambda)\}$ for $\blambda \in \{0,1\}^3$.}
  \label{fig:plots:core:mcmc}
\end{figure}

By Proposition~\ref{prop:ellipticality:conditional:distribution:constant:sum}, $\bX'\ |\ \{\bX\in\mathcal K_d(K)\}$ follows a multivariate Student $t$ distribution, and thus the mode of this conditional distribution is uniquely determined by $\bK_{\operatorname{M}}[\bX;\mathcal K_d(K)]=\E[\bX\ |\ \{\bX\in\mathcal K_d(K)\}]$ by Proposition~\ref{prop:unimodality:conditional:distribution:constant:sum} Part 1.
Moreover, when this point is contained in the core $\bK_d^\text{C}(K;r)$, we have that $\bK_{\operatorname{M}}[\bX;\mathcal K_d(K)]=\bK_{\operatorname{M}}[\bX;\mathcal K_d^\text{C}(K;r)]$ since 
$\mathcal K_d^\text{C}(K;r) \subseteq \mathcal K_d(K)$.
We check these observations numerically by calculating the corresponding estimates.

Table~\ref{table:estimators:MC:MCMC} summarizes the MC and MCMC estimates and
standard errors of the Euler and maximum likelihood allocations on
$\mathcal K_d(K)$ and those on the atomic core $\mathcal K_d^\text{C}(K;r)$. 
MC and MCMC estimates are calculated based on the samples in
Figure~\ref{fig:plots:core:mcmc} (a), and 
based on those in Figure~\ref{fig:plots:core:mcmc} (b), respectively.  As expected by theory,
the MC estimates of $\bK_{\operatorname{M}}[\bX;\mathcal K_d(K)]$ and
$\E[\bX\ |\ \{\bX\in\mathcal K_d(K)\}]$ were close to each other.  We can also
observe that the MC and MCMC estimates are close to each other for all the
estimators.  The standard errors of the MCMC estimator of
$\E[\bX\ |\ \{\bX\in\mathcal K_d^\text{C}(K;r)\}]$ were smaller than those of
the MC estimator because of sample efficiency.  Provided that
$\hat{\bK}_{\operatorname{M}}^{\operatorname{MC}}[\bX;\mathcal K_d(K)]$ belongs
to the core $\mathcal K_d^\text{C}(K;r)$, we expect an estimate of
$\bK_{\operatorname{M}}[\bX;\mathcal K_d^\text{C}(K;r)]$ to be close to
$\hat{\bK}_{\operatorname{M}}^{\operatorname{MC}}[\bX;\mathcal K_d(K)]$.
Although this was the case for both of the MC and MCMC estimates of
$\bK_{\operatorname{M}}[\bX;\mathcal K_d^\text{C}(K;r)]$, the MCMC estimate was
slightly closer to
$\hat{\bK}_{\operatorname{M}}^{\operatorname{MC}}[\bX;\mathcal K_d(K)]$ than the
MC estimate.  Consequently, the MCMC estimator of
$\bK_{\operatorname{M}}[\bX;\mathcal K_d^\text{C}(K;r)]$ can be said to be less biased than
the MC estimator.

\begin{table}[t]
\centering
\caption{Monte Carlo (MC) and Markov chain Monte Carlo (MCMC) estimates and
  standard errors of the Euler and maximum likelihood allocations on
  $\mathcal K_d(K)$ and those on the atomic core $\mathcal K_d^\text{C}(K;r)$.
  The MC sample size of $\bX$ is $N_{\operatorname{MC}}=10^6$
  and the MCMC sample size of $\bX\ | \ \{\bX \in \mathcal K_d^\text{C}(K;r)\}$ is
  $N_{\operatorname{MCMC}}=10^4$.}\label{table:estimators:MC:MCMC}\vspace{5mm}
\begin{tabular}{l rrr rrr}
  \toprule
  & \multicolumn{3}{c}{Estimator} & \multicolumn{3}{c}{Standard error}\\
  \cmidrule(lr{0.4em}){2-4}\cmidrule(lr{0.4em}){5-7}
  & \multicolumn{1}{c}{$X_1$} & \multicolumn{1}{c}{$X_2$} & \multicolumn{1}{c}{$X_3$} & \multicolumn{1}{c}{$X_1$} & \multicolumn{1}{c}{$X_2$} & \multicolumn{1}{c}{$X_3$}\\
  \midrule
$\hat{\E}^{\operatorname{MC}}[\bX\ |\ \{\bX\in\mathcal K_d(K)\}]$ & 2.865 & 2.310 & 2.846& 0.026&0.034& 0.026\\
$\hat{\bK}_{\operatorname{M}}^{\operatorname{MC}}[\bX;\mathcal K_d(K)]$ & 2.861&2.366&2.793& \multicolumn{1}{c}{--} & \multicolumn{1}{c}{--} & \multicolumn{1}{c}{--}\\[3mm]
$\hat{\E}^{\operatorname{MC}}[\bX\ |\ \{\bX\in\mathcal K_d^\text{C}(K;r)\}]$&2.852& 2.267 &2.903& 0.016 &0.019 & 0.016\\
$\hat{\bK}_{\operatorname{M}}^{\operatorname{MC}}[\bX;\mathcal K_d^\text{C}(K;r)]$&2.838&2.262&2.920& \multicolumn{1}{c}{--} & \multicolumn{1}{c}{--} & \multicolumn{1}{c}{--}\\[3mm]
$\hat{\E}^{\operatorname{MCMC}}[\bX\ |\ \{\bX\in\mathcal K_d^\text{C}(K;r)\}]$&
2.876 & 2.269 & 2.877&0.002 & 0.003 & 0.002 \\
$\hat{\bK}_{\operatorname{M}}^{\operatorname{MCMC}}[\bX;\mathcal K_d^\text{C}(K;r)]$&2.866&2.283& 2.871& \multicolumn{1}{c}{--} & \multicolumn{1}{c}{--} & \multicolumn{1}{c}{--}\\
  \bottomrule
\end{tabular}
\end{table}

\end{document}